\renewcommand{\qed}{\hfill$\Box$}
\newcommand{\m}[1]{\mathsf{#1}}
\newcommand{\mr}[1]{\mathrel{#1}}
\newcommand{\seq}[2][n]{{#2_1},\dots,{#2_{#1}}}
\newcommand{\sig}[2][n]{{#2_1}\times\cdots\times{#2_{#1}}}
\newcommand{\h}[1][.3]{\hspace{#1mm}}
\newcommand{\SET}[1]{\{\h#1\h\}}
\newcommand{\inter}[1]{[\![{#1}]\!]}
\newcommand{\xB}{\mathcal{B}}
\newcommand{\xE}{\mathcal{E}}
\newcommand{\xF}{\mathcal{F}}
\newcommand{\xI}{\mathcal{I}}
\newcommand{\xJ}{\mathcal{J}}
\newcommand{\xM}{\mathcal{M}}
\newcommand{\xR}{\mathcal{R}}
\newcommand{\xS}{\mathcal{S}}
\newcommand{\xT}{\mathcal{T}}
\newcommand{\xV}{\mathcal{V}}
\newcommand{\xFTe}{\xF_{\m{te}}}
\newcommand{\xFTh}{\xF_{\m{th}}}
\newcommand{\xSTe}{\xS_{\m{te}}}
\newcommand{\xSTh}{\xS_{\m{th}}}
\newcommand{\Val}{\xV\m{al}}
\newcommand{\Var}{\xV\m{ar}}
\newcommand{\FVar}{\xF\xV\m{ar}}
\newcommand{\BVar}{\xB\xV\m{ar}}
\newcommand{\LVar}{\mathcal{L}\Var}
\newcommand{\ExVar}{\mathcal{E}\m{x}\Var}
\newcommand{\Dom}{\mathcal{D}\m{om}}
\newcommand{\VDom}{\mathcal{V}\Dom}
\newcommand{\Pos}{\mathcal{P}\m{os}}
\newcommand{\FPos}{\Pos_\xF}
\newcommand{\sort}[1]{\m{#1}}
\newcommand{\Bool}{\sort{Bool}}
\newcommand{\CO}[1]{[\h#1\h]} 
\newcommand{\ECO}[2]{\exists #1.\ #2}
\newcommand{\CTerm}[4]{%
\ifthenelse{\equal{#1}{} \and \equal{#2}{} \and \equal{#3}{} \and \equal{#4}{}}%
{\mathrm{\Pi} X.\ s~\CO{\ECO{\vec{x}}{\varphi}}}%
{%
\ifthenelse{\equal{#1}{}}{}{\mathrm{\Pi} #1.\ }%
#2%
\ifthenelse{\equal{#4}{}}%
{}%
{\ifthenelse{\equal{#3}{}}{~\CO{#4}}{~\CO{\ECO{#3}{#4}}}}%
}%
}
\newcommand{\CEqn}[4]{%
\ifthenelse{\equal{#1}{}}{%
\ifthenelse{\equal{#4}{}}{#2 \approx #3}{#2 \approx #3~\CO{#4}}}{%
\ifthenelse{\equal{#4}{}}{\mathrm{\Pi} #1.\, #2 \approx #3}{\mathrm{\Pi} #1.\, #2 \approx #3~\CO{#4}}%
}}
\newcommand{\CRu}[4]{%
\ifthenelse{\equal{#1}{} \and \equal{#2}{} \and \equal{#3}{} \and \equal{#4}{}}%
{\mathrm{\Pi} X.\ \ell \R r~\CO{\varphi}}%
{%
\ifthenelse{\equal{#1}{}}{}{\mathrm{\Pi} #1.\ }%
#2 \R #3%
\ifthenelse{\equal{#4}{}}%
{}%
{~\CO{#4}%
}%
}%
}
\newcommand{\R}{\rightarrow}
\newcommand{\Rs}{\stackrel{\smash{\raisebox{-.5mm}{\tiny $\sim$~}}}{\R}}
\renewcommand{\L}{\leftarrow}
\newcommand{\Lb}[1][]{\mr{\vphantom{\R}_{#1}{\L}}}
\newcommand{\Rbase}[1][]{\R_{\mathsf{base}}}
\newcommand{\Ca}[1][]{\xleftrightarrow{#1}}
\newcommand{\Cb}[1][]{\Ca[]_{#1}}
\newcommand{\Cru}[1][\xE]{\ifthenelse{\equal{#1}{}}{\Cb[\m{rule}]}{\Cb[\m{rule},#1]}}
\newcommand{\Cbase}[1][\xE]{\ifthenelse{\equal{#1}{}}{\Cb[\m{base}]}{\Cb[\m{base},#1]}}
\renewcommand{\geq}{\geqslant}
\renewcommand{\leq}{\leqslant}
\renewcommand{\ge}{\geqslant}
\renewcommand{\le}{\leqslant}
\newcommand\subsetsim{\mathrel{\substack{
  \textstyle\subset\\[-0.2ex]\textstyle\sim}}}
\newcommand\supsetsim{\mathrel{\substack{
  \textstyle\supset\\[-0.2ex]\textstyle\sim}}}
\newcommand{\PG}{\mathrm{PG}} 
\newcommand{\lvf}{\mathrm{lvf}}
\newcommand{\ext}{\mathrm{ext}}
\newcommand{\rmv}{\mathrm{rmv}}
\newcommand{\Bfnum}[1]{(#1)}
\newcommand{\pvec}[1]{\vec{#1}\mkern2mu\vphantom{#1}}
\begin{document}
\title{%
Characterizing Equivalence of Logically Constrained Terms via Existentially Constrained Terms%
\thanks{This work was partially supported by 
JSPS KAKENHI Grant Numbers JP24K14817 and JP24K02900, 
and
FWF (Austrian Science Fund) project I~5943-N.}
}
\titlerunning{Characterizing Equivalence of Logically Constrained Terms}
%
\author{%
Kanta Takahata\inst{1}
\and 
Jonas Sch\"opf\inst{2}%
\Envelope\orcidID{0000-0001-5908-8519}
\and
Naoki Nishida\inst{3}%
\Envelope\orcidID{0000-0001-8697-4970}
\and
Takahito Aoto\inst{1}%
\Envelope\orcidID{0000-0003-0027-0759}
}
\authorrunning{K. Takahata, J. Sch\"opf, N. Nishida, and T. Aoto}
%
\institute{%
Niigata University, Japan\\
\email{aoto@ie.niigata-u.ac.jp}
\and
University of Innsbruck, Austria\\
\email{jonas.schoepf@uibk.ac.at}
\and
Nagoya University, Japan\\
\email{nishida@i.nagoya-u.ac.jp}
}
\maketitle              
\begin{abstract}
Logically constrained term rewriting is a rewriting 
framework that supports built-in data structures such as 
integers and bit vectors. Recently, constrained terms 
play a key role in various analyses and applications of 
logically constrained term rewriting. A fundamental question 
on constrained terms arising there is how to characterize equivalence 
between them. However, in the current literature only limited 
progress has been made on this.
In this paper, we provide several sound and complete solutions to tackle this 
problem. Our key idea is the introduction of a novel concept, 
namely existentially constrained terms, into which the original form 
of constrained terms can be embedded. We present several 
syntactic characterizations of equivalence between 
existentially constrained terms. In particular, we provide 
two different kinds of complete characterizations: one is designed 
to facilitate equivalence checking, while the other is 
intended for theoretical analysis.
\keywords{%
Logically Constrained Term Rewrite System
\and
Constrained Term
\and
Equivalence
\and
Logical Constraint
}
\end{abstract}
\section{Introduction}

The basic formalism of term rewriting is a purely syntactic computational model; 
due to its simplicity, it is one of the most extensively studied computational models.
One of the main issues of term rewriting and its real-world applications
is that the basic formalism lacks painless treatment of
built-in data structures, such as integers, bit vectors, etc.
Logically constrained term rewriting~\cite{KN13frocos} is a relatively new extension 
of term rewriting that intends to overcome such weaknesses of the basic formalism,
while keeping succinctness for theoretical analysis as a computational model.
Rewrite rules, that are used to model computations, in logically
constrained term rewrite systems (LCTRSs) are equipped with constraints over
some arbitrary theory, e.g., linear integer arithmetic.
Built-in data structures are represented via the satisfiability of
constraints within a respective theory. 
Recent progress on the LCTRS formalism was for example made in
confluence analysis~\cite{SM23,SMM24}, (non-)termination analysis~\cite{K16,NW18}, completion~\cite{WM18},
rewriting induction~\cite{KN14aplas,FKN17tocl}, algebraic semantics~\cite{ANS24}, and complexity analysis~\cite{WM21}.

Recently, constrained terms play a key role in various analyses and applications of 
logically constrained term rewriting. 
Here, a \emph{constrained term} consists of a term and a constraint,
which restricts the possibilities in which the term is instantiated.
For example, $\m{f}(x)~\CO{x > 2}$ is a constrained term (in LCTRS notation)
which can be intuitively considered as 
a set of terms $\SET{ \m{f}(x) \mid x > 2 }$.
Two constrained terms are said to be \emph{equivalent}
if they denote the same sets of terms.
A fundamental question on constrained terms arising there 
is how to characterize equivalence between them. 
However, in the current literature,
only limited progress has been made on this.

In this paper, we provide several sound and complete solutions to tackle this 
problem. Our key idea is the introduction of a novel concept, 
namely existentially constrained terms, into which the original form 
of constrained terms can be embedded. 
The idea of existentially constrained terms is very simple---just distinguish
variables that appear solely in the constraint but not in the term itself by using
existential quantifiers. 
Nevertheless, the introduction of existential quantifiers takes us a step further 
in achieving a \emph{syntactic analysis of the equivalence of constrained terms}.

During the analysis of LCTRSs, rewriting constrained terms, called
\emph{constrained rewriting}, is frequently used. It is the key part for many
different analysis techniques, e.g., for finding specific joining sequences in
confluence analysis or is part of several inference rules in rewriting
induction. Unfortunately, it is infeasible for LCTRS tools to fully support
constrained rewriting due to the heavy non-determinism mainly caused by the
equivalence transformations. To overcome this problem, \emph{most general
constrained rewriting} has been proposed and commutation of the constrained
rewriting and the equivalence transformation has been shown for left-linear
LCTRSs~\cite{TSNA25PPDP}. The characterizations on equivalence of existentially
constrained terms in this paper provide the theoretical foundation for this work.

The remainder of the paper is organized as follows.
After explaining some basics of logically constrained term rewriting
in \Cref{sec:preliminaries},
we introduce existentially constrained terms and their equivalence
in \Cref{ssec:existentially cterms and equivalence}.
Then, in \Cref{sec:Equivalence of Variable Renamed Constrained Terms},
we consider equivalence of existentially constrained terms consisting of renamed variants;
we additionally give a counterexample to a partial characterization known in
the literature and its correction.
In \Cref{ssec:pattern general constrained terms},
we introduce 
\emph{pattern-general} existentially constrained terms
and
show that any 
existentially constrained term can be transformed into
an equivalent pattern-general one.
Based on that, we give a sound and complete characterization
of equivalent pattern-general existentially constrained terms.
Finally, in \Cref{sec:general-characterization-of-equivalence}
we extend our characterization to the equivalence of 
existentially constrained terms in general.
Due to the page limit, omitted (or detailed) proofs are presented in  \Cref{sec:missing-proofs}.

\section{Preliminaries}
\label{sec:preliminaries}

In this section, we briefly recall the basic notions of LCTRSs~\cite{KN13frocos,SM23,SMM24,ANS24}
and fix additional notations used throughout this paper.
Familiarity with the basic notions of term rewriting is assumed (e.g.\ see~\cite{BN98,O02}).

An LCTRS is based on a sorted signature, thus our signature consists of
a set $\xS$ of sorts and a set $\xF$ of function symbols, where each 
$f \in \xF$ is attached with its sort declaration as $\mathrm{sort}(f) = \sig{\tau} \to \tau_0$;
we usually write $f\colon \sig{\tau} \to \tau_0$ for simplicity.
We assume that these sets can be partitioned into
two disjoint sets, i.e., $\xS = \xSTh \uplus \xSTe$ and $\xF = \xFTh \uplus \xFTe$,
where each $f\colon \sig{\tau} \to \tau_0 \in \xFTh$ satisfies 
$\tau_i \in \xSTh$ for all $0 \leqslant i \leqslant n$. 
Elements of $\xSTh$ ($\xFTh$) and $\xSTe$ ($\xFTe$)
are called theory sorts (symbols) and term sorts (symbols).
Each variable and term is equipped with a sort,
where 
the sort of $f(t_1,\ldots,t_n)$ with $\mathrm{sort}(f) = \sig{\tau} \to \tau_0$ is $\tau_0$ and
a term $t$ of sort $\tau \in \xS$ is denoted by $t^\tau$.
The sets of $\xS$-sorted variables and terms are denoted by $\xV$ and
$\xT(\xF,\xV)$. For any $T \subseteq \xT(\xF,\xV)$,
the set $T^\tau$ consists of terms in $T$ whose sort is $\tau$.
We assume a special sort $\Bool \in \xSTh$, and 
call the terms in $\xT(\xFTh,\xV)^\Bool$ \emph{logical constraints}.
Note that every variable in a logical constraint has a theory sort.
We denote the set of variables appearing in terms $\seq{t}$ by $\xV(\seq{t})$.
Sometimes sequences of variables and terms are written as $\vec{x}$ and $\vec{t}$.
The set of variables occurring in $\vec{x}$ is denoted by $\SET{\vec{x}}$.
The set of sequences of elements of a set $T$ by $T^*$,
such that $\vec{x} \in \xV^*$.

The set of positions in a term $t$ is denoted by $\Pos(t)$.
The symbol and subterm occurring at a position $p \in \Pos(t)$ is denoted
by $t(p)$ and $t|_p$, respectively. For $U \subseteq \xF \cup \xV$,
we write $\Pos_U(t) = \SET{p \in \Pos(t) \mid t(p) \in U}$
for positions with symbols in $U$.
A term obtained from $t$ by replacing subterms at parallel positions $\seq{p}$ 
by the terms $\seq{t}$, having the same sort as $t|_{p_1},\ldots,t|_{p_n}$,
is written as $t[\seq{t}]_{\seq{p}}$ or just $t[\seq{t}]$ when no confusions arises.
Sometimes we consider an expression obtained by
replacing those subterms $t|_{p_1},\ldots,t|_{p_n}$ in $t$ by holes of the same sorts,
which is called a multihole context and denoted by $t[~]_{\seq{p}}$.

A sort-preserving function $\sigma$ from $\xV$ to $\xT(\xF,\xV)$ is called
a substitution, where it is identified with its homomorphic extension 
$\sigma\colon \xT(\xF,\xV) \to \xT(\xF,\xV)$.
For a set of terms $T$, we write $\sigma(T) = \SET{\sigma(t) \mid t \in T}$;
the domain of $\sigma$ is denoted by $\Dom(\sigma)$: $\Dom(\sigma)=\SET{x \in \xV \mid \sigma(x) \neq x}$.
A substitution $\sigma$ is written as $\sigma\colon U \to T$ if 
$U \supseteq \Dom(\sigma)$ and $\sigma(U) \subseteq T$.
For a set $U \subseteq \xV$, a substitution $\sigma|_U$ is given
by $\sigma|_U(x) = \sigma(x)$ if $x \in U$ and $\sigma|_U(x) = x$ otherwise.
For substitutions $\sigma_1,\sigma_2$ such that
$\Dom(\sigma_1) \cap \Dom(\sigma_2) = \varnothing$,
the substitution $\sigma_1 \cup \sigma_2$
is given by $(\sigma_1 \cup \sigma_2)(x) = \sigma_i(x)$ if $x \in \Dom(\sigma_i)$
and $(\sigma_1 \cup \sigma_2)(x) = x$ otherwise.
A substitution $\sigma\colon \SET{\seq{x}} \to \SET{\seq{t}}$
such that $\sigma(x_i)= t_i$ is denoted by $\SET{x_1 \mapsto t_1,\ldots,x_n \mapsto t_n}$;
for brevity sometimes we write just $\left\{\vec{x} \mapsto \vec{t}\; \right\}$.
A bijective substitution $\sigma\colon \xV \to \xV$ is called a renaming,
and its inverse is denoted by $\sigma^{-1}$.

A model over a sorted signature $\langle \xSTh,\xFTh \rangle$
consists of the two interpretations $\xI$ for sorts
and $\xJ$ for function symbols.
$\xI$ assigns a non-empty set $\xI(\tau)$ to $\tau \in \xSTh$,
and $\xJ$ assigns a function $\xJ(f)\colon \xI(\tau_1) \times \cdots \times \xI(\tau_n) \to \xI(\tau_0)$
to $f\colon\sig{\tau} \to \tau_0 \in \xFTh$.
We assume a fixed model $\xM = \langle \xI, \xJ \rangle$ 
over the sorted signature $\langle \xSTh,\xFTh \rangle$
such that any element $a \in \xI(\tau)$ appears as a constant $a^\tau$ in 
$\xFTh$. These constants are called \emph{values} and the set of all values is denoted by $\Val$.
For a term $t$, we define $\Val(t) = \SET{ t(p) \mid p \in \Pos_{\Val}(t)}$
and for a substitution $\gamma$, we define $\VDom(\gamma) = \SET{x \in \xV \mid \gamma(x) \in \Val}$.
Throughout this paper we assume
the standard interpretation for the sort $\Bool \in \xSTh$,
namely $\xI(\Bool) = \mathbb{B} = \SET{\mathsf{true}, \mathsf{false}}$,
and the existence of necessary standard theory function symbols 
such as $\neg$, ${\land}$, ${\Rightarrow}$, ${=^\tau}$, etc. 
These are required to express constraints together with their
default sorts and interpretations.

A valuation $\rho$ on the model $\xM = \langle \xI, \xJ \rangle$
is a mapping that assigns any $x^\tau \in \xV$ with $\tau \in \xSTh$ to $\rho(x) \in \xI(\tau)$.
The interpretation of a term $t^{\tau} \in \xT(\xFTh, \xV)$ with $\tau \in \xSTh$ in the model $\xM$ over the valuation $\rho$
is denoted by $\inter{t}_{\xM,\rho}$.
For a valuation $\rho$ over the model $\xM$, we write $\vDash_{\xM,\rho} \varphi$ if $\inter{\varphi}_{\xM,\rho} = \mathsf{true}$, 
and $\vDash_{\xM} \varphi$ if 
$\vDash_{\xM,\rho} \varphi$ for all valuations $\rho$.
For $X \subseteq \xV$, a substitution $\gamma$ is said to be $X$-valued
if $\gamma(X) \subseteq \Val$.
We write $\gamma \vDash_\xM \varphi$ (and say $\gamma$ respects $\varphi$)
if the substitution $\gamma$ is $\Var(\varphi)$-valued 
and $\vDash_{\xM} \varphi\gamma$.
If no confusion arises then we drop the subscript $\xM$ in these notations.

\section{Existentially Constrained Terms}
\label{ssec:existentially cterms and equivalence}

In this section, we introduce our novel notion of existentially constrained terms,
and give the definition of their equivalence.
We start by giving the definition of existential constraints and then proceed
to existentially constrained terms in order to show their specific properties.

\begin{definition}[Existential Constraints]
\label{def:existantial constraints}
An \emph{existential constraint} is a pair $\langle \vec{x},
\varphi \rangle$ of a sequence of variables $\vec{x}$ and a logical constraint
$\varphi$, written as $\ECO{\vec{x}}{\varphi}$, 
such that $\SET{\vec{x}} \subseteq \Var(\varphi)$.
We define the
sets of \emph{free} and \emph{bound} variables of $\ECO{\vec{x}}{\varphi}$
as follows:
\Bfnum{1}
$\FVar(\ECO{\vec{x}}{\varphi}) = \Var(\varphi) \setminus \SET{\vec{x}}$, and
\Bfnum{2}
$\BVar(\ECO{\vec{x}}{\varphi}) = \SET{\vec{x}}$.
\end{definition}
We may abbreviate 
$\ECO{\vec{x}}{\varphi}$
to $\varphi$ if $\vec{x}$ is the empty sequence.

\begin{example}
\label{ex:existential constraints}
The following are existential constraints:
$\ECO{x}{(x \ge 3) \land (y = x * z)}$,
$\ECO{x,y,z}{(y = x * z)}$, $(y = x * z)$, and $\m{true}$.
However, $(\ECO{x}{y = 4 * z}) \land (\ECO{z}{y = 3 * z})$ is
not an existential constraint because it does not adhere to the form $\langle \vec{x}, \varphi \rangle$.
This is also not the case for $\ECO{y}{x \ge 3}$, because $\SET{y} \not\subseteq \SET{x} =  \Var(x \ge 3)$.
We have
$\FVar(\ECO{x}{(x \ge 3) \land (y = x * z)}) = \SET{y,z}$ and $\BVar(\ECO{x}{(x
\ge 3) \land (y = x * z)}) = \SET{x}$.
\end{example}

Remark here that, despite its name and form, $\ECO{\vec{x}}{\varphi}$ is not
defined as a usual existentially quantified formula, so that it
is not identified modulo renaming of the bound variables $\vec{x}$, ensuring
that $\BVar$ is well-defined. In fact, ``existential quantification'' is only
considered when the constraint is interpreted in the model 
(see \Cref{def:validity-satisfiability-existential-constraints}).

\begin{restatable}{lemma}{LemmaIIIii}
\label{lem:property existential constraints}
Let $s$ be a term and $\ECO{\vec{x}}{\varphi}$ an existential constraint.
Then, the following statements are equivalent:
\Bfnum{1}
    $\FVar(\ECO{\vec{x}}{\varphi}) \subseteq \Var(s)$ and $\BVar(\ECO{\vec{x}}{\varphi}) \cap \Var(s) = \varnothing$
\Bfnum{2}
    $\BVar(\ECO{\vec{x}}{\varphi}) = \Var(\varphi) \setminus \Var(s)$
\Bfnum{3}
    $\FVar(\ECO{\vec{x}}{\varphi}) = \Var(\varphi) \cap \Var(s)$
\end{restatable}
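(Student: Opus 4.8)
The plan is to prove the three-way equivalence by establishing a cycle of implications $\mathrm{(1)} \Rightarrow \mathrm{(2)} \Rightarrow \mathrm{(3)} \Rightarrow \mathrm{(1)}$, exploiting the basic set-theoretic identities together with the defining constraint of an existential constraint, namely $\SET{\vec{x}} \subseteq \Var(\varphi)$, which by \Cref{def:existantial constraints} gives us $\BVar(\ECO{\vec{x}}{\varphi}) = \SET{\vec{x}}$ and $\FVar(\ECO{\vec{x}}{\varphi}) = \Var(\varphi) \setminus \SET{\vec{x}}$, and in particular $\BVar \uplus \FVar = \Var(\varphi)$ as a disjoint union. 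Throughout I write $B = \BVar(\ECO{\vec{x}}{\varphi})$, $F = \FVar(\ECO{\vec{x}}{\varphi})$, $V = \Var(\varphi)$, and $W = \Var(s)$, so that the standing assumption is $B \uplus F = V$.

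For $\mathrm{(1)} \Rightarrow \mathrm{(2)}$: assume $F \subseteq W$ and $B \cap W = \varnothing$. I want $B = V \setminus W$. For the inclusion $B \subseteq V \setminus W$, note $B \subseteq V$ always, and $B \cap W = \varnothing$ gives $B \subseteq V \setminus W$. Conversely, if $v \in V \setminus W$, then since $V = B \uplus F$ either $v \in B$ (done) or $v \in F \subseteq W$, contradicting $v \notin W$; hence $v \in B$. For $\mathrm{(2)} \Rightarrow \mathrm{(3)}$: assume $B = V \setminus W$. Then $F = V \setminus B = V \setminus (V \setminus W) = V \cap W$, using that $B \subseteq V$. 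For $\mathrm{(3)} \Rightarrow \mathrm{(1)}$: assume $F = V \cap W$. Then $F = V \cap W \subseteq W$, giving the first conjunct. For the second, $B = V \setminus F = V \setminus (V \cap W) = V \setminus W$, so $B \cap W = (V \setminus W) \cap W = \varnothing$.

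The argument is almost entirely routine Boolean-algebra manipulation on finite sets; the only genuine ingredient is the partition $\Var(\varphi) = \BVar \uplus \FVar$ coming from the definition of existential constraints, and the fact that the term $s$ plays no role beyond contributing the set $\Var(s)$. I do not anticipate a real obstacle here — the main thing to be careful about is that the complements are always taken inside $V = \Var(\varphi)$ (so that identities like $V \setminus (V \setminus W) = V \cap W$ are valid), rather than inside some larger variable universe; stating this explicitly at the start of the proof avoids any ambiguity. An alternative, slightly slicker route would be to observe that each of (1), (2), (3) is visibly equivalent to the single conjunction $F \subseteq W \wedge B \cap W = \varnothing$ and then just verify (2) and (3) unfold to this, but the cyclic-implication presentation is cleaner to write out.
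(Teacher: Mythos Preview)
Your proof is correct and takes essentially the same approach as the paper: both rely on the partition $\Var(\varphi) = \BVar \uplus \FVar$ and routine set algebra. The only cosmetic difference is that you close a cycle $(1)\Rightarrow(2)\Rightarrow(3)\Rightarrow(1)$, whereas the paper proves $(1)\Rightarrow(2)$, $(2)\Leftrightarrow(3)$, and $(3)\Rightarrow(1)$; the underlying manipulations are the same.
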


\begin{proof}[Sketch]
Show \Bfnum{1} $\implies$ \Bfnum{2}, \Bfnum{2} $\iff$ \Bfnum{3}, and
\Bfnum{3} $\implies$ \Bfnum{1}.
\qed
\end{proof}

As the name suggests, the existential quantifier in existential constraints
is interpreted on the model as in the predicate logic. Since constants in 
our model $\xM$ appear as constants (values) in our signature, we define validity and satisfiability as follows.

\begin{definition}[Validity and Satisfiablity of Existential Constraints]
\label{def:validity-satisfiability-existential-constraints}
Let $\ECO{\vec{x}}{\varphi}$ be an existential constraint,
and $\rho$ a valuation.
Then, we write $\vDash_{\xM,\rho} \ECO{\vec{x}}{\varphi}$ if 
there exists $\vec{v} \in \Val^*$ such that $\vDash_{\xM,\rho} \varphi\kappa$,
where $\kappa = \SET{\vec{x} \mapsto \vec{v}}$.
An existential constraint $\ECO{\vec{x}}{\varphi}$ is said to be \emph{valid}, 
written as $\vDash_{\xM} \ECO{\vec{x}}{\varphi}$, if 
$\vDash_{\xM,\rho} \ECO{\vec{x}}{\varphi}$ for any valuation $\rho$.
An existential constraint $\ECO{\vec{x}}{\varphi}$ is said to be \emph{satisfiable} if 
$\vDash_{\xM,\rho} \ECO{\vec{x}}{\varphi}$ for some valuation $\rho$.
For any substitution $\sigma$,
we write $\sigma \vDash_\xM \ECO{\vec{x}}{\varphi}$ 
(and say $\sigma$ respects $\ECO{\vec{x}}{\varphi}$) 
if $\sigma(\FVar(\ECO{\vec{x}}{\varphi})) \subseteq \Val$
and $\vDash_\xM (\ECO{\vec{x}}{\varphi})\sigma$.
Here, $(\ECO{\vec{x}}{\varphi})\sigma$ denotes the application of 
a substitution $\sigma$: 
$(\ECO{\vec{x}}{\varphi})\sigma 
:= \ECO{\vec{x}}{(\varphi\sigma|_{\FVar(\ECO{\vec{x}}{\varphi})})}$.

\end{definition}

\begin{example}[Cont'd from \Cref{ex:existential constraints}]
\label{ex:constraint respect substitution}
Let $\mathbb{Z}$ be the standard integer model
and $\rho$ be a valuation such that $\rho(y) = 6$ and $\rho(z) = 2$.
Then $\vDash_{\mathbb{Z},\rho} \ECO{x}{(x \ge 3) \land (y = x \times z)}$,
because by taking $\kappa = \SET{x \mapsto 3}$
we obtain $\vDash_{\mathbb{Z},\rho} (3 \ge 3) \land (y = 3 \times z)$.
Thus, $\ECO{x}{(x \ge 3) \land (y = x \times z)}$ is satisfiable.
Note that it is not valid---as witness consider a valuation $\rho'$ 
with $\rho'(y) = 5$ and $\rho'(z) = 2$.
For other examples $\ECO{x}{(x \ge 3) \land (y \le 3) \land (x < y)}$ is not satisfiable,
while $\ECO{x}{(x \ge 3) \land (y \le x)}$ is valid.
Let $\sigma = \SET{y \mapsto 3+3, z \mapsto 2}$ be a substitution.
Then $\sigma \not\vDash_\mathbb{Z} \ECO{x}{(x \ge 3) \land (y = x \times z)}$,
as $\sigma(y) \notin \Val$.
On the other hand,
we have $\sigma' \vDash_\mathbb{Z} \ECO{x}{(x \ge 3) \land (y = x \times z)}$
for a substitution $\sigma' = \SET{x \mapsto 1, y \mapsto 6, z \mapsto 2}$
as $\SET{y,z} \subseteq \VDom(\sigma')$ and 
the application of $\sigma'$ gives $(\ECO{x}{(x \ge 3) \land (6 = x \times 2)})$.
\end{example}

Now, we use existential constraints to define existentially constrained terms.

\begin{definition}[Existentially Constrained Term] 
\label{def:constrained-term}
An \emph{existentially constrained term} is a 
triple $\langle X, s, \ECO{\vec{x}}{\varphi} \rangle$, written as $\CTerm{X}{s}{\vec{x}}{\varphi}$, 
of a set $X$ of variables,
a term $s$, and an existentially constraint $\ECO{\vec{x}}{\varphi}$,
such that
\Bfnum{1}
    $\FVar(\ECO{\vec{x}}{\varphi}) \subseteq X \subseteq \Var(s)$,
        and
\Bfnum{2}
    $\BVar(\ECO{\vec{x}}{\varphi}) \cap \Var(s) = \varnothing$.
Variables in $X$ are called \emph{logical variables} (of $\CTerm{X}{s}{\vec{x}}{\varphi}$).
\end{definition}

\begin{example} 
\label{ex:existential-Constrained-Term}
Let $\mathsf{f},\mathsf{g} \in \xFTe$.
From this we construct the three existentially constrained terms
$\CTerm{\SET{x}}{\mathsf{g}(x)}{y}{x = 3 \times y}$, 
$\CTerm{\SET{x}}{\mathsf{f}(x+2,\mathsf{g}(y))}{z}{x = 2 \times z}$,
and 
$\CTerm{\varnothing}{\mathsf{g}(x)}{}{\m{true}}$.
The last one is abbreviated as $\CTerm{\varnothing}{\mathsf{g}(x)}{}{}$.
None of the following expressions are existentially constrained terms:
$\CTerm{\SET{y}}{\mathsf{g}(x)}{z}{z \ge y}$
($y$ does not appear in $\mathsf{g}(x)$),
$\CTerm{\SET{y}}{\mathsf{g}(x)}{x}{x \ge y}$
(the bound variable $x$ appears in $\mathsf{g}(x)$), and
$\CTerm{\SET{x}}{\mathsf{f}(x,y)}{z}{z \ge y}$
($y$ is not 
a member of the set $\SET{x}$).
\end{example}

Note that the conditions~\Bfnum{1},\,\Bfnum{2} of existentially constrained terms
equals the condition~\Bfnum{1} of \Cref{lem:property existential constraints}
including the condition on $X$ (i.e., $\FVar(\ECO{\vec{x}}{\varphi}) \subseteq X \subseteq \Var(s)$).
Thus, \Cref{lem:property existential constraints}
states useful equivalences w.r.t.\ the variable conditions of existentially constrained terms 
as follows:

\begin{restatable}{lemma}{LemmaIIIiv}
Let $X$ be a set of variables, $s$ a term, and
$\ECO{\vec{x}}{\varphi}$ an existential constraint. Then, the
following statements are equivalent:
\begin{enumerate}
\renewcommand{\labelenumi}{\Bfnum{\arabic{enumi}}}
    \item 
    $\CTerm{X}{s}{\vec{x}}{\varphi}$ is an existentially constrained term,
    \item
    $\FVar(\ECO{\vec{x}}{\varphi}) \subseteq X \subseteq \Var(s)$ and $\BVar(\ECO{\vec{x}}{\varphi}) = \Var(\varphi) \setminus \Var(s)$, and
    \item 
    $\FVar(\ECO{\vec{x}}{\varphi}) \subseteq X \subseteq \Var(s)$ and $\FVar(\ECO{\vec{x}}{\varphi}) = \Var(\varphi) \cap \Var(s)$.
\end{enumerate}
\end{restatable}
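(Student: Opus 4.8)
The plan is to reduce this three-way equivalence directly to \Cref{lem:property existential constraints}. Observe that \Cref{lem:property existential constraints} already establishes the equivalence of three conditions on a term $s$ and an existential constraint $\ECO{\vec{x}}{\varphi}$: namely \Bfnum{1} ``$\FVar(\ECO{\vec{x}}{\varphi}) \subseteq \Var(s)$ and $\BVar(\ECO{\vec{x}}{\varphi}) \cap \Var(s) = \varnothing$'', \Bfnum{2} ``$\BVar(\ECO{\vec{x}}{\varphi}) = \Var(\varphi) \setminus \Var(s)$'', and \Bfnum{3} ``$\FVar(\ECO{\vec{x}}{\varphi}) = \Var(\varphi) \cap \Var(s)$''. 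The present statement simply adjoins the side condition ``$\FVar(\ECO{\vec{x}}{\varphi}) \subseteq X \subseteq \Var(s)$'' to each of those three, so the whole thing should follow by conjoining that common clause on both sides of each implication chain.

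Concretely, first I would unfold \Cref{def:constrained-term}: by definition, $\CTerm{X}{s}{\vec{x}}{\varphi}$ is an existentially constrained term exactly when (a) $\FVar(\ECO{\vec{x}}{\varphi}) \subseteq X \subseteq \Var(s)$ and (b) $\BVar(\ECO{\vec{x}}{\varphi}) \cap \Var(s) = \varnothing$. Clause (b) together with the inclusion $\FVar(\ECO{\vec{x}}{\varphi}) \subseteq \Var(s)$ — which is implied by clause (a) since $X \subseteq \Var(s)$ — is precisely statement \Bfnum{1} of \Cref{lem:property existential constraints}. Conversely, condition \Bfnum{1} of the lemma plus ``$\FVar(\ECO{\vec{x}}{\varphi}) \subseteq X \subseteq \Var(s)$'' gives back clauses (a) and (b) of \Cref{def:constrained-term}. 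Hence statement \Bfnum{1} here is equivalent to the conjunction of ``$\FVar(\ECO{\vec{x}}{\varphi}) \subseteq X \subseteq \Var(s)$'' with statement \Bfnum{1} of \Cref{lem:property existential constraints}. Since statements \Bfnum{2} and \Bfnum{3} here are, verbatim, the conjunction of that same clause with statements \Bfnum{2} and \Bfnum{3} of the lemma, the three-way equivalence follows immediately from the three-way equivalence in \Cref{lem:property existential constraints}: under the shared hypothesis $\FVar(\ECO{\vec{x}}{\varphi}) \subseteq X \subseteq \Var(s)$, the three trailing clauses are interchangeable.

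There is essentially no obstacle here; the only point requiring a moment's care is the remark that ``$\FVar(\ECO{\vec{x}}{\varphi}) \subseteq \Var(s)$'' is redundant once ``$\FVar(\ECO{\vec{x}}{\varphi}) \subseteq X \subseteq \Var(s)$'' is assumed, which is what lets us match clause (a) of \Cref{def:constrained-term} against statement \Bfnum{1} of the lemma without losing or gaining content. One should also note that the paper's own remark preceding the statement already makes this reduction explicit, so the proof can legitimately be as short as ``immediate from \Cref{lem:property existential constraints} and \Cref{def:constrained-term}, conjoining the common clause $\FVar(\ECO{\vec{x}}{\varphi}) \subseteq X \subseteq \Var(s)$.'' I would nonetheless spell out the one-line verification that \Bfnum{1} unfolds to the claimed conjunction, for the reader's benefit.

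\begin{proof}
By \Cref{def:constrained-term}, statement \Bfnum{1} holds iff $\FVar(\ECO{\vec{x}}{\varphi}) \subseteq X \subseteq \Var(s)$ and $\BVar(\ECO{\vec{x}}{\varphi}) \cap \Var(s) = \varnothing$. Since $X \subseteq \Var(s)$ entails $\FVar(\ECO{\vec{x}}{\varphi}) \subseteq \Var(s)$, this is equivalent to the conjunction of $\FVar(\ECO{\vec{x}}{\varphi}) \subseteq X \subseteq \Var(s)$ with condition \Bfnum{1} of \Cref{lem:property existential constraints}. Statements \Bfnum{2} and \Bfnum{3} are, by inspection, the conjunction of the same clause $\FVar(\ECO{\vec{x}}{\varphi}) \subseteq X \subseteq \Var(s)$ with conditions \Bfnum{2} and \Bfnum{3} of \Cref{lem:property existential constraints}, respectively. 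As \Cref{lem:property existential constraints} asserts that its conditions \Bfnum{1}, \Bfnum{2}, \Bfnum{3} are pairwise equivalent, conjoining each with the common clause yields the claimed equivalence of \Bfnum{1}, \Bfnum{2}, \Bfnum{3}.
\qed
\end{proof}
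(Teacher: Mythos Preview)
Your proposal is correct and follows exactly the paper's approach: both reduce directly to \Cref{lem:property existential constraints} by observing that each of the three statements here is the conjunction of the common clause $\FVar(\ECO{\vec{x}}{\varphi}) \subseteq X \subseteq \Var(s)$ with the corresponding condition of that lemma. Your write-up is simply a more explicit version of the paper's two-line appeal to \Cref{lem:property existential constraints}.
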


We give now the notion of equivalence of constrained terms together 
with the notion of subsumption.

\begin{definition}[Subsumption and Equivalence]
\label{def:subsumption and equivalence}
An existentially constrained term $\CTerm{X}{s}{\vec{x}}{\varphi}$ is said to be
\emph{subsumed by} an existentially constrained term
$\CTerm{Y}{t}{\vec{y}}{\psi}$, denoted by $\CTerm{X}{s}{\vec{x}}{\varphi}
\subsetsim \CTerm{Y}{t}{\vec{y}}{\psi}$, if for all $X$-valued substitutions
$\sigma$ with 
$\sigma \vDash_\xM \ECO{\vec{x}}{\varphi}$ 
there exists a
$Y$-valued substitution $\gamma$ with 
$\gamma \vDash_\xM \ECO{\vec{y}}{\psi}$
such that $s\sigma = t\gamma$.
Two existentially constrained terms
$\CTerm{X}{s}{\vec{x}}{\varphi}$
and $\CTerm{Y}{t}{\vec{y}}{\psi}$ are said to be \emph{equivalent},
denoted by $\CTerm{X}{s}{\vec{x}}{\varphi} \sim \CTerm{Y}{t}{\vec{y}}{\psi}$, 
if $\CTerm{X}{s}{\vec{x}}{\varphi} \subsetsim \CTerm{Y}{t}{\vec{y}}{\psi}$
and 
$\CTerm{X}{s}{\vec{x}}{\varphi} \supsetsim \CTerm{Y}{t}{\vec{y}}{\psi}$.
\end{definition}

\begin{example} 
\label{ex:subsumption and equivalence}
Consider $\mathsf{f} \in \xFTe$ for a signature of an LCTRS over the theory of integers.
We have
$\CTerm{\varnothing}{\mathsf{f}(1,1)}{}{\m{true}}
\sim
\CTerm{\SET{x,y}}{\mathsf{f}(x,y)}{}{(x = y) \land (x = 1)}
\sim
\CTerm{\SET{y}}{\mathsf{f}(1,y)}{}{y = 1}
\sim \CTerm{\SET{x}}{\mathsf{f}(x,x)}{}{x = 1}$.
Moreover, the following subsumptions hold:
$\CTerm{\SET{x}}{\mathsf{f}(x,y)}{}{x = 1}
\subsetsim \CTerm{\SET{x}}{\mathsf{f}(x,y)}{}{x \ge 1}$,
$\CTerm{\SET{x}}{\mathsf{f}(x,x)}{}{}
\subsetsim \CTerm{\SET{x,y}}{\allowbreak\mathsf{f}(x,y)}{}{}$, and
$\CTerm{\SET{x,y}}{\mathsf{f}(x,y)}{}{}
\subsetsim \CTerm{\SET{x}}{\mathsf{f}(x,y)}{}{}$.
However, equivalence does not hold for any these.
\end{example}

We define satisfiability of existentially constrained terms.

\begin{definition}[Satifiability of Existentially Constrained Terms]
An existentially constrained term $\CTerm{X}{s}{\vec{x}}{\varphi}$ is said to be \emph{satisfiable} if 
$\ECO{\vec{x}}{\varphi}$ is satisfiable.
\end{definition}

Below, we often focus on characterizations of equivalence 
for \emph{satisfiable} existentially constrained terms.
However, since all unsatisfiable existentially constrained terms are equivalent,
this is no restriction.

The following lemma on satisfiable existentially constrained terms turns out 
to be very useful in later proofs.

\begin{restatable}{lemma}{LemmaBasicPropertiesOfXvaluedSubstitution}
\label{lem:basic-properties-of-X-valued-substitutions}
Let $\CTerm{X}{s}{\vec{x}}{\varphi}$ be a satisfiable existentially constrained term.
Then, there exists an $X$-valued substitution $\gamma$ such that 
$\gamma \vDash_\xM \ECO{\vec{x}}{\varphi}$,  $\Dom(\gamma) = X$,
and $\Pos(s) = \Pos(s\gamma)$.
\end{restatable}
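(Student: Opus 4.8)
The plan is to build the desired substitution in two stages: first produce an $X$-valued substitution that respects the existential constraint, and then massage it so that its domain is exactly $X$ and it does not alter the position structure of $s$. Since $\CTerm{X}{s}{\vec{x}}{\varphi}$ is satisfiable, by \Cref{def:validity-satisfiability-existential-constraints} there is a valuation $\rho$ with $\vDash_{\xM,\rho}\ECO{\vec{x}}{\varphi}$, hence values $\vec{v}$ with $\vDash_{\xM,\rho}\varphi\kappa$ where $\kappa=\SET{\vec{x}\mapsto\vec{v}}$. Define $\gamma_0$ on $\FVar(\ECO{\vec{x}}{\varphi})$ by $\gamma_0(y)=\rho(y)^{\tau}$ (the value constant naming $\rho(y)$), which is legitimate because every free variable of the constraint has a theory sort and every element of $\xI(\tau)$ names a value in our signature. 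Then $\gamma_0$ is $\FVar(\ECO{\vec{x}}{\varphi})$-valued and one checks $\vDash_{\xM}(\ECO{\vec{x}}{\varphi})\gamma_0$ using $\kappa$ as the witness, so $\gamma_0\vDash_\xM\ECO{\vec{x}}{\varphi}$.

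Next I would extend $\gamma_0$ to all of $X$. The variables in $X\setminus\FVar(\ECO{\vec{x}}{\varphi})$ do not occur in $\varphi$, so their values are irrelevant to respecting the constraint; but they must still be mapped to values to make the substitution $X$-valued. If such a variable $x$ has a theory sort, set $\gamma(x)$ to be any value of that sort (nonempty by assumption on $\xI$). The subtle point is a variable $x\in X$ with a \emph{term} sort: it cannot be sent to a value. I expect this to be the main obstacle, and the way around it is to observe that condition~\Bfnum{1} of \Cref{def:constrained-term} gives $\FVar(\ECO{\vec{x}}{\varphi})\subseteq X$, and since the constraint's variables all have theory sorts, term-sorted variables in $X$ lie outside $\FVar(\ECO{\vec{x}}{\varphi})$. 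For these we cannot pick a value, so we must instead either (i) argue that the intended reading of ``$X$-valued'' together with the well-formedness of $\CTerm{X}{s}{\vec{x}}{\varphi}$ forces every variable in $X$ to have a theory sort — which is indeed the convention in the LCTRS literature, since $X$ plays the role of the logical variables — or (ii) absorb such variables into $\Dom(\gamma)$ by mapping them to themselves. I would go with the former: logical variables are theory-sorted by the standing conventions, so every $x\in X$ admits a value and we may set $\gamma(x)=\gamma_0(x)$ if $x\in\FVar(\ECO{\vec{x}}{\varphi})$ and $\gamma(x)=c_\tau$ for a fixed value $c_\tau$ of sort $\tau=\mathrm{sort}(x)$ otherwise, and restrict so that $\Dom(\gamma)=X$ (dropping any $x$ accidentally fixed).

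It remains to verify $\gamma\vDash_\xM\ECO{\vec{x}}{\varphi}$ and $\Pos(s)=\Pos(s\gamma)$. The first holds because $\gamma$ agrees with $\gamma_0$ on $\FVar(\ECO{\vec{x}}{\varphi})$ and $(\ECO{\vec{x}}{\varphi})\sigma$ only looks at $\sigma|_{\FVar(\ECO{\vec{x}}{\varphi})}$ by the definition of constraint application, so $(\ECO{\vec{x}}{\varphi})\gamma=(\ECO{\vec{x}}{\varphi})\gamma_0$, which is valid. For the position equality: $\Dom(\gamma)=X\subseteq\Var(s)$ and every $x\in X$ is mapped to a value, i.e.\ a constant, so replacing each occurrence of a variable of $s$ that lies in $X$ by a constant neither removes nor creates any position — formally, $\Pos(s\gamma)=\Pos(s)$ by a straightforward induction on $s$, using that $\gamma(x)$ is either $x$ (for $x\notin X$, i.e.\ $x\notin\Dom(\gamma)$) or a constant (for $x\in X$). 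That induction is the only ``calculation'' and it is entirely routine, so the heart of the argument is really the sort bookkeeping in the middle paragraph.
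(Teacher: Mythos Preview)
Your argument is correct and rests on the same core idea as the paper---use a satisfying valuation $\rho$ to build $\gamma$---but you take an unnecessarily roundabout path. The paper simply sets $\gamma := \rho|_X$: since a valuation already assigns a value to \emph{every} theory-sorted variable, there is no need to first define $\gamma_0$ on $\FVar(\ECO{\vec{x}}{\varphi})$ and then separately extend to $X\setminus\FVar(\ECO{\vec{x}}{\varphi})$ by picking arbitrary values $c_\tau$. The restriction $\rho|_X$ is immediately $X$-valued, has domain exactly $X$ (values are constants, never variables, so no ``accidentally fixed'' case arises and your parenthetical about dropping such $x$ is vacuous), respects the constraint because $\FVar(\ECO{\vec{x}}{\varphi})\subseteq X$, and preserves $\Pos(s)$ because it maps each variable either to itself or to a constant.

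Your identification of the sort issue is legitimate---the paper's proof does silently rely on every variable in $X$ having a theory sort, which is the intended convention for logical variables---and your resolution (i) is the right one. However, your alternative (ii), mapping term-sorted variables in $X$ to themselves, would not work: such a variable would then lie outside $\Dom(\gamma)$ (since $\Dom(\gamma)=\{x\mid\gamma(x)\neq x\}$), breaking $\Dom(\gamma)=X$, and it would not be sent to a value, breaking $X$-valuedness. So (ii) should be dropped rather than offered as a fallback.
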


\begin{proof}
By assumption, 
the constraint $\ECO{\vec{x}}{\varphi}$ is satisfiable.
Thus, there exists a valuation $\rho$ such that 
$\vDash_{\xM,\rho} \ECO{\vec{x}}{\varphi}$.
Take a substitution $\gamma := \rho|_{X}$.
\qed
\end{proof}
 
Before concluding this section,
we relate the notion of existentially constrained terms
to the respective original notion introduced in~\cite{KN13frocos}. 
A term $\varphi$ is a logical constraint if $\varphi \in \xT(\xFTh,\xV)$ 
with a boolean sort. A logical constraint $\varphi$ is respected by a
substitution $\sigma$, written also as $\sigma \vDash \varphi$, 
if $\sigma(\Var(\varphi)) \subseteq \Val$ and $\vDash_\xM \varphi\sigma$.
A constrained
term is a tuple of a term $s$ and a logical constraint $\varphi$ written as
$s~\CO{\varphi}$ (which we call \emph{non-existentially constrained terms} below).
Two non-existentially constrained terms $s~\CO{\varphi}$, $t~\CO{\psi}$ are
equivalent, written as $s~\CO{\varphi} \sim t~\CO{\psi}$, if for all substitutions $\sigma$ with $\sigma \vDash_\xM \varphi$ there
exists a substitution $\gamma$ with $\gamma \vDash_\xM \psi$ such that $s\sigma =
t\gamma$, and vice versa.
Note that we abuse $\sim$ for both existentially constrained terms and non-existentially ones.
The following correspondence follows immediately by definition.

\begin{restatable}{lemma}{LemCorrespondence}
Let $s~\CO{\varphi}$ and $t~\CO{\psi}$ be non-existentially constrained terms,
and
$\CTerm{X}{s}{\vec{x}}{\varphi}$
and $\CTerm{Y}{t}{\vec{y}}{\psi}$ be existentially constrained terms such that
$X = \Var(s) \cap \Var(\varphi)$,
$\SET{\vec{x}} = \Var(\varphi) \setminus\Var(s)$,
$Y = \Var(t) \cap \Var(\psi)$, and
$\SET{\vec{y}} = \Var(\psi) \setminus\Var(t)$.
Then, $s~\CO{\varphi}$ and $t~\CO{\psi}$ 
are equivalent if and only if
so are the existentially constrained terms $\CTerm{X}{s}{\vec{x}}{\varphi}$
and $\CTerm{Y}{t}{\vec{y}}{\psi}$.
\end{restatable}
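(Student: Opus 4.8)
The plan is to show that the two equivalence notions coincide by proving that subsumption of the existentially constrained terms matches subsumption of the non-existentially constrained ones, and then take the conjunction of both directions. So first I would unfold the definitions. For non-existentially constrained terms, $s~\CO{\varphi} \subsetsim t~\CO{\psi}$ means: for every substitution $\sigma$ with $\sigma \vDash_\xM \varphi$ there is $\gamma$ with $\gamma \vDash_\xM \psi$ and $s\sigma = t\gamma$. For the existential versions, $\CTerm{X}{s}{\vec{x}}{\varphi} \subsetsim \CTerm{Y}{t}{\vec{y}}{\psi}$ means: for every $X$-valued $\sigma$ with $\sigma \vDash_\xM \ECO{\vec{x}}{\varphi}$ there is a $Y$-valued $\gamma$ with $\gamma \vDash_\xM \ECO{\vec{y}}{\psi}$ and $s\sigma = t\gamma$. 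The crux is to relate the two notions of ``respecting'' a constraint.

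The key technical step is the observation that, under the hypotheses $X = \Var(s)\cap\Var(\varphi)$ and $\SET{\vec{x}} = \Var(\varphi)\setminus\Var(s)$ (so that $\FVar(\ECO{\vec{x}}{\varphi}) = X$ and $\BVar(\ECO{\vec{x}}{\varphi}) = \SET{\vec{x}}$), the two conditions ``$\sigma \vDash_\xM \varphi$'' and ``$\sigma$ is $X$-valued and $\sigma \vDash_\xM \ECO{\vec{x}}{\varphi}$'' are essentially interchangeable \emph{once we restrict attention to the part of $\sigma$ that matters for $s$}. Concretely: given $\sigma \vDash_\xM \varphi$ in the plain sense, $\sigma$ is $\Var(\varphi)$-valued, hence in particular $X$-valued, and $\vDash_\xM \varphi\sigma$; since $\sigma$ assigns values to all of $\SET{\vec{x}} \subseteq \Var(\varphi)$, these values witness $\vDash_\xM \ECO{\vec{x}}{(\varphi\sigma|_X)}$, i.e.\ $\sigma \vDash_\xM \ECO{\vec{x}}{\varphi}$. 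Conversely, given an $X$-valued $\sigma$ with $\sigma \vDash_\xM \ECO{\vec{x}}{\varphi}$, there exist values $\vec{v}$ with $\vDash_\xM (\varphi\sigma|_X)\SET{\vec{x}\mapsto\vec{v}}$; then $\sigma' := \sigma|_X \cup \SET{\vec{x}\mapsto\vec{v}}$ is a genuine $\Var(\varphi)$-valued substitution with $\sigma' \vDash_\xM \varphi$, and crucially $s\sigma' = s\sigma$ because $\Var(s)\cap\SET{\vec{x}} = \varnothing$ and $\sigma$, $\sigma'$ agree on $X \supseteq \Var(s)\cap\Var(\varphi)$ — on the variables of $s$ outside $\Var(\varphi)$ neither $\sigma$ in the $X$-valued sense nor $\sigma'$ need do anything nontrivial, but we can further extend $\sigma'$ to agree with $\sigma$ there without affecting $\sigma' \vDash_\xM\varphi$. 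This is the step I expect to be the main obstacle: carefully matching up the domains so that the term images $s\sigma$ literally coincide, rather than merely being equal up to renaming of irrelevant variables.

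With that correspondence in hand, the rest is bookkeeping. For the forward direction of the iff, assume $s~\CO{\varphi} \subsetsim t~\CO{\psi}$ and take an $X$-valued $\sigma$ respecting $\ECO{\vec{x}}{\varphi}$; convert it to a $\Var(\varphi)$-valued $\sigma'$ respecting $\varphi$ with $s\sigma' = s\sigma$; apply the hypothesis to get $\gamma'$ respecting $\psi$ with $s\sigma' = t\gamma'$; then restrict $\gamma'$ to obtain a $Y$-valued $\gamma := \gamma'|_Y$ respecting $\ECO{\vec{y}}{\psi}$ with $t\gamma = t\gamma' = s\sigma$ (again using $\Var(t)\cap\SET{\vec{y}} = \varnothing$). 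The converse direction is symmetric, going from a plain $\sigma \vDash_\xM\varphi$ to its $X$-valued restriction $\sigma|_X$, applying the existential subsumption, and re-extending the resulting $Y$-valued $\gamma$ by witness values on $\vec{y}$ to a plain $\gamma' \vDash_\xM \psi$. Finally, since $\sim$ is by definition the conjunction of $\subsetsim$ and $\supsetsim$ in both settings, and the roles of $(s,\varphi,X,\vec{x})$ and $(t,\psi,Y,\vec{y})$ are symmetric, combining the two subsumption equivalences yields the claimed equivalence of $\sim$. I would note that the ``follows immediately by definition'' remark in the excerpt is justified precisely because each of these conversions only shuffles variables that do not occur in the term component.
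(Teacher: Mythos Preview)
Your overall approach is the same as the paper's, and your ``key technical step'' paragraph correctly identifies the core conversion between the two notions of respecting a constraint. However, the restriction steps in your bookkeeping paragraph are both unnecessary and incorrect.

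In the forward direction you set $\gamma := \gamma'|_Y$ and claim $t\gamma = t\gamma'$, citing $\Var(t) \cap \SET{\vec{y}} = \varnothing$. But the variables on which $\gamma$ and $\gamma'$ may differ are those in $\Var(t) \setminus Y = \Var(t) \setminus \Var(\psi)$, which has nothing to do with $\SET{\vec{y}} = \Var(\psi) \setminus \Var(t)$; the substitution $\gamma'$ obtained from the non-existential subsumption hypothesis may act nontrivially on variables in $\Var(t) \setminus \Var(\psi)$, so $t\gamma = t\gamma'$ fails in general. The same problem arises with the restriction $\sigma|_X$ in the converse direction: restricting to $X$ may change $s\sigma$ on variables in $\Var(s) \setminus \Var(\varphi)$, so after applying the existential subsumption you end up with $t\gamma' = s(\sigma|_X)$ rather than $t\gamma' = s\sigma$. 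The fix---and what the paper does---is simply not to restrict at all: since $Y \subseteq \Var(\psi)$ and $\gamma'(\Var(\psi)) \subseteq \Val$, the substitution $\gamma'$ itself is already $Y$-valued and respects $\ECO{\vec{y}}{\psi}$; likewise $\sigma$ with $\sigma \vDash_\xM \varphi$ is already $X$-valued and respects $\ECO{\vec{x}}{\varphi}$. Dropping the restrictions makes the argument both shorter and correct.
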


It is easy to verify the equivalence of non-existentially constrained
terms via equivalence of existentially constrained terms. However, in 
the next section we will see that it is essential to characterize the 
equivalence of existential constraints.

\section{Equivalence of Variable Renamed Existentially Constrained Terms}
\label{ssec:equivalence characterization of ext cterms}
\label{sec:Equivalence of Variable Renamed Constrained Terms}

We start with a sufficient condition for equivalence of 
existentially constrained terms consisting of renamed variants.

\begin{restatable}{lemma}{LemmaIIIxxi}
\label{lem:equivalence by renaming}
Let $\CTerm{X}{s}{\vec{x}}{\varphi}$, $\CTerm{Y}{t}{\vec{y}}{\psi}$ be
existentially constrained terms.
Suppose that there exists a renaming $\delta$
such that $s\delta = t$, $\delta(X) = Y$,
and $\vDash_\xM (\ECO{\vec{x}}{\varphi})\delta \Leftrightarrow (\ECO{\vec{y}}{\psi})$.
Then, $\CTerm{X}{s}{\vec{x}}{\varphi} \sim \CTerm{Y}{t}{\vec{y}}{\psi}$.
\end{restatable}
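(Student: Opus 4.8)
The plan is to unfold \Cref{def:subsumption and equivalence} and show both subsumptions directly, using the renaming $\delta$ and its inverse $\delta^{-1}$ to transport witnessing substitutions between the two sides. For the direction $\CTerm{X}{s}{\vec{x}}{\varphi} \subsetsim \CTerm{Y}{t}{\vec{y}}{\psi}$, I would take an arbitrary $X$-valued substitution $\sigma$ with $\sigma \vDash_\xM \ECO{\vec{x}}{\varphi}$ and set $\gamma := \delta^{-1}\sigma$ (composition applied to $Y$), after restricting suitably to $Y$. Then $t\gamma = s\delta\delta^{-1}\sigma = s\sigma$ gives the term equality, $\delta(X) = Y$ gives that $\gamma$ is $Y$-valued, and the logical equivalence $\vDash_\xM (\ECO{\vec{x}}{\varphi})\delta \Leftrightarrow \ECO{\vec{y}}{\psi}$ — together with $\sigma \vDash_\xM \ECO{\vec{x}}{\varphi}$ — must be massaged into $\gamma \vDash_\xM \ECO{\vec{y}}{\psi}$. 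The reverse subsumption is symmetric, swapping the roles of $\delta$ and $\delta^{-1}$ and using that $\vDash_\xM \ECO{\vec{x}}{\varphi} \Leftrightarrow (\ECO{\vec{y}}{\psi})\delta^{-1}$ follows from the hypothesis.

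The main technical care is in the bookkeeping around application of substitutions to existential constraints, since $(\ECO{\vec{x}}{\varphi})\sigma$ is defined (in \Cref{def:validity-satisfiability-existential-constraints}) to only act on free variables, and $\FVar(\ECO{\vec{x}}{\varphi}) = \Var(\varphi) \cap \Var(s)$ while $\BVar(\ECO{\vec{x}}{\varphi}) = \Var(\varphi) \setminus \Var(s)$ (the variable conditions). So I would first record that, since $\sigma$ is $X$-valued and $X \supseteq \FVar(\ECO{\vec{x}}{\varphi})$, applying $\sigma$ to $\ECO{\vec{x}}{\varphi}$ substitutes values for the free variables and leaves $\vec{x}$ untouched; and that renaming $\delta$ sends $\ECO{\vec{x}}{\varphi}$ to an existential constraint whose bound variables are $\delta(\vec{x})$ and whose body is $\varphi\delta$ — here the condition $\delta(X) = Y \subseteq \Var(t)$ and $s\delta = t$ ensure $\delta$ maps free variables to free variables and bound to bound, so the split is preserved. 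The cleanest route is: from $\sigma \vDash_\xM \ECO{\vec{x}}{\varphi}$, conclude $\vDash_\xM (\ECO{\vec{x}}{\varphi})\sigma$, then observe $(\ECO{\vec{x}}{\varphi})\sigma$ and $(\ECO{\vec{y}}{\psi})\gamma$ denote the same constraint up to renaming of bound variables (via $\delta$ restricted to $\vec{x}$), and that validity $\vDash_\xM$ of an existential constraint is invariant under renaming of its bound variables — this is the only place the ``$\Leftrightarrow$'' in the hypothesis and the standard semantics of the existential quantifier from \Cref{def:validity-satisfiability-existential-constraints} get used, reducing everything to a statement about $\exists$ in the model $\xM$.

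I expect the main obstacle to be precisely this last point: reconciling the syntactic, non-renaming-invariant treatment of $\ECO{\vec{x}}{\varphi}$ (emphasised in the remark after \Cref{def:existantial constraints}) with the fact that for \emph{validity} we are allowed to rename bound variables freely, and checking that applying a substitution $\sigma$ and then a renaming $\delta$ commutes appropriately on the free-variable part. I would isolate this as a small auxiliary observation — something like: if $\delta$ is a renaming with $\Dom(\delta) \cap \VDom(\sigma) = \varnothing$-style disjointness on the relevant variables, then $\gamma \vDash_\xM (\ECO{\vec{y}}{\psi})$ iff $\delta\gamma$ (or $\gamma\delta^{-1}$, with the correct orientation) $\vDash_\xM \ECO{\vec{x}}{\varphi}\delta$, and then close the argument using the hypothesised logical equivalence. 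Once that observation is in place, both subsumptions follow by the symmetric construction above, and combining them yields $\CTerm{X}{s}{\vec{x}}{\varphi} \sim \CTerm{Y}{t}{\vec{y}}{\psi}$ by \Cref{def:subsumption and equivalence}.
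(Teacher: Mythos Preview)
Your overall strategy is the same as the paper's: define $\gamma := \sigma \circ \delta^{-1}$, check $t\gamma = s\sigma$, check $\gamma$ is $Y$-valued via $\delta(X)=Y$, and deduce $\gamma \vDash_\xM \ECO{\vec{y}}{\psi}$ from $\sigma \vDash_\xM \ECO{\vec{x}}{\varphi}$; the reverse subsumption is symmetric using $\delta^{-1}$.

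Where you diverge from the paper, and where there is a genuine muddle, is in the middle paragraph. The claim that $(\ECO{\vec{x}}{\varphi})\sigma$ and $(\ECO{\vec{y}}{\psi})\gamma$ ``denote the same constraint up to renaming of bound variables'' is false in general: the bodies $\varphi$ and $\psi$ need not be syntactically related at all (cf.\ \Cref{ex:equivalence of renamed terms}, where $\varphi = (x = z+z)$ and $\psi = ((x \bmod 2) = 0)$). Also note that $(\ECO{\vec{x}}{\varphi})\delta$, by \Cref{def:validity-satisfiability-existential-constraints}, applies $\delta$ only to \emph{free} variables, so the bound sequence remains $\vec{x}$, not $\delta(\vec{x})$; there is no bound-variable renaming happening, and no invariance-under-bound-renaming observation is needed. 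The paper's route is simply: from the hypothesis $\vDash_\xM (\ECO{\vec{x}}{\varphi})\delta \Leftrightarrow (\ECO{\vec{y}}{\psi})$ and bijectivity of $\delta$, derive $\vDash_\xM (\ECO{\vec{x}}{\varphi}) \Leftrightarrow (\ECO{\vec{y}}{\psi})\delta^{-1}$; since this holds under every valuation, instantiating with $\sigma$ gives $\vDash_\xM (\ECO{\vec{x}}{\varphi})\sigma \Leftrightarrow (\ECO{\vec{y}}{\psi})\delta^{-1}\sigma$; the left side holds by assumption, and the right side is $(\ECO{\vec{y}}{\psi})\gamma$ because $\FVar(\ECO{\vec{y}}{\psi}) \subseteq Y$ and $\delta^{-1}$ followed by $\sigma$ agrees with $\gamma$ on $Y$. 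That is the whole argument---your auxiliary observation about disjointness and commuting renamings with substitutions is not needed.
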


\begin{proof}
As $\delta$ is a bijection, we can take the inverse renaming by $\delta^{-1}$. 
From the condition,
we have $t = s\delta^{-1}$, $Y = \delta^{-1}(X)$,
and $\vDash_\xM (\ECO{\vec{y}}{\psi})\delta^{-1}  \Leftrightarrow (\ECO{\vec{x}}{\varphi})$.
By symmetricity of the definition of equivalence,
it suffices to prove 
$\CTerm{X}{s}{\vec{x}}{\varphi} \subsetsim \CTerm{\SET{ x\delta \mid x \in X }}{s\delta}{}{(\ECO{\vec{x}}{\varphi})\delta}$.
Fix an $X$-valued substitution $\sigma$ with 
$\sigma \vDash_\xM \ECO{\vec{x}}{\varphi}$. 
Let us define the substitution $\gamma := \sigma \circ \delta^{-1}$.
Then one can show that $\gamma$ is $Y$-valued, 
$\gamma \vDash_\xM \ECO{\vec{y}}{\psi}$ and $s\sigma = t\gamma$. 
\qed
\end{proof}

\begin{example} 
\label{ex:equivalence of renamed terms}
\Cref{lem:equivalence by renaming}
gives that 
$\CTerm{\SET{x}}{\mathsf{f}(x,y)}{}{x \ge 0}
\sim\CTerm{\SET{y}}{\mathsf{f}(y,x)}{}{y \ge 0}$
using a renaming $\delta = \{ x \mapsto y, y \mapsto x \}$.
Also
$\CTerm{\SET{x}}{\mathsf{f}(x,y)}{z}{x = z + z}
\sim\CTerm{\SET{x}}{\mathsf{f}(x,y)}{}{(x \mod 2) = 0}$ holds.
However, \Cref{lem:equivalence by renaming} does 
not imply any of the equivalences
in \Cref{ex:subsumption and equivalence}.
\end{example}

In a special case, our sufficient condition for equivalence
can be extended to a necessary and sufficient condition---this
is our first characterization.

\begin{restatable}{theorem}{ThmCharacterizationOfEquivalenceRsingRenaming}
\label{thm: characterization of equivalence using renaming}
Let $\delta$ be a renaming.
Let $\CTerm{X}{s}{\vec{x}}{\varphi}$, $\CTerm{Y}{t}{\vec{y}}{\psi}$ be
satisfiable existentially constrained terms such that $s\delta = t$.
Then,
$\CTerm{X}{s}{\vec{x}}{\varphi} \sim \CTerm{Y}{t}{\vec{y}}{\psi}$ if and only if
$\delta(X) = Y$ and $\vDash_\xM (\ECO{\vec{x}}{\varphi})\delta \Leftrightarrow (\ECO{\vec{y}}{\psi})$.
\end{restatable}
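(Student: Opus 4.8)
The plan is as follows. The \emph{if} direction is exactly \Cref{lem:equivalence by renaming}, so all the work is in the \emph{only if} direction: assume $\CTerm{X}{s}{\vec{x}}{\varphi} \sim \CTerm{Y}{t}{\vec{y}}{\psi}$ with $s\delta = t$, and establish separately (a)~$\delta(X) = Y$ and (b)~$\vDash_\xM (\ECO{\vec{x}}{\varphi})\delta \Leftrightarrow (\ECO{\vec{y}}{\psi})$. Throughout I use the standard fact that two substitution instances $s\theta_1$ and $s\theta_2$ of a common term $s$ coincide iff $\theta_1$ and $\theta_2$ agree on $\Var(s)$, and that model elements are identified with the corresponding values.

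For~(a) the key device is \Cref{lem:basic-properties-of-X-valued-substitutions}: it lets us choose respecting substitutions whose domain is \emph{exactly} the set of logical variables, which is what makes ``being a logical variable'' visible in an instance. Since $\CTerm{Y}{t}{\vec{y}}{\psi}$ is satisfiable, pick a $Y$-valued $\tau$ with $\Dom(\tau) = Y$ and $\tau \vDash_\xM \ECO{\vec{y}}{\psi}$. By $\CTerm{Y}{t}{\vec{y}}{\psi} \subsetsim \CTerm{X}{s}{\vec{x}}{\varphi}$ there is an $X$-valued $\sigma$ with $\sigma \vDash_\xM \ECO{\vec{x}}{\varphi}$ and $s\sigma = t\tau = s\delta\tau$; hence $\sigma(z) = \tau(\delta(z))$ for every $z \in \Var(s)$. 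For $z \in X$ the left-hand side is a value, so $\tau(\delta(z)) \in \Val$; but $\delta(z)$ is a variable, so if $\delta(z) \notin Y = \Dom(\tau)$ then $\tau(\delta(z)) = \delta(z) \notin \Val$ --- a contradiction. Thus $\delta(X) \subseteq Y$, and the symmetric argument (using the canonical substitution for $\CTerm{X}{s}{\vec{x}}{\varphi}$ together with $\CTerm{X}{s}{\vec{x}}{\varphi} \subsetsim \CTerm{Y}{t}{\vec{y}}{\psi}$) gives $\delta^{-1}(Y) \subseteq X$, so $\delta(X) = Y$.

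For~(b) it suffices to prove, for every valuation $\rho$, the implication $\vDash_{\xM,\rho}(\ECO{\vec{x}}{\varphi})\delta \implies \vDash_{\xM,\rho}\ECO{\vec{y}}{\psi}$; the converse then follows by swapping the two existentially constrained terms and $\delta$ for $\delta^{-1}$ (legitimate by $t\delta^{-1}=s$ and~(a)) modulo a change of valuation along $\delta$. Given such a $\rho$ together with a witness $\vec{v} \in \Val^*$ for $\vDash_{\xM,\rho}(\ECO{\vec{x}}{\varphi})\delta$, define an $X$-valued substitution $\sigma$ by pulling $\rho$ back along $\delta$ on $X$, i.e.\ $\sigma(z) := \rho(\delta(z))$ for $z \in X$, and by $\sigma(x_i) := v_i$ on the bound variables; unwinding \Cref{def:validity-satisfiability-existential-constraints} shows $\sigma \vDash_\xM \ECO{\vec{x}}{\varphi}$. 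By $\CTerm{X}{s}{\vec{x}}{\varphi} \subsetsim \CTerm{Y}{t}{\vec{y}}{\psi}$ we get a $Y$-valued $\tau$ with $\tau \vDash_\xM \ECO{\vec{y}}{\psi}$ and $s\sigma = t\tau = s\delta\tau$, so $\sigma(z) = \tau(\delta(z))$ for $z \in \Var(s)$; since $\delta(X) = Y$ this says $\tau(y) = \sigma(\delta^{-1}(y)) = \rho(y)$ for every $y \in Y$, in particular on $\FVar(\ECO{\vec{y}}{\psi}) \subseteq Y$. Plugging this description of $\tau$ back into $\tau \vDash_\xM \ECO{\vec{y}}{\psi}$ and once more unwinding \Cref{def:validity-satisfiability-existential-constraints} yields $\vDash_{\xM,\rho}\ECO{\vec{y}}{\psi}$. (As an alternative to this direct computation, once~(a) is known one can first use \Cref{lem:equivalence by renaming} with $\delta^{-1}$ to reduce to the case $t=s$, $Y=X$, and then prove the biimplication of constraints for two existentially constrained terms sharing both their term and their logical variables.)

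The conceptual core is step~(a): controlling the domains of the respecting substitutions via \Cref{lem:basic-properties-of-X-valued-substitutions} is precisely what pins $\delta$ down on the logical variables. The main \emph{technical} obstacle is the bookkeeping of free versus bound variables when the renaming is pushed through an existential constraint --- recall that $(\ECO{\vec{x}}{\varphi})\delta$ renames only the free variables and leaves $\vec{x}$ untouched --- together with the repeated translation between valuations $\rho$ and $\Val$-valued substitutions; these are the points that require genuine care rather than routine manipulation.
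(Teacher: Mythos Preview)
Your proof is correct and follows essentially the same strategy as the paper's: the \emph{if} direction via \Cref{lem:equivalence by renaming}, part~(a) via \Cref{lem:basic-properties-of-X-valued-substitutions} to control the domain of a respecting substitution, and part~(b) by converting a valuation into a respecting substitution, invoking subsumption, and reading off agreement on $Y$. The only noteworthy difference is presentational: for~(a) the paper argues by contradiction with a case split on whether the offending variable lies in $X\setminus\delta^{-1}(Y)$ or in $\delta^{-1}(Y)\setminus X$, whereas you prove the two inclusions $\delta(X)\subseteq Y$ and $\delta^{-1}(Y)\subseteq X$ directly---a slightly cleaner packaging of the same idea.
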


\begin{proof}
As the \emph{if} direction follows from \Cref{lem:equivalence by renaming},
we concentrate on the \emph{only-if} direction.
Assume that
$\CTerm{X}{s}{\vec{x}}{\varphi} \sim \CTerm{Y}{s\delta}{\vec{y}}{\psi}$.
We first show by contradiction that $\delta(X) = Y$.
Assume that $\delta(X) \ne Y$, hence $X \neq \delta^{-1}(Y)$.
By $Y \subseteq \Var(t)$, we have $\delta^{-1}(Y) \subseteq 
\delta^{-1}(\Var(t)) = \Var(t\delta^{-1}) = \Var(s)$.
Thus, combining with $X \subseteq \Var(s)$, 
we have that $X\cup \delta^{-1}(Y) \subseteq \Var(s)$.
Thus, there exists a variable $z \in \Var(s)$ 
such that $z \notin X$ and $z \in \delta^{-1}(Y)$
or vice versa.

We first consider the case
that $z \notin X$ and $z \in \delta^{-1}(Y)$.
Let $p$ be a position of $z$ in $s$, i.e., $s|_p=z$.
Since $\CTerm{X}{s}{\vec{x}}{\varphi}$ is satisfiable, 
there exists an $X$-valued substitution $\sigma$ with $\sigma \vDash_\xM \ECO{\vec{x}}{\varphi}$
and $\Dom(\sigma) = X$, by \Cref{lem:basic-properties-of-X-valued-substitutions}.
As $z \notin X = \Dom(\sigma)$, it follows
that $z\sigma \notin \Val$.
By our assumption, there
exist a $Y$-valued substitution $\gamma$ with 
$\gamma \vDash_\xM \ECO{\vec{y}}{\psi}$ and $s\sigma = s\delta\gamma$.
As $\gamma$ is $Y$-valued and $\delta(z) \in Y$,
we have $(s\delta\gamma)|_p =  s|_p\delta\gamma = \delta(z)\gamma \in \Val$.
This contradicts $(s\sigma)|_p = z\sigma \notin \Val$ and $s\sigma = s\delta\gamma$.
The case that $z \in X$ and $z \notin \delta^{-1}(Y)$
follows similarly.

Next, we show that 
$\vDash_\xM (\ECO{\vec{x}}{\varphi})\delta \Leftrightarrow (\ECO{\vec{y}}{\psi})$.
To show that
$\vDash_\xM (\ECO{\vec{x}}{\varphi})\delta \Rightarrow (\ECO{\vec{y}}{\psi})$,
let $\rho$ be a valuation such that
$\vDash_{\xM,\rho} (\ECO{\vec{x}}{\varphi})\delta$.
Then, $\vDash_{\xM} (\ECO{\vec{x}}{\varphi})(\rho \circ \delta)$.
Take a substitution $\sigma := (\rho \circ \delta)|_X$.
Then, clearly, $\sigma$ is $X$-valued.
From $\FVar(\ECO{\vec{x}}{\varphi}) \subseteq X$
and $\vDash_{\xM} (\ECO{\vec{x}}{\varphi})\delta\rho$,
it also follows $\sigma \vDash_{\xM} \ECO{\vec{x}}{\varphi}$.
Using our assumption, one obtains a $Y$-valued substitution 
$\gamma$ with $\gamma \vDash_\xM \ECO{\vec{y}}{\psi}$ 
and $s\sigma = t\gamma = s\delta\gamma$.
Thus, we have that $\sigma|_{\Var(s)} = (\gamma \circ \delta)|_{\Var(s)}$.
Hence, $\sigma(\delta^{-1}(x)) = \gamma(x)$ for all $x \in \delta(\Var(s)) = \Var(t)$.
Then, as $\FVar(\ECO{\vec{y}}{\psi}) \subseteq \Var(t)$,
it follows from $\gamma \vDash_\xM \ECO{\vec{y}}{\psi}$ 
that $\vDash_\xM (\ECO{\vec{y}}{\psi})\delta^{-1}\sigma$.
By our choice of $\sigma$,
we have $(\rho \circ \delta)|_X = \sigma|_{X}$.
Thus, $\rho(y) = \sigma(\delta^{-1}(y))$ for all $y \in \delta(X) = Y$.
Hence, by $\FVar(\ECO{\vec{y}}{\psi}) \subseteq Y$,
it follows $\vDash_\xM (\ECO{\vec{y}}{\psi})\rho$.
Thus, $\vDash_{\xM,\rho} \ECO{\vec{y}}{\psi}$.
To show
$\vDash_\xM (\ECO{\vec{y}}{\psi}) \Rightarrow (\ECO{\vec{x}}{\varphi})\delta$,
we apply the proofs so far to the symmetric proposition 
with replacing $\delta$ by $\delta^{-1}$.
\qed
\end{proof}

In~\cite[Section~2.3]{FKN17tocl}, the equivalence of two non-existentially constrained terms 
$s ~ [\varphi]$ and $s ~ [\psi]$
is characterized 
by the validity of $(\ECO{\vec{x}}{\varphi}) \Leftrightarrow (\ECO{\vec{y}}{\psi})$, where $\SET{\vec{x}} = \Var(\varphi) \setminus \Var(s)$ and $\SET{\vec{y}} = \Var(\psi) \setminus \Var(t)$.
However, this characterization 
is not correct:
$(x = x) \Leftrightarrow \m{true}$ is valid, but 
$\m{f}(x)~\CO{x = x} \not\sim \m{f}(x)~\CO{\m{true}}$.
The characterization is recovered and adapted for existentially constrained terms as follows,
using \Cref{thm: characterization of equivalence using renaming}:

\begin{restatable}{corollary}{PropEquivalenceByConstraintValidty}
\label{prop:equivalence-by-constraint-validity}
\label{cor:equivalence-by-constraint-validity}
Let $\CTerm{X}{s}{\vec{x}}{\varphi}$, $\CTerm{Y}{s}{\vec{y}}{\psi}$ be
satisfiable existentially constrained terms. Then,
$\CTerm{X}{s}{\vec{x}}{\varphi} \sim \CTerm{Y}{s}{\vec{y}}{\psi}$ if and only if
$X = Y$ and
$\xM \vDash (\ECO{\vec{x}}{\varphi}) \Leftrightarrow (\ECO{\vec{y}}{\psi})$.
\end{restatable}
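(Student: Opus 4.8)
The plan is to derive this directly from \Cref{thm: characterization of equivalence using renaming} by instantiating the renaming $\delta$ with the identity substitution $\m{id}$. First I would verify that the hypotheses of that theorem are met: $\m{id}$ is a renaming, both $\CTerm{X}{s}{\vec{x}}{\varphi}$ and $\CTerm{Y}{s}{\vec{y}}{\psi}$ are satisfiable by assumption, and $s\,\m{id} = s$, so the second existentially constrained term is indeed of the shape required by the theorem (its term equals $s\delta$ for the chosen $\delta$).

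Applying the theorem then yields that $\CTerm{X}{s}{\vec{x}}{\varphi} \sim \CTerm{Y}{s}{\vec{y}}{\psi}$ holds if and only if $\m{id}(X) = Y$ and $\vDash_\xM (\ECO{\vec{x}}{\varphi})\m{id} \Leftrightarrow (\ECO{\vec{y}}{\psi})$. I would then simplify both conditions. The first is immediate, since $\m{id}(X) = X$, so it becomes $X = Y$. For the second, I would unfold the definition of substitution application to an existential constraint (\Cref{def:validity-satisfiability-existential-constraints}): $(\ECO{\vec{x}}{\varphi})\m{id} = \ECO{\vec{x}}{(\varphi\,\m{id}|_{\FVar(\ECO{\vec{x}}{\varphi})})} = \ECO{\vec{x}}{\varphi}$, because restricting the identity and applying it leaves $\varphi$ unchanged while the bound prefix $\vec{x}$ is untouched. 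Hence the second condition becomes $\vDash_\xM (\ECO{\vec{x}}{\varphi}) \Leftrightarrow (\ECO{\vec{y}}{\psi})$, which is precisely $\xM \vDash (\ECO{\vec{x}}{\varphi}) \Leftrightarrow (\ECO{\vec{y}}{\psi})$, as claimed.

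There is essentially no obstacle here, as the statement is a specialization of the preceding theorem. The only point needing a (trivial) justification is that applying the identity substitution to an existential constraint returns the same constraint syntactically; this matters because, as the paper stresses, existential constraints are not identified modulo renaming of their bound variables, so one must check that the bound prefix is genuinely preserved, which it visibly is.
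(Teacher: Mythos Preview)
Your proposal is correct and takes exactly the same approach as the paper: the paper's proof consists of the single sentence ``Take the identity substitution as $\delta$ in \Cref{thm: characterization of equivalence using renaming}.'' You have simply spelled out the routine simplifications that follow.
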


\section{Pattern-General Existentially Constrained Terms and Their Equivalence}
\label{ssec:pattern general constrained terms}

In this section, we introduce and focus on a notion for a general form of existentially constrained terms.
A term $s$ is said to be \emph{linear w.r.t.\ a set $X$ of variables} if every
variable in $X$ appears in $s$ at most once.

\begin{definition}[Pattern-General Existentially Constrained Term]
\label{def:pattern-general}
An existentially constrained term $\CTerm{X}{s}{\vec{x}}{\varphi}$ is called
\emph{pattern-general}
if $s$ is $X$-linear and 
$\Val(s) = \varnothing$.
\end{definition}
At this point of the paper, readers may wonder why this condition
is named ``pattern-general''---we 
postpone explaining the reason until
\Cref{lem:characterization-of-pattern-general}.
We call an existentially constrained term $\CTerm{X}{s}{\vec{x}}{\varphi}$ 
\emph{value-free} (cf.~\cite{Kop17}) if $\Val(s) = \varnothing$.
We call $\CTerm{X}{s}{\vec{x}}{\varphi}$ \emph{linear w.r.t.\ logical variables} (LV-linear, for short) if $s$ is $X$-linear.
Thus, $\CTerm{X}{s}{\vec{x}}{\varphi}$ is pattern-general if and only if it is value-free and LV-linear.

\begin{example} 
\label{ex:pattern-general constrained terms}
The term $\CTerm{\SET{x}}{\mathsf{f}(x,y)}{}{x \ge 0}$
is pattern-general, while the terms
$\CTerm{\SET{x}}{\mathsf{f}(x,1)}{}{x \ge 0}$
and 
$\CTerm{\SET{x}}{\mathsf{f}(x,x)}{}{x \ge 0}$ are not.
Note that
$\CTerm{\SET{x}}{\mathsf{f}(y,\mathsf{f}(y,x))}{}{x \ge 0}$
is indeed pattern-general.
\end{example}

In the following we define the transformation from an existentially constrained term into an equivalent one which is in addition pattern-general.

\begin{definition}[PG-transformation]
We define the transformation $\PG$ 
on existentially constrained terms
as follows:
$
\PG(\CTerm{X}{s}{\vec{x}}{\varphi}) = \CTerm{Y}{t}{\vec{y}}{\psi}
$, 
where
    $Y = \SET{w_1,\ldots,w_n}$,
    $t = s[w_1,\ldots,w_n]_{p_1,\ldots,p_n}$, 
    $\SET{\vec{y}} = \SET{\vec{x}} \cup X$,
        and
    $\psi = (\varphi \land \bigwedge_{i=1}^n (s|_{p_i} = w_i))$
with pairwise distinct fresh variables $w_1,\ldots,w_n$
and positions $p_1,\ldots,p_n$ such that $\Pos_{X\cup\Val}(s) = \SET{p_1,\ldots,p_n}$.
\end{definition}

\begin{example}[Cont'd from \Cref{ex:pattern-general constrained terms}]
\label{ex:PG-transformation}
We obtain $\PG(\CTerm{\SET{x}}{\mathsf{f}(x,1)}{}{x \ge 0})
= \CTerm{\SET{w_1,w_2}}{\mathsf{f}(w_1,w_2)}{x}{(x \ge 0) \land (x = w_1) \land (1 = w_2)}$,
as well as 
$\PG(\CTerm{\SET{x}}{\mathsf{f}(x,x)}{}{x \ge 0})
= \CTerm{\SET{w_1,w_2}}{\mathsf{f}(w_1,w_2)}{x}{(x \ge 0) \land (x = w_1) \land (x = w_2)}$.
\end{example}

The following lemma proving well-definedness of the
PG-transformation can be shown in a straightforward manner.

\begin{restatable}{lemma}{LemmaBasicPropertiesOfPG}
\label{lem:basic-properties-of-PG}
Let $\CTerm{X}{s}{\vec{x}}{\varphi}$ be an existentially constrained term.
Then, both of the following statements hold:
\begin{enumerate}
\renewcommand{\labelenumi}{\Bfnum{\arabic{enumi}}}
    \item
    $\PG(\CTerm{X}{s}{\vec{x}}{\varphi})$ is a pattern-general existentially constrained term, 
        and
    \item
    if $\CTerm{X}{s}{\vec{x}}{\varphi}$ is satisfiable, then so is $\PG(\CTerm{X}{s}{\vec{x}}{\varphi})$.
\end{enumerate}
\end{restatable}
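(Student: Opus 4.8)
The plan is to verify the two claims separately, reusing the characterisations of the variable conditions on existentially constrained terms from \Cref{lem:property existential constraints}.

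For claim~\Bfnum{1}, writing $\PG(\CTerm{X}{s}{\vec{x}}{\varphi}) = \CTerm{Y}{t}{\vec{y}}{\psi}$ with the notation of the definition, I would first check that $\CTerm{Y}{t}{\vec{y}}{\psi}$ is actually an existentially constrained term. Here $t = s[w_1,\dots,w_n]_{p_1,\dots,p_n}$ where $\SET{p_1,\dots,p_n} = \Pos_{X\cup\Val}(s)$, so $\Var(t) = (\Var(s) \setminus X) \cup \SET{w_1,\dots,w_n}$ (using that the $w_i$ are fresh and that removing the subterms at $X$-positions and $\Val$-positions removes exactly the occurrences of variables in $X$ that have no other occurrence; one has to be slightly careful because a variable in $X$ might occur at several positions). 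Then $\Var(\psi) = \Var(\varphi) \cup \Var(s) \cup \SET{w_1,\dots,w_n}$, since the added conjuncts $s|_{p_i} = w_i$ contribute the $w_i$ and the variables of $s|_{p_i}$, which are all in $\Var(s)$ (the values contribute nothing). From $\FVar(\ECO{\vec{x}}{\varphi}) \subseteq X \subseteq \Var(s)$ and $\SET{\vec{y}} = \SET{\vec{x}} \cup X$ one computes $\FVar(\ECO{\vec{y}}{\psi}) = \Var(\psi) \setminus \SET{\vec{y}} = (\Var(s) \setminus X) \cup \SET{w_1,\dots,w_n} = Y$, and similarly $\BVar(\ECO{\vec{y}}{\psi}) = \SET{\vec{y}} = \SET{\vec{x}} \cup X$, which is disjoint from $\Var(t)$ because $\SET{\vec{x}} \cap \Var(s) = \varnothing$, $X$ occurs in $t$ nowhere (all its occurrences in $s$ sit at the abstracted positions), and the $w_i$ are fresh. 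So conditions~\Bfnum{1},\,\Bfnum{2} of \Cref{def:constrained-term} hold and $\SET{\vec{y}} \subseteq \Var(\psi)$, making $\ECO{\vec{y}}{\psi}$ a legal existential constraint. Pattern-generality is then immediate: $t$ is $Y$-linear because the $w_i$ are pairwise distinct and occur once each and $Y = \SET{w_1,\dots,w_n}$, and $\Val(t) = \varnothing$ because all value positions of $s$ lie in $\Pos_{X\cup\Val}(s)$ and have been replaced by the $w_i$.

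For claim~\Bfnum{2}, I would argue directly from the definition of satisfiability. Suppose $\ECO{\vec{x}}{\varphi}$ is satisfiable, so there is a valuation $\rho$ and values $\vec{v}$ with $\vDash_{\xM,\rho} \varphi\kappa$ where $\kappa = \SET{\vec{x}\mapsto\vec{v}}$; by extending $\rho$ arbitrarily on the finitely many fresh variables, assume $\rho$ is defined everywhere. The goal is to exhibit values $\vec{v}\,'$ for $\vec{y} = \vec{x}, X$ (plus a valuation) witnessing $\vDash \ECO{\vec{y}}{\psi}$. The natural choice is: keep $\vec{v}$ for the $\vec{x}$-part, and for each $x \in X$ use $\rho(x)$ (note $X$ may not be value-valued under $\rho$, but we are choosing the existential witnesses freely, so we take the \emph{values} $\rho(x)$ — here we need that $\rho(x) \in \xI(\tau)$ and that every element of $\xI(\tau)$ appears as a value $\rho(x)^\tau \in \Val$, which is our standing assumption on $\xM$). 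Having fixed the witnesses for $\vec{x}$ and $X$, the extra conjuncts $s|_{p_i} = w_i$ are satisfied by interpreting the free variable $w_i$ (or choosing its witness, depending on whether $w_i \in \SET{\vec{y}}$ — it is not, $w_i \in Y$) as the value of $s|_{p_i}$ under the chosen assignment: since $s|_{p_i}$ is either a variable in $X$ or a value, its interpretation is already a value. With $w_i$ set this way, $\varphi$ holds by the original satisfiability and each $s|_{p_i} = w_i$ holds by construction, so $\psi$ holds; hence $\ECO{\vec{y}}{\psi}$ is satisfiable and therefore so is $\CTerm{Y}{t}{\vec{y}}{\psi}$.

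The main obstacle is purely bookkeeping rather than conceptual: getting the variable-set computations for $\Var(t)$, $\Var(\psi)$, $\FVar(\ECO{\vec{y}}{\psi})$, and $\BVar(\ECO{\vec{y}}{\psi})$ exactly right when a logical variable of $X$ occurs several times in $s$ or when $X$ and $\Val$ positions overlap in subtle ways with ordinary variable positions — one has to be careful that abstracting \emph{all} positions in $\Pos_{X\cup\Val}(s)$ really removes every occurrence of every $X$-variable (so that $X \cap \Var(t) = \varnothing$) while not accidentally removing occurrences of non-logical variables. Once the position/variable bookkeeping is pinned down, both statements follow routinely, and I would present the argument compactly, citing \Cref{lem:property existential constraints} for the equivalent phrasings of the variable conditions and the standing assumption on $\xM$ for the fact that model elements are available as values.

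\qed
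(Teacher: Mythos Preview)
Your overall approach matches the paper's: verify the two variable conditions of \Cref{def:constrained-term} directly, then note that $Y$-linearity and value-freeness of $t$ are immediate; and for satisfiability, extend a witnessing valuation to the new variables. However, there is a concrete bookkeeping error in your argument for~\Bfnum{1} that derails the chain of equalities you wrote.

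You claim $\Var(\psi) = \Var(\varphi) \cup \Var(s) \cup \SET{w_1,\dots,w_n}$, justifying this by ``the added conjuncts $s|_{p_i} = w_i$ contribute the $w_i$ and the variables of $s|_{p_i}$, which are all in $\Var(s)$.'' But the positions $p_i$ lie in $\Pos_{X\cup\Val}(s)$, so each $s|_{p_i}$ is either a value or a variable \emph{in $X$}; no variable of $\Var(s)\setminus X$ can occur as some $s|_{p_i}$. Hence $\Var(\psi) = \Var(\varphi) \cup X \cup \SET{w_1,\dots,w_n}$, not $\Var(\varphi) \cup \Var(s) \cup \SET{w_1,\dots,w_n}$. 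With your version, the next step $\Var(\psi)\setminus\SET{\vec{y}} = (\Var(s)\setminus X)\cup\SET{w_1,\dots,w_n}$ would \emph{not} equal $Y=\SET{w_1,\dots,w_n}$ unless $\Var(s)\subseteq X$, which is false in general (consider $\CTerm{\SET{x}}{\m{f}(x,y)}{}{x\ge 0}$). With the corrected $\Var(\psi)$, one gets $\Var(\psi)\setminus\SET{\vec{y}}\subseteq\SET{w_1,\dots,w_n}=Y$ because $\Var(\varphi)\setminus\SET{\vec{x}}\subseteq X$; this is exactly how the paper proceeds. The rest of your plan for~\Bfnum{1} (disjointness of $\BVar$ from $\Var(t)$, $Y$-linearity, value-freeness) and all of~\Bfnum{2} are fine and essentially identical to the paper's proof.
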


The following theorem states the correctness of the PG-transformation.

\begin{restatable}{theorem}{TheoremCorrectnessOfPG}
\label{thm:correctness-of-PG}
Let $\CTerm{X}{s}{\vec{x}}{\varphi}$ be an existentially constrained term.
Then, we have $\CTerm{X}{s}{\vec{x}}{\varphi} \sim \PG(\CTerm{X}{s}{\vec{x}}{\varphi})$.
\end{restatable}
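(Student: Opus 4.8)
The plan is to unfold the definitions of $\sim$ and show the two subsumptions $\CTerm{X}{s}{\vec{x}}{\varphi} \subsetsim \PG(\CTerm{X}{s}{\vec{x}}{\varphi})$ and $\PG(\CTerm{X}{s}{\vec{x}}{\varphi}) \subsetsim \CTerm{X}{s}{\vec{x}}{\varphi}$ separately. Throughout, write $\PG(\CTerm{X}{s}{\vec{x}}{\varphi}) = \CTerm{Y}{t}{\vec{y}}{\psi}$ with $Y = \SET{w_1,\ldots,w_n}$, $t = s[w_1,\ldots,w_n]_{p_1,\ldots,p_n}$, $\SET{\vec{y}} = \SET{\vec{x}} \cup X$, and $\psi = \varphi \land \bigwedge_{i=1}^n (s|_{p_i} = w_i)$, where $\Pos_{X\cup\Val}(s) = \SET{p_1,\ldots,p_n}$; by \Cref{lem:basic-properties-of-PG} this is a well-defined pattern-general existentially constrained term. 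The key observation driving both directions is that for any substitution $\gamma$ with $\gamma(w_i) = \gamma(s|_{p_i})$ for every $i$ (which is exactly what $\bigwedge_{i=1}^n (s|_{p_i} = w_i)$ forces once everything is value-valued), we have $t\gamma = s\gamma$, since replacing each $s|_{p_i}$ by $w_i$ and then instantiating $w_i$ back to $\gamma(s|_{p_i})$ is the same as instantiating $s$ directly.

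For $\subsetsim$: fix an $X$-valued substitution $\sigma$ with $\sigma \vDash_\xM \ECO{\vec{x}}{\varphi}$; without loss of generality (using \Cref{lem:basic-properties-of-X-valued-substitutions}) take $\Dom(\sigma) = X$. Define $\gamma$ on $Y \cup \SET{\vec{x}}$ by $\gamma(w_i) := \sigma(s|_{p_i})$ (a value, since $s|_{p_i} \in X \cup \Val$ and $\sigma$ is $X$-valued, hence $s|_{p_i}\sigma \in \Val$) and $\gamma(x) := \sigma(x)$ for the free variables of $\ECO{\vec{x}}{\varphi}$ lying in $X$, extended arbitrarily on the bound part via the witness for $\sigma \vDash_\xM \ECO{\vec{x}}{\varphi}$. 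Then $\gamma$ is $Y$-valued; one checks $\gamma \vDash_\xM \ECO{\vec{y}}{\psi}$ because the witness for $\varphi$ carries over and each conjunct $(s|_{p_i} = w_i)$ holds by construction of $\gamma(w_i)$; and $t\gamma = s\sigma$ by the key observation above (noting that $s$ and $t$ differ only at the positions $p_1,\ldots,p_n$, and off those positions $\gamma$ agrees with $\sigma$ on $\Var(s)$).

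For $\supsetsim$: fix a $Y$-valued substitution $\gamma$ with $\gamma \vDash_\xM \ECO{\vec{y}}{\psi}$; again assume $\Dom(\gamma) = Y$. Since $\psi$ includes $\varphi$ as a conjunct and $\FVar(\ECO{\vec{x}}{\varphi}) \subseteq X \subseteq \SET{\vec{y}}$, the restriction $\sigma := \gamma|_X$ (suitably extended) is $X$-valued and satisfies $\sigma \vDash_\xM \ECO{\vec{x}}{\varphi}$ via the same witness for the bound variables $\vec{x}$ that $\gamma$ provides (here using that $\SET{\vec{y}} = \SET{\vec{x}}\cup X$ so $\gamma$ already carries values for $\vec{x}$). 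It remains to show $s\sigma = t\gamma$. Since $\gamma \vDash_\xM \psi$, each conjunct $(s|_{p_i} = w_i)$ is valid under $\gamma$, i.e.\ $\gamma(w_i) = (s|_{p_i})\gamma = (s|_{p_i})\sigma$ (the last equality because $s|_{p_i} \in X \cup \Val$ and $\sigma,\gamma$ agree on $X$). Therefore $t\gamma = s[w_1,\ldots,w_n]\gamma = s[\gamma(w_1),\ldots,\gamma(w_n)] = s[(s|_{p_1})\sigma,\ldots,(s|_{p_n})\sigma] = s\sigma$, again using that $\sigma$ and $\gamma$ agree on the variables of $s$ outside the positions $p_i$.

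The main obstacle I expect is bookkeeping around the domains and the bound variables: one must be careful that the witness $\vec{v} \in \Val^*$ for $\SET{\vec{x}}$ in $\ECO{\vec{x}}{\varphi}$ and the witness for the (larger) bound part $\SET{\vec{y}}\setminus\SET{\vec{y}}$… more precisely, since $\SET{\vec{y}} = \SET{\vec{x}}\cup X$ and $X$ is \emph{not} bound in $\ECO{\vec{y}}{\psi}$ only insofar as $X \subseteq Y$ — one has to track precisely which variables are free versus bound in $\psi$ and confirm $\FVar(\ECO{\vec{y}}{\psi}) = \Var(\psi)\setminus\SET{\vec{y}}$ equals $\Var(t)\cap\Var(\psi)$, matching condition \Bfnum{1} of \Cref{def:constrained-term}; \Cref{lem:basic-properties-of-PG}\,\Bfnum{1} already guarantees this, so it can be cited rather than reproved. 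The substitution identities $t\gamma = s\sigma$ are conceptually routine once the context/hole notation $s[w_1,\ldots,w_n]_{p_1,\ldots,p_n}$ is handled carefully, since the $p_i$ are parallel positions by construction.
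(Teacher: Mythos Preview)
Your $(\subsetsim)$ direction is essentially the paper's construction (the paper takes $\gamma = \sigma \circ \SET{w_i \mapsto s|_{p_i}}$, which is exactly your $\gamma$), though your ``without loss of generality $\Dom(\sigma) = X$'' is not justified by \Cref{lem:basic-properties-of-X-valued-substitutions}: that lemma produces \emph{one} such $\sigma$, it does not let you restrict the universal quantifier in the definition of $\subsetsim$. This is harmless here because the construction works for arbitrary $\sigma$, but it is a misuse of the lemma.

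The $(\supsetsim)$ direction has a genuine gap. You set $\Dom(\gamma) = Y = \SET{w_1,\ldots,w_n}$ and then take $\sigma := \gamma|_X$. But the $w_i$ are fresh, so $X \cap Y = \varnothing$, and hence $\gamma|_X$ is the \emph{identity} on $X$; in particular $\sigma$ is not $X$-valued. The sentence ``$\sigma,\gamma$ agree on $X$'' is true only because both are the identity there, which is exactly the wrong thing. Correspondingly, your equation $(s|_{p_i})\gamma = (s|_{p_i})\sigma$ yields the variable $s|_{p_i}$ on both sides when $s|_{p_i}\in X$, not a value, so you cannot conclude $\gamma(w_i) = (s|_{p_i})\sigma$ from the constraint, and the final chain $t\gamma = \cdots = s\sigma$ fails: $t\gamma$ has values at the positions $p_i$ while $s\sigma = s$ still has variables from $X$ there.

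The point you are missing is that in $\ECO{\vec{y}}{\psi}$ the variables in $X$ are \emph{bound} (since $\SET{\vec{y}} = \SET{\vec{x}}\cup X$), so the values for $X$ must come from the existential witness, not from $\gamma$. Concretely: from $\gamma \vDash_\xM \ECO{\vec{y}}{\psi}$ you get $\kappa = \SET{\vec{y}\mapsto\vec{v}}$ with $\vDash_\xM \psi\kappa\gamma$; then take $\sigma = \gamma \circ \kappa|_X$. Now $\sigma$ is $X$-valued, $\sigma \vDash_\xM \ECO{\vec{x}}{\varphi}$ via the $\vec{x}$-part of $\kappa$, and the conjuncts $(s|_{p_i}=w_i)\kappa\gamma$ give $s|_{p_i}\kappa = w_i\gamma$, from which $s\sigma = t\gamma$ follows. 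This is exactly what the paper does. Your closing paragraph already flags the free/bound bookkeeping as the main obstacle; it is, and it bites precisely here.
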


\begin{proof}
Let $\Pos_{X\cup\Val}(s) = \SET{p_1,\ldots,p_n}$ and $\PG(\CTerm{X}{s}{\vec{x}}{\varphi}) = \CTerm{Y}{t}{\vec{y}}{\psi}$
where
$w_1,\ldots,w_n$ are pairwise distinct fresh variables, 
$t = s[w_1,\ldots,w_n]_{p_1,\ldots,p_n}$, 
$Y = \SET{w_1,\ldots,w_n}$,
$\SET{\vec{y}} = \SET{\vec{x}} \cup X$,
        and
$\psi = (\varphi \land \bigwedge_{i=1}^n (s|_{p_i} = w_i))$.
($\subsetsim$)
Let $\sigma$ be an $X$-valued substitution such that 
$\sigma \vDash_{\xM} \ECO{\vec{x}}{\varphi}$. 
Take a substitution $\gamma = \sigma \circ \SET{ w_i \mapsto s|_{p_i} \mid 1  \le i \le n }$.
Then, we can show that $\gamma$ is $Y$-valued,
$\gamma \vDash_{\xM} \ECO{\vec{y}}{\psi}$,
and $t\gamma = s\sigma$.
($\supsetsim$)
Let $\gamma$ be a $Y$-valued substitution such that 
$\gamma \vDash_{\xM} \ECO{\vec{y}}{\psi}$. 
Then, 
there exists $\vec{v} \in \Val^*$
such that $\vDash_{\xM} \psi\kappa\gamma$,
where $\kappa = \SET{\vec{y} \mapsto \vec{v}}$.
Take $\sigma = \gamma \circ \kappa|_{X}$.
Then, one can show $\sigma$ is $X$-valued,
$\sigma \vDash_\xM \ECO{\vec{x}}{\varphi}$,
$s\sigma = t\gamma$.
\qed
\end{proof}

The following two lemmata, whose proofs are straightforward, demonstrate useful
correspondences between constrained terms before and after the PG
transformation.

\begin{restatable}{lemma}{LemInducedValidityOfRepresentativeSubstitutions} 
\label{lem:induced-validity-of-representative-substitutions}
Let $\CTerm{X}{s}{\vec{x}}{\varphi}$ be a satisfiable constrained term.
Suppose that $\PG(\CTerm{X}{s}{\vec{x}}{\varphi}) = (\CTerm{Y}{s[w_1,\ldots,w_n]_{p_1,\ldots,p_n}}{\vec{y}}{\psi})$
and $\sigma = \SET{w_i \mapsto s|_{p_i} \mid 1 \le i \le n}$
where $\SET{p_1,\ldots,p_n}= \Pos_{X\cup\Val}(s)$.
Then $\vDash_\xM (\ECO{\vec{x}}{\varphi}) \Leftrightarrow (\ECO{\vec{y}}{\psi})\sigma$.
\end{restatable}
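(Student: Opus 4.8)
The plan is to unfold the definitions on both sides and show that the two existential constraints $\ECO{\vec{x}}{\varphi}$ and $(\ECO{\vec{y}}{\psi})\sigma$ are respected by exactly the same $\Var(s)$-valued substitutions, from which validity of the biconditional follows once we recall that $\FVar(\ECO{\vec{x}}{\varphi})\subseteq X\subseteq\Var(s)$ and (after applying $\sigma$) $\FVar((\ECO{\vec{y}}{\psi})\sigma)\subseteq\Var(s)$ as well. First I would spell out $(\ECO{\vec{y}}{\psi})\sigma$ explicitly: since $\SET{\vec{y}}=\SET{\vec{x}}\cup X$ and $\psi = \varphi \land \bigwedge_{i=1}^n (s|_{p_i} = w_i)$, and $\sigma$ substitutes $w_i \mapsto s|_{p_i}$ on the free variables $Y=\SET{w_1,\dots,w_n}$, the body $\psi\sigma$ becomes $\varphi \land \bigwedge_{i=1}^n (s|_{p_i} = s|_{p_i})$, which is logically equivalent to $\varphi$; the bound-variable sequence is still $\vec{y}$, which contains $\vec{x}$ together with $X$. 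So morally $(\ECO{\vec{y}}{\psi})\sigma$ is ``$\exists \vec{x}, X.\ \varphi$'' whereas $\ECO{\vec{x}}{\varphi}$ is ``$\exists \vec{x}.\ \varphi$''; the extra existentially quantified block $X$ is harmless precisely because $X\subseteq\Var(s)$ and, crucially, because each $x\in X$ occurs in $\psi$ only through the conjuncts $s|_{p_i}=w_i$ at positions $p_i\in\Pos_{X\cup\Val}(s)$, so quantifying it out and re-instantiating it is reversible.

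Concretely, I would argue the biconditional pointwise: fix a valuation $\rho$ and show $\vDash_{\xM,\rho}\ECO{\vec{x}}{\varphi}$ iff $\vDash_{\xM,\rho}(\ECO{\vec{y}}{\psi})\sigma$. For the forward direction, from $\vDash_{\xM,\rho}\ECO{\vec{x}}{\varphi}$ we get $\vec{v}\in\Val^*$ with $\vDash_{\xM,\rho}\varphi\SET{\vec{x}\mapsto\vec{v}}$; I then extend this witness to the variables of $X$ by evaluating the corresponding subterms $s|_{p_i}$ under $\rho$ (together with $\SET{\vec{x}\mapsto\vec{v}}$) — for $p_i$ with $s|_{p_i}\in X$ this pins down the value of that logical variable, for $p_i\in\Pos_{\Val}(s)$ it is already a value — giving a combined witness $\kappa'$ for $\vec{y}$ that satisfies $\varphi$ and all the identities $s|_{p_i}=s|_{p_i}$ trivially. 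For the backward direction I run the same computation in reverse: a witness for $\vec{y}$ satisfying $\psi\sigma\equiv\varphi$ restricts, on the $\vec{x}$-block, to a witness for $\ECO{\vec{x}}{\varphi}$. Throughout, \Cref{lem:basic-properties-of-PG} guarantees $\CTerm{Y}{t}{\vec{y}}{\psi}$ is a genuine (and satisfiable) existentially constrained term, so that $(\ECO{\vec{y}}{\psi})\sigma$ is well-formed, and \Cref{thm:correctness-of-PG} already tells us the two constrained terms are equivalent, which can be invoked as a sanity cross-check or even as the engine of the proof via \Cref{cor:equivalence-by-constraint-validity}.

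In fact the cleanest route is probably the latter: since $t = s[w_1,\dots,w_n]$ is not $s$ itself, I cannot apply \Cref{cor:equivalence-by-constraint-validity} directly to $\CTerm{X}{s}{\vec{x}}{\varphi}$ and $\CTerm{Y}{t}{\vec{y}}{\psi}$; but applying $\sigma$ to the latter turns $t\sigma$ back into $s$ (as $w_i\sigma = s|_{p_i}$), and one checks $\CTerm{X}{s}{\vec{x}}{\varphi}\sim\CTerm{X}{s}{\vec{y}}{\psi\sigma}$ follows from $\CTerm{X}{s}{\vec{x}}{\varphi}\sim\PG(\CTerm{X}{s}{\vec{x}}{\varphi})$ together with the observation that instantiating logical variables by the subterms they were abstracted from is an equivalence-preserving operation (this is essentially the ``instantiation'' direction already contained in the proof of \Cref{thm:correctness-of-PG}). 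Then \Cref{cor:equivalence-by-constraint-validity}, whose hypotheses are now met since both sides have the same term $s$ and both sets of logical variables collapse to (a subset of) $\Var(s)$, yields $\vDash_\xM(\ECO{\vec{x}}{\varphi})\Leftrightarrow(\ECO{\vec{y}}{\psi})\sigma$ directly. The main obstacle is bookkeeping: one must be careful that $\sigma$ is applied only to the \emph{free} variables $Y$ of $\ECO{\vec{y}}{\psi}$ (per \Cref{def:validity-satisfiability-existential-constraints}), that the bound block $\vec{y}=\vec{x},X$ is untouched, and that after the dust settles the conjuncts $s|_{p_i}=w_i$ really do become the tautologies $s|_{p_i}=s|_{p_i}$ rather than something involving the still-bound variables of $X$ — checking the $X\cup\Val$ positions behave as claimed is where a slip is most likely.
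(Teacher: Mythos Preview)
There is a genuine gap in your backward direction. You assert that ``each $x\in X$ occurs in $\psi$ only through the conjuncts $s|_{p_i}=w_i$'', but this is false: since $\FVar(\ECO{\vec{x}}{\varphi})\subseteq X$, variables of $X$ typically occur in $\varphi$ as well. After applying $\sigma$ (which by \Cref{def:validity-satisfiability-existential-constraints} only touches the free variables $Y=\{w_1,\dots,w_n\}$) you obtain $(\ECO{\vec{y}}{\psi})\sigma = \ECO{\vec{y}}{\varphi\land\bigwedge_i(s|_{p_i}=s|_{p_i})}$, and in this expression the $X$-occurrences inside $\varphi$ are \emph{bound} (since $\SET{\vec{y}}=\SET{\vec{x}}\cup X$). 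A witness $(\vec{v},\vec{u})$ for the block $(\vec{x},X)$ satisfying $\varphi$ then need not restrict to a witness for $\ECO{\vec{x}}{\varphi}$ under the given valuation $\rho$, because nothing forces $\vec{u}$ to agree with $\rho$ on $X$. Concretely, take $\CTerm{\{x\}}{\m{f}(x)}{}{x>0}$: here $\ECO{\vec{x}}{\varphi}$ is just $(x>0)$ with $x$ free, whereas $(\ECO{\vec{y}}{\psi})\sigma=\ECO{x}{(x>0)\land(x=x)}$ is closed and valid, so the biconditional fails whenever $\rho(x)\le 0$. Your alternative route via \Cref{cor:equivalence-by-constraint-validity} is also blocked: to invoke it you would need an existentially constrained term with term part $s$ and constraint $\ECO{\vec{y}}{\psi\sigma}$, but $\SET{\vec{y}}\supseteq X\subseteq\Var(s)$ violates condition~(2) of \Cref{def:constrained-term}, so no such object exists.

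The paper's proof confronts exactly this capture problem: it renames the bound block $x_1,\dots,x_k\in X$ to fresh $x_1',\dots,x_k'$ \emph{before} pushing $\sigma$ through, so that the substituted $s|_{p_i}=x_i$ land as \emph{free} variables and the resulting conjuncts $x_i'=x_i$ pin the bound witnesses to $\rho(x_i)$, while the $X$-occurrences inside $\varphi$ likewise remain free and $\rho$-interpreted. Separating the bound $X$-block from the free $X$-occurrences (those in $\varphi$ and those introduced by $\sigma$) is the key idea your argument lacks; your closing worry about ``the still-bound variables of $X$'' is on the right track, but you never discharge it.
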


\begin{restatable}{lemma}{LemmaIIIxix}
\label{lem:X or val subterms equality from constraints}
Let $\CTerm{X}{s}{\vec{x}}{\varphi}$ be a satisfiable constrained term.
Suppose that $\PG(\CTerm{X}{s}{\vec{x}}{\varphi}) = (\CTerm{Y}{s[w_1,\ldots,w_n]_{p_1,\ldots,p_n}}{\vec{y}}{\psi})$
where $\SET{p_1,\ldots,p_n} = \Pos_{X\cup\Val}(s)$.
\begin{enumerate}
\renewcommand{\labelenumi}{\Bfnum{\arabic{enumi}}}
\item 
For any $i,j \in \SET{1,\ldots,n}$, 
$\vDash_\xM (\ECO{\vec{x}}{\varphi}) \Rightarrow (s|_{p_i} = s|_{p_j})$
if and only if 
$\vDash_\xM (\ECO{\vec{y}}{\psi}) \Rightarrow (w_i = w_j)$.
\item
For any $i \in \SET{1,\ldots,n}$ and $v \in \Val$, 
$\vDash_\xM (\ECO{\vec{x}}{\varphi}) \Rightarrow (s|_{p_i} = v)$
if and only if 
$\vDash_\xM (\ECO{\vec{y}}{\psi}) \Rightarrow (w_i = v)$.
\end{enumerate}
\end{restatable}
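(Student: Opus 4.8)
The plan is to exploit Lemma~\ref{lem:induced-validity-of-representative-substitutions}, which already gives us $\vDash_\xM (\ECO{\vec{x}}{\varphi}) \Leftrightarrow (\ECO{\vec{y}}{\psi})\sigma$ with $\sigma = \SET{w_i \mapsto s|_{p_i} \mid 1 \le i \le n}$, together with the explicit shape of $\psi = \varphi \land \bigwedge_{i=1}^n (s|_{p_i} = w_i)$. Both parts of the statement are of the same form, so I would prove them in parallel; writing $e_i$ for either $s|_{p_j}$ (Part~1) or a value $v$ (Part~2), and $f_i$ for the corresponding $w_j$ or $v$, the goal in each case is $\vDash_\xM (\ECO{\vec{x}}{\varphi}) \Rightarrow (s|_{p_i} = e_i)$ iff $\vDash_\xM (\ECO{\vec{y}}{\psi}) \Rightarrow (w_i = f_i)$. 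The key observation is that applying $\sigma$ sends $w_i = f_i$ to $s|_{p_i} = e_i$ (since $\sigma(w_i) = s|_{p_i}$, and $\sigma$ fixes values, and in Part~1 $\sigma(w_j) = s|_{p_j}$), so everything reduces to showing that $\sigma$ transports implications correctly across the $\Leftrightarrow$ of Lemma~\ref{lem:induced-validity-of-representative-substitutions}.

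For the forward direction, suppose $\vDash_\xM (\ECO{\vec{x}}{\varphi}) \Rightarrow (w_i = f_i)\sigma$. I would substitute $\sigma$ throughout and chain with Lemma~\ref{lem:induced-validity-of-representative-substitutions}: from $(\ECO{\vec{y}}{\psi})\sigma \Leftrightarrow (\ECO{\vec{x}}{\varphi})$ and the hypothesis we get $\vDash_\xM (\ECO{\vec{y}}{\psi})\sigma \Rightarrow (w_i = f_i)\sigma$, i.e.\ $\vDash_\xM ((\ECO{\vec{y}}{\psi}) \Rightarrow (w_i = f_i))\sigma$. Now I need to lift this from $(\cdot)\sigma$ to $(\cdot)$ itself; this is where the structure of $\psi$ does the work. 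Since $\psi$ contains the conjunct $s|_{p_i} = w_i$, any valuation $\rho$ making $(\ECO{\vec{y}}{\psi})$ true (via some witnessing values for $\vec y$) already forces $\rho(w_i)$ to equal $\inter{s|_{p_i}}$, so the effect of applying $\sigma$ (which replaces $w_i$ by $s|_{p_i}$) is invisible under any such $\rho$ — more precisely, for a valuation $\rho$ with $\vDash_{\xM,\rho}(\ECO{\vec{y}}{\psi})$ one shows $\inter{\cdot}_{\xM,\rho}$ and $\inter{\cdot\sigma}_{\xM,\rho}$ agree on the relevant formulas. This lets me conclude $\vDash_\xM (\ECO{\vec{y}}{\psi}) \Rightarrow (w_i = f_i)$. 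The converse direction is symmetric but easier: instantiating a universally valid implication under $\sigma$ stays valid, so $\vDash_\xM (\ECO{\vec{y}}{\psi}) \Rightarrow (w_i = f_i)$ gives $\vDash_\xM (\ECO{\vec{y}}{\psi})\sigma \Rightarrow (w_i = f_i)\sigma$, and then Lemma~\ref{lem:induced-validity-of-representative-substitutions} converts the antecedent back to $(\ECO{\vec{x}}{\varphi})$ and the consequent is literally $s|_{p_i} = e_i$.

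The main obstacle I anticipate is the careful bookkeeping around the existential quantifiers and the bound variables $\vec y = \vec x \cup X$: applying $\sigma$ (whose domain is $\SET{w_1,\dots,w_n} = Y$, disjoint from $\SET{\vec y}$ and from the free variables) must be checked to commute with the existential-witness choice in Definition~\ref{def:validity-satisfiability-existential-constraints}, and one must verify that $w_i \in \FVar((\ECO{\vec{y}}{\psi}) \Rightarrow (w_i = f_i))$ is handled consistently so that $(\cdot)\sigma$ is well-formed and the free/bound variable conditions are respected throughout. Once the substitution-vs-valuation agreement under $\vDash_{\xM,\rho}(\ECO{\vec{y}}{\psi})$ is established cleanly, both parts follow by the same two-line argument, with Part~2 being the special case $e_i = f_i = v$.
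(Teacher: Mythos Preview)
Your approach differs from the paper's, which is more direct and does not use Lemma~\ref{lem:induced-validity-of-representative-substitutions} at all. The paper works at the quantifier-free level throughout. For the forward direction, given any valuation $\rho$ with $\vDash_{\xM,\rho}\psi$, one has $\vDash_{\xM,\rho}\varphi$ (hence $\vDash_{\xM,\rho}\ECO{\vec x}{\varphi}$) and $\vDash_{\xM,\rho}(s|_{p_i}=w_i)\land(s|_{p_j}=w_j)$, so the hypothesis yields $\vDash_{\xM,\rho}(w_i=w_j)$; since $w_i,w_j\notin\{\vec y\}$ this lifts to $\vDash_\xM(\ECO{\vec y}{\psi})\Rightarrow(w_i=w_j)$. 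For the backward direction, one first strips the existential (legitimate since $w_i,w_j\notin\{\vec y\}$), applies $\sigma$ to the \emph{quantifier-free} implication $\psi\Rightarrow(w_i=w_j)$---where no capture can occur---and simplifies $\psi\sigma$ to $\varphi$ conjoined with trivial equalities $s|_{p_k}=s|_{p_k}$.

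Your forward direction has a gap in the lifting step. You assert that for a valuation $\rho$ with $\vDash_{\xM,\rho}\ECO{\vec y}{\psi}$, the interpretations $\inter{\cdot}_{\xM,\rho}$ and $\inter{\cdot\sigma}_{\xM,\rho}$ agree on the relevant formulas. But $\sigma$ sends $w_i$ to $s|_{p_i}$, and when $s|_{p_i}\in X\subseteq\{\vec y\}$, the conjunct $s|_{p_i}=w_i$ inside $\psi$ only forces the \emph{witness} value of $s|_{p_i}$ (chosen for the existential) to equal $\rho(w_i)$---it says nothing about the external value $\rho(s|_{p_i})$, which is what $\inter{(w_i=f_i)\sigma}_{\xM,\rho}=\inter{s|_{p_i}=e_i}_{\xM,\rho}$ actually depends on. To repair this you would have to pass to a modified valuation $\rho'$ that agrees with the existential witness on $X$ and then argue at $\rho'$; but once you unfold that construction you are carrying out exactly the paper's direct argument, and the detour through Lemma~\ref{lem:induced-validity-of-representative-substitutions} becomes superfluous.
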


Next, we characterize equivalence of satisfiable pattern-general existentially constrained terms.
Before the characterization, we show useful fundamental properties of 
equivalence.

\begin{restatable}{lemma}{LemmaBasicDerivedPropertiesOfEquivalence} 
\label{lem:basic-derived-properties-of-equivalence}
Let $\CTerm{X}{s}{\vec{x}}{\varphi}, \CTerm{Y}{t}{\vec{y}}{\psi}$ be satisfiable existentially constrained terms such that $\CTerm{X}{s}{\vec{x}}{\varphi} \sim \CTerm{Y}{t}{\vec{y}}{\psi}$.
Then, all of the following statements hold:
(1) $\Pos(s) = \Pos(t)$,
(2) $\Pos_{X\cup \Val}(s) = \Pos_{Y\cup \Val}(t)$,
(3) $\Pos_{\xV\setminus X}(s) = \Pos_{\xV\setminus Y}(t)$,
(4) $\Pos_{\xF\setminus \Val}(s) = \Pos_{\xF \setminus \Val}(t)$,
(5) $s(p) = t(p)$ for any position $p\in \Pos_{\xF \setminus \Val}(s)$,
(6) for any position $p \in \Pos(s)$, if $s|_p \in X$ and $t|_p \in \Val$, then $\sigma(s|_p) = t|_p$ for any $X$-valued substitution $\sigma$ such that 
    $\sigma \vDash_\xM \ECO{\vec{x}}{\varphi}$, 
    and
(7) there exists a renaming $\theta\colon \Var(s) \setminus X \to \Var(t) \setminus Y$ 
    such that 
    $\theta(s[~]_{p_1,\ldots,p_n}) = t[~]_{p_1,\ldots,p_n}$, 
    where $\SET{p_1,\ldots,p_n} = \Pos_{X \cup \Val}(s)$ ($=\Pos_{Y \cup \Val}(t)$).
\end{restatable}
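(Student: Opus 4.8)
The plan is to derive all seven statements from the definition of equivalence together with Lemma~\ref{lem:basic-properties-of-X-valued-substitutions}, which lets us pick a \emph{concrete} $X$-valued substitution $\gamma$ witnessing satisfiability with $\Dom(\gamma)=X$ and $\Pos(s)=\Pos(s\gamma)$. The overarching idea is: use such a representative $\gamma$ for $\CTerm{X}{s}{\vec{x}}{\varphi}$, obtain from $\CTerm{X}{s}{\vec{x}}{\varphi}\subsetsim\CTerm{Y}{t}{\vec{y}}{\psi}$ a $Y$-valued $\delta$ with $s\gamma=t\delta$, and then read off structural information by comparing the two ground-ish terms $s\gamma$ and $t\delta$ position by position. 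The key observation is that a position $p$ of $s$ carries a symbol from $\xF\setminus\Val$ exactly when $(s\gamma)(p)\in\xF\setminus\Val$ (since $\gamma$ only substitutes values, and substituting a value never creates a non-value function symbol and never destroys one), whereas $p\in\Pos_{X\cup\Val}(s)$ iff $(s\gamma)|_p$ is a value; an analogous statement holds for $t$ and $\delta$. Since $s\gamma=t\delta$, the partition of positions into "value-headed after instantiation" versus "function-headed after instantiation" coincides, which immediately gives (1), (2), (4), and (5), and then (3) follows from (1) and (2) by complementation within $\Pos_{\xV}(\cdot)$; note $\Pos_{\xV\setminus X}(s)=\Pos(s)\setminus\Pos_{X\cup\Val}(s)$ minus the non-value function positions, i.e.\ $\Pos(s)\setminus(\Pos_{X\cup\Val}(s)\cup\Pos_{\xF\setminus\Val}(s))$, and symmetrically for $t$.

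For (6), fix $p$ with $s|_p\in X$ and $t|_p\in\Val$, and let $\sigma$ be any $X$-valued substitution respecting $\ECO{\vec{x}}{\varphi}$. By subsumption there is a $Y$-valued $\delta$ respecting $\ECO{\vec{y}}{\psi}$ with $s\sigma=t\delta$; then $\sigma(s|_p)=(s\sigma)|_p=(t\delta)|_p=t|_p$ because $t|_p\in\Val$ is a constant unaffected by $\delta$. For (7), having established (1)–(5), the multihole contexts $s[~]_{p_1,\dots,p_n}$ and $t[~]_{p_1,\dots,p_n}$ with $\{p_1,\dots,p_n\}=\Pos_{X\cup\Val}(s)=\Pos_{Y\cup\Val}(t)$ have the same set of positions, identical non-value function symbols at every function position (by (4),(5)), and at the remaining positions — which by (3) are exactly $\Pos_{\xV\setminus X}(s)=\Pos_{\xV\setminus Y}(t)$ — they both carry variables, from $\Var(s)\setminus X$ and $\Var(t)\setminus Y$ respectively. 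So define $\theta$ on $\Var(s)\setminus X$ by sending the variable of $s$ at such a position to the variable of $t$ at the same position; I must check this is well-defined and bijective. Well-definedness (a variable of $s\setminus X$ occurring at two positions $q,q'$ must have $t|_q=t|_{q'}$) and injectivity follow by choosing a representative $\gamma$ for $s$ that is moreover \emph{injective on} $\xV\setminus X$ in the sense that it still keeps those variables distinct — which it does, since $\gamma$ has domain $X$ and leaves $\Var(s)\setminus X$ untouched — and then using $s\gamma=t\delta$ at $q,q'$: if $t|_q=t|_{q'}=:z$, then $(t\delta)|_q=(t\delta)|_{q'}$ (as $z$ may be a $Y$-variable mapped by $\delta$), and hence $(s\gamma)|_q=(s\gamma)|_{q'}$, i.e.\ $s|_q=s|_{q'}$ since $s|_q,s|_{q'}\notin X$ are fixed by $\gamma$; symmetrically for injectivity, run the argument with a representative $\delta'$ for $t$ and $\supsetsim$. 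Surjectivity: every variable of $\Var(t)\setminus Y$ occurs at some position in $\Pos_{\xV\setminus Y}(t)=\Pos_{\xV\setminus X}(s)$, so it is hit.

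I would carry the steps out in the order (1),(2),(4),(5) together via the representative-substitution comparison, then (3) by set arithmetic on position sets, then (6) directly, then (7) using (1)–(5) plus a second symmetric representative argument for well-definedness/bijectivity. The main obstacle is the bookkeeping in (7): making precise that "instantiating by a value-valued substitution neither creates nor destroys non-value function symbols" and that the representative $\gamma$ from Lemma~\ref{lem:basic-properties-of-X-valued-substitutions} can be taken so that equalities $s|_q=s|_{q'}$ among $\xV\setminus X$-variables are \emph{reflected} (not just preserved) by $\gamma$ — here we only need that $\gamma$ fixes those variables, which it does by $\Dom(\gamma)=X$, so $s\gamma|_q=s\gamma|_{q'}$ is equivalent to $s|_q=s|_{q'}$ up to the values filled in at the deeper $X\cup\Val$-positions below $q$; one has to be mildly careful that $q,q'\in\Pos_{\xV}(s)$ are leaf positions so "below" is vacuous and the equivalence is clean. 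All other parts are essentially immediate from the definitions.
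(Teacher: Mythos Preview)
Your overall strategy matches the paper's: both pick a representative substitution via Lemma~\ref{lem:basic-properties-of-X-valued-substitutions} and compare instances using the definition of $\sim$. There is, however, one genuine imprecision in your treatment of (1)--(5). You write ``an analogous statement holds for $t$ and $\delta$,'' meaning $p\in\Pos_{Y\cup\Val}(t)\iff (t\delta)|_p\in\Val$ and $\Pos(t)=\Pos(t\delta)$; but the $\delta$ you obtain from $\subsetsim$ is merely $Y$-valued, not one with $\Dom(\delta)=Y$, so it may send a variable of $\xV\setminus Y$ to a value or even to a non-constant term, and both claimed equivalences can fail. From your $\gamma$ (with $\Dom(\gamma)=X$) together with this $\delta$, you only obtain \emph{one} inclusion of each equality---e.g.\ $\Pos(t)\subseteq\Pos(t\delta)=\Pos(s\gamma)=\Pos(s)$ and $\Pos_{Y\cup\Val}(t)\subseteq\Pos_{X\cup\Val}(s)$; the reverse inclusions require a representative $\delta'$ for $t$ with $\Dom(\delta')=Y$ combined with the direction $\supsetsim$. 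You already deploy exactly this two-direction argument in (7), so the fix is simply to make it explicit throughout (1)--(5), which is precisely what the paper's proof does. (Incidentally, in your (7) the explicit argument you spell out actually proves injectivity, and the ``symmetric'' one proves well-definedness---the labels are swapped.) With that correction your proof is essentially the paper's; the only organizational difference is that the paper proves (3) directly by contradiction and derives (4) from (1)--(3), whereas you derive (3) from (1), (2), (4).
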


\begin{proof}
We present the proof of (1);
all remaining cases are proven similarly.
    By \Cref{lem:basic-properties-of-X-valued-substitutions},
    we have $\sigma \vDash_{\xM} \ECO{\vec{x}}{\varphi}$
    and $\Pos(s) = \Pos(s\sigma)$ for some $X$-valued substitution $\sigma$.
    By assumption, we obtain
    a $Y$-valued substitution $\gamma$ such that 
    $\gamma \vDash_\xM \ECO{\vec{y}}{\psi}$
    and $s\sigma = t\gamma$.
    Thus, we have that $\Pos(t) \subseteq \Pos(t\gamma) = \Pos(s\sigma) =  \Pos(s)$.
    Similarly, $\Pos(s) \subseteq \Pos(t)$.
    Therefore, we have that $\Pos(s) = \Pos(t)$.
\qed
\end{proof}

Before proceeding to the actual characterization,
let us revisit why our naming ``pattern-general'' is quite natural.
A term $s$ is called \emph{most general} in a set $S \subseteq \xT(\xF, \xV)$ of terms
whenever for all $t \in S$ there exists a substitution $\sigma$ such that $s\sigma = t$.
We refer to \Cref{sec:missing-proofs} for the proof.

\begin{restatable}{lemma}{LemChacterizationOfPatternGeneral}
\label{lem:characterization-of-pattern-general}
Let $\CTerm{X}{s}{\vec{x}}{\varphi}$ be a satisfiable existentially constrained term.
Then, $\CTerm{X}{s}{\vec{x}}{\varphi}$ is pattern-general if and only if
$s$ is most general in $\SET{t \mid \CTerm{X}{s}{\vec{x}}{\varphi} \sim \CTerm{Y}{t}{\vec{y}}{\psi}}$.
\end{restatable}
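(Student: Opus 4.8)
The plan is to prove both directions by using the PG-transformation together with the characterization of equivalence already developed. For the ``only if'' direction, suppose $\CTerm{X}{s}{\vec{x}}{\varphi}$ is pattern-general. I need to show that for every $t$ with $\CTerm{X}{s}{\vec{x}}{\varphi} \sim \CTerm{Y}{t}{\vec{y}}{\psi}$ there is a substitution $\sigma$ with $s\sigma = t$. Fix such a $t$. By \Cref{lem:basic-derived-properties-of-equivalence}\,(7), there is a renaming $\theta\colon \Var(s)\setminus X \to \Var(t)\setminus Y$ with $\theta(s[~]_{p_1,\ldots,p_n}) = t[~]_{p_1,\ldots,p_n}$, where $\SET{p_1,\ldots,p_n} = \Pos_{X\cup\Val}(s) = \Pos_{Y\cup\Val}(t)$. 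Since $\CTerm{X}{s}{\vec{x}}{\varphi}$ is pattern-general, $s$ is $X$-linear and $\Val(s) = \varnothing$, so the positions $p_1,\ldots,p_n$ are exactly the positions of \emph{pairwise distinct} variables of $X$ in $s$; hence $s$ is a linear term in which, at each $p_i$, a distinct logical variable sits. Now define $\sigma$ to agree with $\theta$ on $\Var(s)\setminus X$ and to send $s|_{p_i}$ (the variable at $p_i$) to $t|_{p_i}$ for each $i$. This is well-defined precisely because the $s|_{p_i}$ are distinct variables and $\Dom(\theta)$ is disjoint from $X$. Then $s\sigma = t$ by combining the context equality from (7) with the chosen images at the holes. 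This shows $s$ is most general in the stated set.

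For the ``if'' direction, assume $s$ is most general in $\SET{t \mid \CTerm{X}{s}{\vec{x}}{\varphi} \sim \CTerm{Y}{t}{\vec{y}}{\psi}}$. I want to derive pattern-generality of $\CTerm{X}{s}{\vec{x}}{\varphi}$. The key observation is that $\PG(\CTerm{X}{s}{\vec{x}}{\varphi})$ is a member of that set: by \Cref{thm:correctness-of-PG} it is equivalent to $\CTerm{X}{s}{\vec{x}}{\varphi}$, and by \Cref{lem:basic-properties-of-PG}\,(1) it is pattern-general with underlying term $t = s[w_1,\ldots,w_n]_{p_1,\ldots,p_n}$ where $\SET{p_1,\ldots,p_n} = \Pos_{X\cup\Val}(s)$ and the $w_i$ are fresh distinct variables. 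By the most-generality assumption, there is a substitution $\sigma$ with $s\sigma = t$. Since $t$ is obtained from $s$ by replacing the subterms at $p_1,\ldots,p_n$ with fresh distinct variables and leaving the rest untouched, comparing $s\sigma$ with $t$ position by position forces: (a) for every position $q \in \Pos(s)$ not below any $p_i$, the symbol at $q$ is unchanged, and (b) for each $i$, $\sigma$ maps $s|_{p_i}$ to the fresh variable $w_i$. From (b), for distinct $i \ne j$ the images $w_i \ne w_j$ are distinct, and since $\Pos_{X\cup\Val}(s) = \SET{p_1,\ldots,p_n}$, every occurrence of a variable of $X$ and every value occurrence of $s$ is among these positions; as the $s|_{p_i}$ get mapped to pairwise distinct variables, no variable of $X$ can occur twice (it would have to appear at two $p_i$'s mapped to the same $w$, contradiction) and no $s|_{p_i}$ can be a value (a value is not in $\Dom(\sigma)$ in the sense needed, and $s\sigma|_{p_i}=w_i$ is a variable, not a value). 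Hence $s$ is $X$-linear and $\Val(s) = \varnothing$, i.e., $\CTerm{X}{s}{\vec{x}}{\varphi}$ is pattern-general.

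The main obstacle I anticipate is the bookkeeping in the ``if'' direction: I must argue carefully that $s\sigma = t$, where $t$ is the PG-transform, forces both $X$-linearity and value-freeness simultaneously, and this requires knowing precisely that $\Pos_{X\cup\Val}(s)$ is exactly the set of holed positions and that the $w_i$ are pairwise distinct and fresh (so they cannot coincide with any symbol already in $s$). The subtle point is ruling out the possibility that $\sigma$ ``collapses'' two positions or that a value position of $s$ is matched by $\sigma$ — this is handled by the observation that $\sigma$ must send $s|_{p_i}$ literally to the fresh variable $w_i$, and a value subterm cannot be sent by a substitution to a variable. I would also double-check the well-definedness of $\sigma$ in the ``only if'' direction, namely that the domains $\Var(s)\setminus X$ and $X$ do not interfere and that $X$-linearity guarantees the assignment $s|_{p_i}\mapsto t|_{p_i}$ is consistent; this is exactly where pattern-generality is used and is otherwise routine.
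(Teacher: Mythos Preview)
Your \emph{only-if} direction is essentially identical to the paper's: both invoke \Cref{lem:basic-derived-properties-of-equivalence}\,(2),(7) to get the renaming on $\Var(s)\setminus X$, then use pattern-generality to see that the $s|_{p_i}$ are pairwise distinct variables in $X$, allowing a well-defined extension to a substitution $\sigma$ with $s\sigma = t$.

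Your \emph{if} direction is correct but takes a genuinely different route from the paper. The paper argues by contrapositive and case analysis: assuming $s$ is not pattern-general, it explicitly constructs an equivalent constrained term with a strictly more general pattern part (replacing a single value occurrence by a fresh variable plus an equality constraint, or linearizing one non-linear $X$-variable with fresh copies plus equality constraints), contradicting most-generality. You instead use the PG-transformation as a single uniform witness: since $\PG(\CTerm{X}{s}{\vec{x}}{\varphi})$ is equivalent to the original (\Cref{thm:correctness-of-PG}), its pattern $t = s[w_1,\ldots,w_n]_{p_1,\ldots,p_n}$ lies in the set, so $s\sigma = t$ for some $\sigma$; then $(s|_{p_i})\sigma = w_i$ forces each $s|_{p_i}$ to be a variable (a value would be fixed by $\sigma$) and forces them to be pairwise distinct (since the $w_i$ are). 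Your argument is more streamlined and reuses machinery already in place, while the paper's is more self-contained in that it does not rely on \Cref{thm:correctness-of-PG}. Both are valid; your version has the advantage of avoiding the two-case split, at the cost of a forward dependency on the correctness of $\PG$.
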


\noindent
Note that $\CTerm{\SET{x}}{\m{g}(x)}{}{\m{false}}$
and $\CTerm{\SET{x}}{x}{}{\m{false}}$ are equivalent and pattern-general by definition.
Thus, the lemma does not hold if $\CTerm{X}{s}{\vec{x}}{\varphi}$ is not satisfiable.

We are now ready to present a characterization 
of equivalence of satisfiable pattern-general existentially constrained terms.

\begin{restatable}{theorem}{TheoremIIIxxviii} 
\label{thm:complete characterization of equivalence for most sim-general constrained terms}
Let $\CTerm{X}{s}{\vec{x}}{\varphi}, \CTerm{Y}{t}{\vec{y}}{\psi}$ be 
satisfiable pattern-general existentially constrained terms.
Then, $\CTerm{X}{s}{\vec{x}}{\varphi} \sim \CTerm{Y}{t}{\vec{y}}{\psi}$ if and only if there exists a renaming $\rho$ such that
    $s\rho = t$,
    $\rho(X) = Y$,    
        and
    $\vDash_\xM (\ECO{\vec{x}}{\varphi})\rho \Leftrightarrow (\ECO{\vec{y}}{\psi})$.
\end{restatable}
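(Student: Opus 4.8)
The \emph{if} direction is immediate from \Cref{lem:equivalence by renaming}, so the work is in the \emph{only-if} direction. Assume $\CTerm{X}{s}{\vec{x}}{\varphi} \sim \CTerm{Y}{t}{\vec{y}}{\psi}$ with both sides satisfiable and pattern-general. The plan is to build the required renaming $\rho$ directly from the structural data extracted by \Cref{lem:basic-derived-properties-of-equivalence}, then reduce the constraint-validity condition to \Cref{thm: characterization of equivalence using renaming}.

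First I would apply \Cref{lem:basic-derived-properties-of-equivalence} to get $\Pos(s) = \Pos(t)$, matching symbol positions $\Pos_{\xF\setminus\Val}(s) = \Pos_{\xF\setminus\Val}(t)$ with $s(p) = t(p)$ there, matching logical-variable-or-value positions $\Pos_{X\cup\Val}(s) = \Pos_{Y\cup\Val}(t) =: \SET{p_1,\dots,p_n}$, and a renaming $\theta\colon \Var(s)\setminus X \to \Var(t)\setminus Y$ with $\theta(s[~]_{p_1,\dots,p_n}) = t[~]_{p_1,\dots,p_n}$. Now pattern-generality pays off: since both terms are value-free, every $p_i$ carries a variable in $s$ (from $X$) and in $t$ (from $Y$); since both are LV-linear, the map $s|_{p_i} \mapsto t|_{p_i}$ is a well-defined bijection $\beta\colon X \to Y$ (injectivity of $\beta$ uses that distinct $p_i$ carry distinct $X$-variables in $s$ by $X$-linearity, and distinct $Y$-variables in $t$ by $Y$-linearity). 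Define $\rho := \theta \cup \beta$ extended by identity off $\Var(s)$; this is a renaming (the two pieces have disjoint domains $\Var(s)\setminus X$ and $X$ and disjoint codomains $\Var(t)\setminus Y$ and $Y$), and by construction $s\rho = t$ hole-by-hole and at each $p_i$, and $\rho(X) = Y$.

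It remains to show $\vDash_\xM (\ECO{\vec{x}}{\varphi})\rho \Leftrightarrow (\ECO{\vec{y}}{\psi})$. Here I would invoke \Cref{thm: characterization of equivalence using renaming}: it applies to the satisfiable existentially constrained terms $\CTerm{X}{s}{\vec{x}}{\varphi}$ and $\CTerm{Y}{t}{\vec{y}}{\psi}$ since we have just produced a renaming $\rho$ with $s\rho = t$; the hypothesis of that theorem is precisely $s\rho = t$, and its conclusion (given the assumed equivalence) yields both $\rho(X) = Y$ — already known — and $\vDash_\xM (\ECO{\vec{x}}{\varphi})\rho \Leftrightarrow (\ECO{\vec{y}}{\psi})$, as desired.

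The main obstacle is establishing that $\beta$ is genuinely a well-defined bijection $X \to Y$ compatible with the already-chosen $\theta$, and that $\rho = \theta \cup \beta$ acts consistently — in particular that no variable of $X$ accidentally needs to be sent somewhere $\theta$ already sends it, and that $\rho$ restricted to $\Var(s)$ is injective with image $\Var(t)$ so that it extends to a global renaming. This is exactly where value-freeness ($\Val(s) = \Val(t) = \varnothing$, forcing each $p_i$ to host a variable on both sides) and $X$-/$Y$-linearity (forcing the position-indexed variable listings to be repetition-free) are indispensable; without them $\beta$ could fail to be a function or fail to be injective, which is the content of the remark following \Cref{lem:characterization-of-pattern-general} about the non-pattern-general examples in \Cref{ex:subsumption and equivalence}. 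The remaining verifications ($s\rho = t$, $\rho(X) = Y$) are routine position-wise checks once $\rho$ is correctly assembled.
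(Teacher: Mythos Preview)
Your proposal is correct and follows essentially the same approach as the paper's proof: both extract from \Cref{lem:basic-derived-properties-of-equivalence} the renaming $\theta$ on $\Var(s)\setminus X$ and the position set $\Pos_{X\cup\Val}(s)=\Pos_{Y\cup\Val}(t)$, use value-freeness and LV-linearity to turn the positionwise correspondence $s|_{p_i}\mapsto t|_{p_i}$ into a bijection $X\to Y$, assemble $\rho$ as the union, verify $s\rho=t$, and then hand off the constraint equivalence and $\rho(X)=Y$ to \Cref{thm: characterization of equivalence using renaming}. Your discussion of why $\beta$ is a well-defined bijection is slightly more explicit than the paper's, but the argument is identical in substance.
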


\begin{proof}
The \textit{if} part follows from \Cref{lem:equivalence by renaming}.
We show the \textit{only-if} part.
By our assumption,
no value appears in $s$ or $t$ and hence $\Pos_{\Val}(s) = \Pos_{\Val}(t) = \varnothing$.
Using \Cref{lem:basic-derived-properties-of-equivalence},
let $\SET{p_1,\ldots,p_n} = \Pos_{X}(s) =\Pos_{Y}(t)$,
and $s|_{p_i} = z_i$ and $t|_{p_i} = w_i$ for $1 \leqslant i \leqslant n$.
As $\CTerm{X}{s}{\vec{x}}{\varphi}$ and $\CTerm{Y}{t}{\vec{y}}{\psi}$ are pattern-general,
$z_1,\ldots,z_n$ are pairwise distinct variables, and so are $w_1,\ldots,w_n$.
Let $\rho = \theta \cup \SET{ z_i \mapsto w_i \mid 1 \leq i \leq n}$.
Since $\theta$ is a renaming,
$\SET{z_1,\ldots,z_n} \cap (\Var(s) \setminus X) = \varnothing$, and
$\SET{w_1,\ldots,w_n} \cap (\Var(t) \setminus Y) = \varnothing$,
we know $\rho$ is well-defined and renaming.
Furthermore, $s\rho = (s[z_1,\ldots,z_n]_{p_1,\ldots,p_n})\rho 
= s\rho[z_1\rho,\ldots,z_n\rho]_{p_1,\ldots,p_n} 
= s\theta[w_1,\ldots,w_n]_{p_1,\ldots,p_n} 
= t[w_1,\ldots,w_n]_{p_1,\ldots,p_n} = t$.
The rest of the statements follow 
from this by \Cref{thm: characterization of equivalence using renaming}.
\qed
\end{proof}

\begin{example} 
\label{ex:equivalence of pattern-general terms}
Recall the equivalent existentially constrained terms 
of \Cref{ex:subsumption and equivalence}.
By applying the PG-transformation to each them, we obtain the following:
\[
\begin{array}{l}
\PG(\CTerm{\varnothing}{\mathsf{f}(1,1)}{}{}) 
= \CTerm{\SET{w_1,w_2}}{\mathsf{f}(w_1,w_2)}{}{(1 = w_1) \land (1 = w_2})\\
\PG(\CTerm{\SET{x,y}}{\mathsf{f}(x,y)}{}{(x = y) \land (x = 1)})\\
\quad = \CTerm{\SET{w_1,w_2}}{\mathsf{f}(w_1,w_2)}{}{(x = y) \land (x = 1) \land (x = w_1) \land (y = w_2})\\
\PG(\CTerm{\SET{y}}{\mathsf{f}(1,y)}{}{y = 1})\\
\quad = \CTerm{\SET{w_1,w_2}}{\mathsf{f}(w_1,w_2)}{y}{(y = 1) \land (1 = w_1) \land (y = w_2})\\
\PG(\CTerm{\SET{x}}{\mathsf{f}(x,x)}{}{x = 1})\\
\quad = \CTerm{\SET{w_1,w_2}}{\mathsf{f}(w_1,w_2)}{x}{(x = 1) \land (x = w_1) \land (x = w_2})
\end{array}
\]
It becomes trivial to see the equivalence of these terms by
\Cref{thm:complete characterization of equivalence for most sim-general constrained terms},
as all of these constraints are logically equivalent.
\end{example}

Together with \Cref{thm:correctness-of-PG},
we have a sound and complete characterization of equivalence
of satisfiable existentially constrained terms.

\begin{corollary}
\label{cor:sound and complete characterization of equivalence for PG terms}
Let $\CTerm{X}{s}{\vec{x}}{\varphi}, \CTerm{Y}{t}{\vec{y}}{\psi}$ be 
satisfiable existentially constrained terms.
Then, $\CTerm{X}{s}{\vec{x}}{\varphi} \sim \CTerm{Y}{t}{\vec{y}}{\psi}$ if and only if 
there exists a renaming $\rho$ such that
    $s'\rho = t'$,
    $\rho(X') = Y'$,    
        and
    $\vDash_\xM (\ECO{\pvec{x}'}{\varphi'})\rho \Leftrightarrow (\ECO{\pvec{y}'}{\psi'})$,
where $\PG(\CTerm{X}{s}{\vec{x}}{\varphi}) = \CTerm{X'}{s'}{\pvec{x}'}{\varphi'})$
and $\PG(\CTerm{Y}{t}{\vec{y}}{\psi}) = \CTerm{Y'}{t'}{\pvec{y}'}{\psi'})$.
\end{corollary}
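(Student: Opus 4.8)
The plan is to derive Corollary~\ref{cor:sound and complete characterization of equivalence for PG terms} directly from the machinery already established, chaining the PG-transformation with the characterization for pattern-general terms. First I would invoke \Cref{thm:correctness-of-PG} to obtain $\CTerm{X}{s}{\vec{x}}{\varphi} \sim \PG(\CTerm{X}{s}{\vec{x}}{\varphi}) = \CTerm{X'}{s'}{\pvec{x}'}{\varphi'}$ and likewise $\CTerm{Y}{t}{\vec{y}}{\psi} \sim \PG(\CTerm{Y}{t}{\vec{y}}{\psi}) = \CTerm{Y'}{t'}{\pvec{y}'}{\psi'}$. Since $\sim$ is transitive (and symmetric) — a fact that follows immediately from its definition via chaining the witnessing substitutions — we get $\CTerm{X}{s}{\vec{x}}{\varphi} \sim \CTerm{Y}{t}{\vec{y}}{\psi}$ if and only if $\CTerm{X'}{s'}{\pvec{x}'}{\varphi'} \sim \CTerm{Y'}{t'}{\pvec{y}'}{\psi'}$.

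Next I would note that by \Cref{lem:basic-properties-of-PG}\,\Bfnum{1} both $\CTerm{X'}{s'}{\pvec{x}'}{\varphi'}$ and $\CTerm{Y'}{t'}{\pvec{y}'}{\psi'}$ are pattern-general, and by \Cref{lem:basic-properties-of-PG}\,\Bfnum{2} they remain satisfiable (the original terms are satisfiable by hypothesis). Hence \Cref{thm:complete characterization of equivalence for most sim-general constrained terms} applies verbatim to the pair $\CTerm{X'}{s'}{\pvec{x}'}{\varphi'}$, $\CTerm{Y'}{t'}{\pvec{y}'}{\psi'}$, yielding that $\CTerm{X'}{s'}{\pvec{x}'}{\varphi'} \sim \CTerm{Y'}{t'}{\pvec{y}'}{\psi'}$ is equivalent to the existence of a renaming $\rho$ with $s'\rho = t'$, $\rho(X') = Y'$, and $\vDash_\xM (\ECO{\pvec{x}'}{\varphi'})\rho \Leftrightarrow (\ECO{\pvec{y}'}{\psi'})$. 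Combining this with the reduction of the previous paragraph gives exactly the claimed statement.

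The only point requiring slight care — and the step I would flag as the genuine (though minor) obstacle — is making sure the transitivity and symmetry of $\sim$ on existentially constrained terms is explicitly available; it is not stated as a named lemma in the excerpt, but it is routine from \Cref{def:subsumption and equivalence}, since $\subsetsim$ is reflexive and transitive by composing the witnessing $X$-valued substitutions, and $\sim$ is its symmetric closure. I would either insert a one-line remark establishing this or cite it inline. Everything else is a direct application of \Cref{thm:correctness-of-PG}, \Cref{lem:basic-properties-of-PG}, and \Cref{thm:complete characterization of equivalence for most sim-general constrained terms}, so the proof is essentially a three-line chaining argument with no computation involved.
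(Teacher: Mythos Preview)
Your proposal is correct and matches the paper's intended argument: the corollary is stated immediately after \Cref{thm:complete characterization of equivalence for most sim-general constrained terms} with the remark ``Together with \Cref{thm:correctness-of-PG}, we have a sound and complete characterization,'' and no further proof is given. Your explicit invocation of \Cref{lem:basic-properties-of-PG} and the transitivity of $\sim$ simply spells out what the paper leaves implicit.
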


Transformations of non-existentially constrained terms into \emph{value-free} ones can be seen in~\cite{Kop17,KN24jip}.
The PG-transformation in this section is an extension to that of existentially constrained terms into pattern-general ones (i.e., not only value-free but also LV-linear ones).

\section{General Characterization of Equivalence}
\label{sec:general-characterization-of-equivalence}

To check the equivalence of existentially constrained terms, it is often
convenient to apply the PG transformation and
use \Cref{thm:complete characterization of equivalence for most sim-general constrained terms}. However, in a theoretical analysis, one frequently needs to handle the
equivalence of non-pattern-general terms as well, where the PG transformation is
less suitable. Therefore, a general criterion for the equivalence of
existentially constrained terms is preferred.

In this section, we generalize \Cref{thm:complete characterization of
equivalence for most sim-general constrained terms} to arbitrary existentially
constrained terms, i.e., we characterize equivalence of existentially
constrained terms which are not assumed to be pattern-general. To this end, we
introduce new notions and notations.

\begin{definition} 
Let $\CTerm{X}{s}{\vec{x}}{\varphi}$ be an existentially constrained term.
Then, we define a binary relation $\sim_{\Pos_{X\cup \Val}(s)}$ over the positions in $\Pos_{X\cup \Val}(s)$ as follows:
$p \sim_{\Pos_{X\cup \Val}(s)} q$
if and only if
$\vDash_\xM ((\ECO{\vec{x}}{\varphi}) \Rightarrow s|_p = s|_q)$.
\end{definition}
We may omit the subscript $\Pos_{X\cup\Val}(s)$ of $\sim_{\Pos_{X\cup \Val}(s)}$ if it is clear from the context.

\begin{example}
\label{ex:sim over positions of X U Val}
Consider an existentially constrained term
$\CTerm{X}{s}{}{\varphi}$
where $X = \SET{x,x',y,y'}$, 
$s = \m{h}(x,x',0,y,y,y',0 \times 10)$ with $\m{h} \in \xFTe$, 
and $\varphi = (x = x') \land (x' = 0) \land (y = y')$.
Then, $\Pos_{X\cup \Val}(s) = \SET{1,2,3,4,5,6,7.1,7.2}$
and we have $1 \sim 2 \sim 3 \sim 7.1$ and $4 \sim 5 \sim 6$.
\end{example}

The proof of the following lemma is straightforward.

\begin{restatable}{lemma}{LemPropertiesOfEquivalenceOverXvarValPositions} 
\label{lem:properties-of-equivalence-over-X-var-val-positions}
Let $\CTerm{X}{s}{\vec{x}}{\varphi}$ be an existentially constrained term. Then,
\begin{enumerate}
\renewcommand{\labelenumi}{\Bfnum{\arabic{enumi}}}
    \item $\sim_{\Pos_{X\cup \Val}(s)}$ is an equivalence relation over the positions in $\Pos_{X\cup \Val}(s)$,
        and
    \item for any positions $p,q \in \Pos_{X\cup \Val}(s)$, if $s|_p = s|_q$, then $p \sim_{\Pos_{X\cup \Val}(s)} q$.
\end{enumerate}
\end{restatable}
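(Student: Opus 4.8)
The statement to prove is \Cref{lem:properties-of-equivalence-over-X-var-val-positions}, which asserts two things about $\sim_{\Pos_{X\cup\Val}(s)}$: that it is an equivalence relation, and that syntactic equality of the subterms at two positions implies that the positions are related. The plan is to verify reflexivity, symmetry, and transitivity directly from the defining condition $\vDash_\xM ((\ECO{\vec{x}}{\varphi}) \Rightarrow s|_p = s|_q)$, and then to handle the second statement by a short argument on the semantics of $\approx$.

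For statement \Bfnum{1}, I would proceed as follows. Reflexivity: for any $p \in \Pos_{X\cup\Val}(s)$ we must show $\vDash_\xM ((\ECO{\vec{x}}{\varphi}) \Rightarrow s|_p = s|_p)$; this holds since $s|_p = s|_p$ is valid (the interpretation of a term equals itself under any valuation), so the implication is valid regardless of the antecedent. Symmetry: if $\vDash_\xM ((\ECO{\vec{x}}{\varphi}) \Rightarrow s|_p = s|_q)$, then since $=$ is interpreted as a symmetric relation in $\xM$, we have $\vDash_\xM ((\ECO{\vec{x}}{\varphi}) \Rightarrow s|_q = s|_p)$. Here one should be slightly careful that $s|_p$ and $s|_q$ are indeed terms in $\xT(\xFTh,\xV)$ of a common theory sort so that $s|_p = s|_q$ is a well-formed logical constraint — this is exactly why we restrict to positions in $\Pos_{X\cup\Val}(s)$, where each subterm is either a logical variable (which has a theory sort) or a value, so equality makes sense. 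Transitivity: suppose $\vDash_\xM ((\ECO{\vec{x}}{\varphi}) \Rightarrow s|_p = s|_q)$ and $\vDash_\xM ((\ECO{\vec{x}}{\varphi}) \Rightarrow s|_q = s|_r)$. Unfolding the semantics of the existential constraint, for any valuation $\rho$ with $\vDash_{\xM,\rho}\ECO{\vec{x}}{\varphi}$ — that is, with $\vDash_{\xM,\rho}\varphi\kappa$ for some $\kappa = \SET{\vec{x}\mapsto\vec{v}}$ — the same $\kappa$ witnesses both premises, so $\inter{s|_p}_{\xM,\rho\cup\kappa'} = \inter{s|_q}_{\xM,\rho\cup\kappa'} = \inter{s|_r}_{\xM,\rho\cup\kappa'}$ for the appropriate extended valuation; transitivity of $=$ in $\xM$ then gives $\vDash_{\xM,\rho} ((\ECO{\vec{x}}{\varphi}) \Rightarrow s|_p = s|_r)$. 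I would phrase this cleanly by noting that, semantically, $p \sim q$ holds iff every valuation respecting $\ECO{\vec{x}}{\varphi}$ (suitably extended on $\vec{x}$) assigns equal interpretations to $s|_p$ and $s|_q$, which immediately inherits reflexivity, symmetry, and transitivity from the standard interpretation of equality.

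For statement \Bfnum{2}, suppose $s|_p = s|_q$ (syntactic equality). Then $s|_p = s|_q$ is logically the same formula as, say, $s|_p = s|_p$, which is valid in $\xM$; hence the implication $(\ECO{\vec{x}}{\varphi}) \Rightarrow s|_p = s|_q$ is valid, i.e. $p \sim_{\Pos_{X\cup\Val}(s)} q$. This is essentially reflexivity applied to two syntactically identical subterms, so it costs almost nothing once \Bfnum{1} is in place.

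None of this is genuinely hard; the only mild obstacle is bookkeeping around the interpretation of the existential constraint $\ECO{\vec{x}}{\varphi}$ and ensuring that the witnessing substitution $\kappa$ on the bound variables $\vec{x}$ can be chosen uniformly across the chain $p \sim q \sim r$ in the transitivity step. The clean way to dispatch this is to first record (or cite, since these are standard facts about the fixed model $\xM$ with its standard interpretation of $\Bool$ and of $=^\tau$) that for existential constraints, $\vDash_\xM (\chi \Rightarrow \alpha = \beta)$ is equivalent to: for every valuation $\rho$, if $\vDash_{\xM,\rho}\chi$ then there is a witness $\kappa$ with $\vDash_{\xM,\rho}\chi\kappa$ under which $\alpha$ and $\beta$ receive equal values — and then transitivity follows because one witness serves both conjuncts once we know $\vDash_{\xM,\rho}\chi$.
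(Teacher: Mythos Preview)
Your proof is correct and follows essentially the same approach as the paper's: directly verify reflexivity, symmetry, and transitivity from the defining validity condition, and obtain \Bfnum{2} from the validity of $s|_p = s|_q$ when the two subterms are syntactically equal. One simplification worth noting: your worry about choosing the witness $\kappa$ ``uniformly'' in the transitivity step is unnecessary, because by the definition of existentially constrained terms we have $\BVar(\ECO{\vec{x}}{\varphi}) \cap \Var(s) = \varnothing$, so the bound variables $\vec{x}$ do not occur in $s|_p$, $s|_q$, or $s|_r$; the consequent of the implication is therefore independent of $\kappa$, and transitivity follows immediately from transitivity of equality under each valuation $\rho$.
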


The equivalence class of a position $p \in \Pos_{X\cup\Val}(s)$ w.r.t.\ 
$\sim_{\Pos_{X\cup \Val}(s)}$ is denoted by $[p]_{\sim_{\Pos_{X\cup \Val}(s)}}$.
If it is clear from the context then we may simply denote it by $[p]_\sim$.
We further denote the representative of $[p]_\sim$ by $\hat{p}$.

\begin{definition} 
Let $\CTerm{X}{s}{\vec{x}}{\varphi}$ be a satisfiable existentially constrained term.
Let $\Pos_{\Val!}(s) = \{ p \in \Pos_{X\cup\Val}(s) \mid$  
there exists $v \in \Val$ such that
$\vDash_\xM ((\ECO{\vec{x}}{\varphi}) \Rightarrow (s|_p = v)) \}$.
\end{definition}

\begin{example}[Cont'd from \Cref{ex:sim over positions of X U Val}]
\label{ex:presentative of sim-equivalence class over positions of X U Val}
Since $\sim$ is an equivalence relation, we have a quotient set
$\Pos_{X\cup \Val}(s)/{\sim} = \SET{\SET{1,2,3,7.1},\SET{4,5,6},\SET{7.2}}$.
Let $1$ and $4$ be representatives of the equivalence classes
$\SET{1,2,3,7.1}$ and $\SET{4,5,6}$, respectively.
Then, for instance $\hat 3 = 1$ and $\hat 5 = 4$.
Furthermore, we have $\Pos_{\Val!}(s) = \SET{1,2,3,7.1,7.2}$.
\end{example}

It is easy to see the set $\Pos_{\Val!}(s)$ is consistent with 
$\sim_{\Pos_{X\cup \Val}(s)}$ as shown in the following lemma.

\begin{restatable}{lemma}{LemPropertiesOfPosExcl} 
\label{lem:properties-of-PosValExcl}
For any satisfiable existentially constrained term $\CTerm{X}{s}{\vec{x}}{\varphi}$, all of the following statements hold:
\begin{enumerate}
\renewcommand{\labelenumi}{\Bfnum{\arabic{enumi}}}
   \item For each position $p \in \Pos_{\Val!}(s)$, there exists a unique value $v \in \Val$ such that
    $\vDash_\xM ((\ECO{\vec{x}}{\varphi}) \Rightarrow (s|_p = v))$,
    
    \item for any positions $p,q \in \Pos_{X\cup\Val}(s)$, 
    if $p \in \Pos_{\Val!}(s)$ and $p \sim q$, then $q \in \Pos_{\Val!}(s)$,
        and
  \item 
    for any positions $p,q \in \Pos_{X\cup\Val}(s)$ and values $v,v' \in \Val$, 
    if $p \sim q$, $\vDash_\xM ((\ECO{\vec{x}}{\varphi}) \Rightarrow (s|_p = v))$, 
    and $\vDash_\xM ((\ECO{\vec{x}}{\varphi}) \Rightarrow (s|_q = v'))$,  then $v = v'$.
\end{enumerate}
\end{restatable}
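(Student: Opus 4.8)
The lemma has three parts, and I would treat them in the stated order since part~\Bfnum{1} is used by the later ones. For part~\Bfnum{1}, uniqueness of the value, suppose $p \in \Pos_{\Val!}(s)$ admits two values $v, v' \in \Val$ with $\vDash_\xM ((\ECO{\vec{x}}{\varphi}) \Rightarrow (s|_p = v))$ and $\vDash_\xM ((\ECO{\vec{x}}{\varphi}) \Rightarrow (s|_p = v'))$. Since $\CTerm{X}{s}{\vec{x}}{\varphi}$ is satisfiable, by \Cref{lem:basic-properties-of-X-valued-substitutions} there is an $X$-valued $\gamma$ with $\gamma \vDash_\xM \ECO{\vec{x}}{\varphi}$, $\Dom(\gamma) = X$, and $\Pos(s) = \Pos(s\gamma)$. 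Instantiating both implications with (the valuation underlying) $\gamma$ gives $s|_p\gamma = v$ and $s|_p\gamma = v'$ in the model, hence $v = v'$ as distinct values denote distinct model elements.

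For part~\Bfnum{2}, suppose $p \in \Pos_{\Val!}(s)$, say with witnessing value $v$, and $p \sim q$, i.e.\ $\vDash_\xM ((\ECO{\vec{x}}{\varphi}) \Rightarrow s|_p = s|_q)$. I would chain the two validities: for any valuation $\rho$ with $\vDash_{\xM,\rho} \ECO{\vec{x}}{\varphi}$ (after the existential witnessing substitution $\kappa$ is fixed), we get $\inter{s|_q}_{\xM,\rho\kappa} = \inter{s|_p}_{\xM,\rho\kappa} = v$, so $\vDash_\xM ((\ECO{\vec{x}}{\varphi}) \Rightarrow (s|_q = v))$, which together with $q \in \Pos_{X\cup\Val}(s)$ gives $q \in \Pos_{\Val!}(s)$. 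Some care is needed to phrase the semantic argument precisely: $\ECO{\vec{x}}{\varphi}$ is not literally a formula, so "$\vDash_\xM ((\ECO{\vec{x}}{\varphi}) \Rightarrow \chi)$" should be read through \Cref{def:validity-satisfiability-existential-constraints}, and I would unfold it to say: for every valuation $\rho$ and every $\vec{v}$ with $\vDash_{\xM,\rho} \varphi\kappa$ where $\kappa = \SET{\vec{x}\mapsto\vec{v}}$, we have $\inter{\chi}_{\xM,\rho\text{ via }\kappa}$ true. Part~\Bfnum{3} is then essentially immediate from \Bfnum{2} and \Bfnum{1}: if $p \sim q$ and $p, q$ have witnessing values $v, v'$ respectively, then by \Bfnum{2} applied to $p$ (and its value $v$) we also have $\vDash_\xM ((\ECO{\vec{x}}{\varphi}) \Rightarrow (s|_q = v))$, so $q$ has both $v$ and $v'$ as witnessing values, and \Bfnum{1}'s uniqueness forces $v = v'$.

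The only mild obstacle I anticipate is bookkeeping around the semantics of existential constraints: one must consistently interpret every statement of the form $\vDash_\xM((\ECO{\vec{x}}{\varphi}) \Rightarrow \chi)$ via \Cref{def:validity-satisfiability-existential-constraints}, tracking that the existential witness $\kappa$ for $\vec{x}$ may depend on the valuation $\rho$ and that $\chi$ (here $s|_p = s|_q$ or $s|_p = v$) must be evaluated under the combined assignment. Since $s|_p$ and $s|_q$ are subterms of $s$ at positions in $\Pos_{X\cup\Val}(s)$, they are either logical variables in $X$ or values, hence lie in $\xT(\xFTh,\xV)$, so these equalities are genuine logical constraints and the interpretations $\inter{\cdot}$ are well-defined. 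Everything else is routine model-theoretic chaining, so I would present parts \Bfnum{2} and \Bfnum{3} tersely once \Bfnum{1} is in place.
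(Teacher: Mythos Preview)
Your proposal is correct and follows essentially the same approach as the paper's proof: use satisfiability to obtain a witnessing valuation/substitution for \Bfnum{1}, chain the two valid implications for \Bfnum{2}, and combine them for \Bfnum{3}. The only cosmetic differences are that the paper invokes a valuation $\rho$ directly (rather than the substitution from \Cref{lem:basic-properties-of-X-valued-substitutions}) for \Bfnum{1}, and handles \Bfnum{3} by a single chaining $s|_p = v$, $s|_p = s|_q$, $s|_q = v'$ rather than routing through \Bfnum{2} and \Bfnum{1}; both organizations yield the same argument.
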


As mechanism to instantiate the term part of an existentially constrained term by
values under the equivalence, we introduce representative
substitutions.

\begin{definition} 
\label{def:representative substituions}
Let $\CTerm{X}{s}{\vec{x}}{\varphi}$ be a satisfiable existentially constrained term.
\begin{enumerate}
\renewcommand{\labelenumi}{\Bfnum{\arabic{enumi}}}
\item For each $p \in \Pos_{\Val!}(s)$,  there exists a unique value $v$
    such that $\xM \vDash (\ECO{\vec{x}}{\varphi}) \Rightarrow (s|_p = v)$
    by \Cref{lem:properties-of-PosValExcl}.
    We denote such $v$ by $\Val!(p)$.
\item We define a 
\emph{representative substitution} $\mu_{X}: X \to X \cup \Val$ of 
$\CTerm{X}{s}{\vec{x}}{\varphi}$ as follows:
\[
\mu_{X}(z) = 
\left\{
\begin{array}{ll}
\Val!(p) & \mbox{if $s(p) = z$ for some $p \in \Pos_{\Val!}(s)$},\\
s(\hat{p}) & \mbox{otherwise},
\end{array}
\right.
\]
where $\hat{p}$ is the representative of the equivalence class $[p]_\sim$.
Here note that if $s(p) = z$ for some $p \in \Pos_{\Val!}(s)$,
then $q \in \Pos_{\Val!}(s)$ for any $q$ such that $s(q) = z$
by \Cref{lem:properties-of-PosValExcl}~\Bfnum{2}.

\item
Provided it is clear from the contexts, 
for simplicity, we shorten $\mu_{X}(x)$ by $\hat{x}$ and so does $\mu_{X}(X)$ by $\hat{X}$.
We also put $\hat{v} = v$ for $v \in \Val$.
\end{enumerate}
\end{definition}

\begin{example}[Cont'd from \Cref{ex:presentative of sim-equivalence class over positions of X U Val}]
\label{ex:representative substituions}
We have $\Val!(p) = 0 \in \Val$ for $p \in \SET{1,2,3,7.1}$,
and $\Val!(7.2) = 10 \in \Val$.
Since $X = \SET{x,x',y,y'}$, the representative substitution is
$\mu_{X} = \SET{ x \mapsto 0, x' \mapsto 0, y \mapsto y, y' \mapsto y }$
and $\hat{X} = \SET{1,y}$.
\end{example}

\begin{example}[Cont'd from \Cref{ex:representative substituions}]
\label{ex:equivalence for general case II}
Let $\CTerm{Y}{t}{}{\psi}$ be another existentially constrained term
such that $Y=\SET{x,x',y,z}$, $t = \m{h}(z,z,z,x,x',x,z \times y)$, 
and $\psi = ((x \le x') \land (x' \le x) \land (z = 0) \land (y = (z + 2) \times 5))$.
Then, we have $\CTerm{X}{s}{}{\varphi} \sim \CTerm{Y}{t}{}{\psi}$.
Let us see how this equivalence can be observed using our notions.
We have 
$\Pos_{X\cup \Val}(s) = \SET{1,2,3,4,5,6,7.1,7.2} = \Pos_{X\cup \Val}(t)$,
${\sim}_{\Pos_{X\cup \Val}(s)} = {\sim}_{\Pos_{Y\cup \Val}(t)}$,
and $\Pos_{\Val!}(s) = \Pos_{\Val!}(t)$.
Let $1$ and $4$ be representatives of $\SET{1,2,3,7.2}$ and $\SET{4,5,6}$.
Then, we have 
$\mu_{X} = \SET{ x \mapsto 0, x' \mapsto 0, y \mapsto y, y' \mapsto y }$
and
$\mu_{Y} = \SET{ x \mapsto x, x' \mapsto x, y \mapsto 10, z \mapsto 0 }$.
We also have 
$\varphi\mu_X = ((0 = 0) \land (0 = 0) \land (y = y))$ and
$\psi\mu_Y = ((x \le x) \land (x \le x) \land (0 = 0) \land (10 = (0 + 2) \times 5))$.
The remaining key is the correspondence
between the sets $\hat X \cap X = \{ y \}$ and $\hat Y \cap Y = \{ z \}$.
\end{example}

\begin{restatable}{lemma}{LemInducedEqualitiesOfExpliciitSubstitutions} 
\label{lem:induced-equalities-of-expliciit-substitutions}
Let $\CTerm{X}{s}{\vec{x}}{\varphi}$ be a satisfiable 
existentially constrained term, and $\mu_{X}$ a representative substitution of $\CTerm{X}{s}{\vec{x}}{\varphi}$.
Then, $\vDash_\xM (\ECO{\vec{x}}{\varphi}) 
\Rightarrow (z = \mu_{X}(z))$ for any variable $z \in X$.
\end{restatable}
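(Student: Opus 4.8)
The goal is to show that for any $z \in X$, the existential constraint $\ECO{\vec{x}}{\varphi}$ entails $z = \mu_X(z)$. The plan is to split on the two cases in the definition of the representative substitution $\mu_X$ and to reduce the first case to \Cref{lem:properties-of-PosValExcl} and the second to the definition of the position equivalence $\sim_{\Pos_{X\cup\Val}(s)}$ together with the fact (\Cref{def:constrained-term}) that $X \subseteq \Var(s)$, so that every logical variable actually occurs at some position of $s$.

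\medskip
\noindent
First I would fix $z \in X$ and, since $z \in \Var(s)$, choose a position $p \in \Pos(s)$ with $s(p) = z$; note $p \in \Pos_X(s) \subseteq \Pos_{X\cup\Val}(s)$, so both $[p]_\sim$ and the question ``is $p \in \Pos_{\Val!}(s)$?'' make sense. If $p \in \Pos_{\Val!}(s)$, then by \Cref{def:representative substituions}~\Bfnum{1} (which uses \Cref{lem:properties-of-PosValExcl}~\Bfnum{1}) there is a unique value $v = \Val!(p)$ with $\vDash_\xM (\ECO{\vec{x}}{\varphi}) \Rightarrow (s|_p = v)$; since $s|_p = s(p) = z$ this is exactly $\vDash_\xM (\ECO{\vec{x}}{\varphi}) \Rightarrow (z = v)$, and by definition $\mu_X(z) = v$, so we are done in this case. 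The only subtlety is well-definedness: the value $\mu_X(z)$ is computed from \emph{some} position carrying $z$, so I should remark that by \Cref{lem:properties-of-PosValExcl}~\Bfnum{2} together with \Cref{lem:properties-of-equivalence-over-X-var-val-positions}~\Bfnum{2}, all positions carrying $z$ lie in $\Pos_{\Val!}(s)$ (if one does) and by \Cref{lem:properties-of-PosValExcl}~\Bfnum{3} (or \Bfnum{1}) they yield the same value; this is precisely the note already recorded in \Cref{def:representative substituions}~\Bfnum{2}.

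\medskip
\noindent
In the other case $p \notin \Pos_{\Val!}(s)$, we have $\mu_X(z) = s(\hat p)$ where $\hat p$ is the representative of $[p]_\sim$. Since $\hat p \in [p]_\sim$ we have $p \sim_{\Pos_{X\cup\Val}(s)} \hat p$, which by definition of $\sim$ means $\vDash_\xM (\ECO{\vec{x}}{\varphi}) \Rightarrow (s|_p = s|_{\hat p})$, i.e.\ $\vDash_\xM (\ECO{\vec{x}}{\varphi}) \Rightarrow (z = s(\hat p))$ — here I use that $s|_{\hat p} = s(\hat p)$, which holds because $\hat p \in \Pos_{X\cup\Val}(s)$ so $s|_{\hat p}$ is a variable or a value, hence equals the symbol $s(\hat p)$. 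Again $\mu_X(z) = s(\hat p)$, so this gives exactly the claim. One small point to address: if $\hat p \in \Pos_{\Val!}(s)$ one might worry the two clauses of $\mu_X$ clash, but in the present case $p \notin \Pos_{\Val!}(s)$ forces $\hat p \notin \Pos_{\Val!}(s)$ by \Cref{lem:properties-of-PosValExcl}~\Bfnum{2}, so the second clause is genuinely the one that applies and there is no ambiguity.

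\medskip
\noindent
I do not anticipate a serious obstacle here; the statement is essentially an unfolding of the definition of $\mu_X$ against the definitions of $\sim_{\Pos_{X\cup\Val}(s)}$ and $\Pos_{\Val!}(s)$. The only thing requiring care is the bookkeeping around well-definedness of $\mu_X(z)$ — making sure the argument does not secretly depend on \emph{which} position carrying $z$ was picked — and cleanly citing \Cref{lem:properties-of-PosValExcl}~\Bfnum{1}–\Bfnum{3} for that. Everything else is a two-line entailment once the right position is in hand.
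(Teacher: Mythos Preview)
Your proposal is correct and follows essentially the same approach as the paper: pick a position of $z$ in $s$, split on whether it lies in $\Pos_{\Val!}(s)$, and read off the required entailment from the definition of $\Val!$ in the first case and from the definition of $\sim_{\Pos_{X\cup\Val}(s)}$ in the second. If anything, your handling is slightly more careful than the paper's, since you explicitly justify that the case split and the value of $\mu_X(z)$ do not depend on which position carrying $z$ is chosen.
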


Let $\theta \subseteq X\times Y$ be a binary relation,
and $\tilde{X} \subseteq X$, and $\tilde{Y} \subseteq Y$.
Then, $\theta|_{\tilde{X}} =  \SET{ \langle x, y \rangle \in \theta  \mid x \in \tilde{X} }$
is denoted by $\theta|_{\tilde{X}}: \tilde{X} \to \tilde{Y}$
and used as a function
if it is a function from $\tilde{X}$ to $\tilde{Y}$,
i.e., for any $x \in \tilde{X}$ there exists a unique $y \in \tilde{Y}$
such that $\langle x, y \rangle \in \theta$.
Note that if $\tilde{X} \subseteq \xV$ and $\tilde{Y} \subseteq \xT(\xF,\xV)$, then $\theta|_{\tilde{X}}$ is a substitution.

We now present key properties required to establish our general characterization.

\begin{restatable}{lemma}{LemInducedPropertisOfEuivalence} 
\label{lem:induced-propertis-of-equivalence}
Let $\CTerm{X}{s}{\vec{x}}{\varphi}, \CTerm{Y}{t}{\vec{y}}{\psi}$ be satisfiable existentially constrained terms such that $\CTerm{X}{s}{\vec{x}}{\varphi} \sim \CTerm{Y}{t}{\vec{y}}{\psi}$.
Let $\Pos_{X\cup\Val}(s) = \Pos_{Y\cup\Val}(t) = \SET{p_1,\ldots,p_n}$.
Then, all of the following statements hold:
\begin{enumerate}
\renewcommand{\labelenumi}{\Bfnum{\arabic{enumi}}}
    \item for any $i,j \in \SET{1,\ldots,n}$, 
        $\xM \vDash (\ECO{\vec{x}}{\varphi}) \Rightarrow (s|_{p_i} = s|_{p_j})$
        if and only if  
        $\xM \vDash (\ECO{\vec{y}}{\psi}) \Rightarrow (t|_{p_i} = t|_{p_j})$,
    \item for any $i \in \SET{1,\ldots,n}$ and $v \in \Val$, 
        $\xM \vDash (\ECO{\vec{x}}{\varphi}) \Rightarrow (s|_{p_i} = v)$
        if and only if 
        $\xM \vDash (\ECO{\vec{y}}{\psi}) \Rightarrow (t|_{p_i} = v)$, and
    \item  $\xM \vDash (\ECO{\vec{x}}{\varphi})\mu_X\theta \Leftrightarrow (\ECO{\vec{y}}{\psi})\mu_Y$, 
    let ${\sim} = {\sim}_{\Pos_{X\cup \Val}(s)} = {\sim}_{\Pos_{X\cup \Val}(t)}$, 
    and $\mu_X,\mu_Y$ be representative substitutions of
     $\CTerm{X}{s}{\vec{x}}{\varphi}$ and $\CTerm{Y}{t}{\vec{y}}{\psi}$, respectively, 
    based on the same representative for
    each equivalence class $[p_i]_\sim$ ($1 \le i \le n$), and
    we have 
    $\vDash_\xM (\ECO{\vec{x}}{\varphi})\mu_X\theta|_{\tilde{X}}
    \Leftrightarrow (\ECO{\vec{y}}{\psi})\mu_Y$
    with a renaming $\theta|_{\tilde{X}} \colon \tilde{X} \to \tilde{Y}$,
    where $\theta =  \SET{ \langle s|_{p_i}, t|_{p_i} \rangle \mid 1 \le i \le n }$,
    $\tilde{X} = \hat{X} \cap X$, and $\tilde{Y} = \hat{Y}\cap Y$.
\end{enumerate}
\end{restatable}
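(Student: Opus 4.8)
The plan is to prove the three items in order, using the structural consequences of equivalence already packaged in Lemma~\ref{lem:basic-derived-properties-of-equivalence} together with the arguments behind Theorem~\ref{thm:complete characterization of equivalence for most sim-general constrained terms}. For item~\Bfnum{1}, I would fix $i,j$ and assume $\xM \vDash (\ECO{\vec{x}}{\varphi}) \Rightarrow (s|_{p_i} = s|_{p_j})$. Picking any $X$-valued $\sigma$ respecting $\ECO{\vec{x}}{\varphi}$ (which exists by Lemma~\ref{lem:basic-properties-of-X-valued-substitutions}), equivalence yields a $Y$-valued $\gamma$ respecting $\ECO{\vec{y}}{\psi}$ with $s\sigma = t\gamma$; since $\sigma(s|_{p_i}) = \sigma(s|_{p_j})$, reading off positions $p_i,p_j$ gives $\gamma(t|_{p_i}) = \gamma(t|_{p_j})$. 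Ranging over all such $\gamma$ (using that every $Y$-valued substitution respecting $\ECO{\vec{y}}{\psi}$ arises this way, by the symmetric direction of equivalence) shows $\xM \vDash (\ECO{\vec{y}}{\psi}) \Rightarrow (t|_{p_i} = t|_{p_j})$; the converse is symmetric. Item~\Bfnum{2} is analogous, replacing $s|_{p_j}$ by a fixed value $v$ and using that $\gamma$ fixes values. Both items are really the general-position analogues of Lemma~\ref{lem:X or val subterms equality from constraints}, so I expect them to be short.

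The substance is in item~\Bfnum{3}. First I would record that by item~\Bfnum{1} the relations ${\sim}_{\Pos_{X\cup\Val}(s)}$ and ${\sim}_{\Pos_{Y\cup\Val}(t)}$ coincide on the common position set $\SET{p_1,\dots,p_n} = \Pos_{X\cup\Val}(s) = \Pos_{Y\cup\Val}(t)$ (the equality of position sets comes from Lemma~\ref{lem:basic-derived-properties-of-equivalence}~(2)), so choosing the same representative $\hat{p_i}$ for each class makes sense; by item~\Bfnum{2} we also get $\Pos_{\Val!}(s) = \Pos_{\Val!}(t)$ and $\Val!$ agrees on them, so $\mu_X$ and $\mu_Y$ are built compatibly. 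Next I would check that $\theta = \SET{\langle s|_{p_i}, t|_{p_i}\rangle \mid 1 \le i \le n}$ is well-defined as a relation and that $\theta|_{\tilde X}$ with $\tilde X = \hat X \cap X$, $\tilde Y = \hat Y \cap Y$ is a bijection from $\tilde X$ to $\tilde Y$: well-definedness as a function needs that $s|_{p_i} = s|_{p_j}$ forces $t|_{p_i} = t|_{p_j}$, which is exactly Lemma~\ref{lem:properties-of-equivalence-over-X-var-val-positions}~\Bfnum{2} applied on the $t$-side combined with item~\Bfnum{1}; injectivity is the same argument with the roles of $s$ and $t$ swapped; and the image landing in $\tilde Y$ (not hitting values, and hitting exactly the representatives that are variables) follows from item~\Bfnum{2} and the definition of $\mu_Y$ together with Lemma~\ref{lem:basic-derived-properties-of-equivalence}~(6).

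For the validity statement $\vDash_\xM (\ECO{\vec{x}}{\varphi})\mu_X\theta|_{\tilde X} \Leftrightarrow (\ECO{\vec{y}}{\psi})\mu_Y$ I would argue both implications by the same valuation-chasing technique used in the proof of Theorem~\ref{thm: characterization of equivalence using renaming}. For the forward direction, take a valuation $\rho$ with $\vDash_{\xM,\rho} (\ECO{\vec{x}}{\varphi})\mu_X\theta|_{\tilde X}$; build an $X$-valued substitution $\sigma$ by composing $\rho$, the renaming $\theta|_{\tilde X}$, and $\mu_X$, and use Lemma~\ref{lem:induced-equalities-of-expliciit-substitutions} ($\vDash_\xM (\ECO{\vec{x}}{\varphi}) \Rightarrow (z = \mu_X(z))$) to see that $\sigma \vDash_\xM \ECO{\vec{x}}{\varphi}$; then equivalence gives a $Y$-valued $\gamma$ with $\gamma \vDash_\xM \ECO{\vec{y}}{\psi}$ and $s\sigma = t\gamma$; comparing values at the positions $p_1,\dots,p_n$ shows $\gamma$ agrees with the composition of $\mu_Y$ and the corresponding valuation, so Lemma~\ref{lem:induced-equalities-of-expliciit-substitutions} on the $t$-side yields $\vDash_{\xM,\rho'} (\ECO{\vec{y}}{\psi})\mu_Y$ for the appropriate $\rho'$; tracking that $\rho$ and $\rho'$ agree on the relevant free variables (which is where the identification $\tilde X \leftrightarrow \tilde Y$ via $\theta|_{\tilde X}$ is used) closes the direction. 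The backward direction is symmetric, swapping $s$ with $t$ and $\mu_X$ with $\mu_Y$. The main obstacle I anticipate is precisely this bookkeeping of which variables are free versus quantified after applying $\mu_X$ and $\theta|_{\tilde X}$: one must verify that $\FVar$ of the transformed constraints is contained in the expected sets and that the two valuations $\rho,\rho'$ can be aligned on them, so that "$\vDash_{\xM,\rho}$ on one side" transfers to "$\vDash_{\xM,\rho}$ on the other side" rather than merely to satisfiability. Everything else is a routine composition of substitutions together with the already-established structural lemmas.
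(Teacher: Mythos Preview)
Your route differs from the paper's throughout: the paper applies the PG-transformation to both constrained terms, invokes Theorem~\ref{thm:complete characterization of equivalence for most sim-general constrained terms} to obtain a renaming $\delta$ between the linearized (pattern-general) terms, and then transfers everything back via Lemmas~\ref{lem:X or val subterms equality from constraints} and~\ref{lem:induced-validity-of-representative-substitutions}; you instead argue directly from the definition of~$\sim$. For items~\Bfnum{1} and~\Bfnum{2} your direct semantic argument is correct and arguably cleaner than the paper's detour (one wording fix: start from an arbitrary $Y$-valued $\gamma$ respecting $\ECO{\vec{y}}{\psi}$ and use the $\supsetsim$ direction to produce~$\sigma$, rather than ``ranging over all such $\gamma$'' obtained from the forward direction).

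For item~\Bfnum{3}, however, your well-definedness check for $\theta|_{\tilde X}$ has a real error. You assert that $s|_{p_i} = s|_{p_j}$ forces $t|_{p_i} = t|_{p_j}$ via Lemma~\ref{lem:properties-of-equivalence-over-X-var-val-positions}~\Bfnum{2} together with item~\Bfnum{1}, but item~\Bfnum{1} only yields the \emph{semantic} equality $\vDash_\xM (\ECO{\vec{y}}{\psi}) \Rightarrow (t|_{p_i} = t|_{p_j})$, not syntactic equality. Concretely, for $\CTerm{\{x\}}{\m{f}(x,x)}{}{\m{true}} \sim \CTerm{\{y,z\}}{\m{f}(y,z)}{}{y=z}$ one has $s|_1 = s|_2 = x$ yet $t|_1 = y \neq z = t|_2$, so $\theta = \{\langle x,y\rangle,\langle x,z\rangle\}$ is not a function on~$\tilde X=\{x\}$ in the naive sense. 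The intended reading (spelled out in the paragraph before the lemma) is that $\theta|_{\tilde X}\colon \tilde X\to\tilde Y$ sends each $x\in\tilde X$ to the \emph{unique $y\in\tilde Y$} with $\langle x,y\rangle\in\theta$; proving this uniqueness requires the extra step that whenever $t|_{p_i},t|_{p_j}\in\tilde Y$ with $p_i\sim p_j$, both coincide with $t|_{\hat p}$ at the common representative~$\hat p$. The paper's PG route sidesteps this subtlety because the renaming~$\delta$ it extracts lives between pattern-general (hence $X'$- and $Y'$-linear) terms and is manifestly well-defined; $\theta|_{\tilde X}$ is then recovered from~$\delta$ a posteriori. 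Once you repair this point, your valuation-chasing plan for the validity statement is viable; note though that Lemma~\ref{lem:induced-equalities-of-expliciit-substitutions} is needed when passing from $\gamma\vDash_\xM\ECO{\vec{y}}{\psi}$ to validity of $(\ECO{\vec{y}}{\psi})\mu_Y$ under~$\rho$ (the target side), not for establishing $\sigma\vDash_\xM\ECO{\vec{x}}{\varphi}$, which follows directly from how $\sigma$ is built.
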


\begin{proof}
Let $\PG(\CTerm{X}{s}{\vec{x}}{\varphi}) 
= (\CTerm{W}{s[w_1,\ldots,w_n]_{p_1,\ldots,p_n}}{\pvec{x}'}{\varphi'})$
and
$\PG(\CTerm{Y}{t}{\vec{y}}{\psi}) 
= (\CTerm{W'}{t[w'_1,\ldots,w'_n]_{p_1,\ldots,p_n}}{\pvec{y}'}{\psi'})$, 
where we let
$W = \SET{w_1,\ldots,w_n}$,
$\SET{\pvec{x}'} = \SET{\vec{x}}\cup X$,
$\varphi' = (\varphi \land \bigwedge_{i=1}^n (w_i = s|_{p_i}))$,
$W' = \SET{w_1',\ldots,w_n'}$,
$\SET{\pvec{y}'} = \SET{\vec{y}}\cup Y$,
and
$\psi' = (\psi \land \bigwedge_{i=1}^n (w'_i = t|_{p_i}))$.
Then, by our assumption and \Cref{thm:correctness-of-PG}, we have
\[
(\CTerm{W}{s[w_1,\ldots,w_n]_{p_1,\ldots,p_n}}{\pvec{x}'}{\varphi'})
\sim
(\CTerm{W}{t[w'_1,\ldots,w'_n]_{p_1,\ldots,p_n}}{\pvec{y}'}{\psi'}). 
\]
By \Cref{thm:complete characterization of equivalence for most sim-general constrained terms}, 
there is a renaming $\delta$ such that 
$s[w_1,\ldots,w_n]\delta = t[w_1',\ldots,w_n']$
and $\vDash_\xM (\ECO{\pvec{x}'}{\varphi'})\delta \Leftrightarrow \ECO{\pvec{y}'}{\psi'}$.
Thus, $\delta(w_i) = w_i'$. 
Then, using \Cref{lem:X or val subterms equality from constraints}~\Bfnum{1},
\[
\begin{array}{lcl}
 \vDash_\xM (\ECO{\vec{x}}{\varphi}) \Rightarrow (s|_{p_i} = s|_{p_j})
&\iff& \vDash_\xM (\ECO{\pvec{x}'}{\varphi'}) \Rightarrow (w_i = w_j)\\
&\iff& \vDash_\xM (\ECO{\pvec{x}'}{\varphi'})\delta \Rightarrow (w_i\delta = w_j\delta) \\
&\iff& \vDash_\xM (\ECO{\pvec{y}'}{\psi'}) \Rightarrow (w_i' = w_j')\\
&\iff& \vDash_\xM (\ECO{\pvec{y}}{\psi}) \Rightarrow (t|_{p_i} = t|_{p_j})\\
\end{array}
\]
This proves~\Bfnum{1}. 
\Bfnum{2} follows similarly, by
using \Cref{lem:X or val subterms equality from constraints}~\Bfnum{2}.
In order to show~\Bfnum{3}, take $\theta = \SET{ \langle s|_{p_i}, t|_{p_i} \rangle \mid 1 \le i \le n }$.
Then, by~\Bfnum{1},\,\Bfnum{2}, and our assumption, we have
\begin{itemize}
\item $\mu_X(s|_{p_i}) \in \Val$ if and only if  $\mu_Y(t|_{p_i}) \in \Val$, and moreover, 
\item $s|_{p_i} \in \tilde{X}$ if and only if $p_i$ is a representative of an equivalence class $[p_i]$
if and only if $t|_{p_i} \in \tilde{Y}$.
\end{itemize}
Thus, 
$\{ \langle s|_{p_i}, t|_{p_i} \rangle \mid
p_i$ is a representative of an equivalence class 
$[p_i]$ such that $\mu_X(s|_{p_i}) \notin \Val \} = \theta|_{\tilde{X}}$
is a renaming from $\tilde{X}$ to $\tilde{Y}$.
Take an arbitrary but fixed function $\theta': X \to Y$
such that $\theta'(s|_{p_i}) \in \SET{ t|_{p_j} \mid p_i \sim p_j }$.
Then (i) $\theta|_{\tilde{X}}(\mu_X(s|_{p_i})) = \theta|_{\tilde{X}}(s|_{\hat{p_i}}))) 
= t|_{\hat{p_i}} = \mu_Y(t|_{p_j}) = \mu_Y(\theta'(s|_{p_i}))$.
Let $\sigma = \SET{ w_1 \mapsto s|_{p_1},\ldots, w_n \mapsto s|_{p_n}}$
and
$\sigma' = \SET{ w'_1 \mapsto t|_{p_1},\ldots, w'_n \mapsto t|_{p_n}}$.
Then, we have
(ii) $\mu_Y(\theta'(\sigma(w_i)))
= \mu_Y(\theta'(s|_{p_i}))
= \mu_Y(t|_{p_j})
= t|_{\hat{p_i}}
= \mu_Y(t|_{p_i})
= \mu_Y(\sigma'(w_i'))
= \mu_Y(\sigma'(\delta(w_i)))$.
Thus, for any valuation $\rho$,
\[
\begin{array}{lcl@{\qquad}l}
 \vDash_{\xM,\rho} (\ECO{\vec{x}}{\varphi})\mu_X\theta|_{\tilde{X}}
&\iff& \vDash_{\xM,\rho} (\ECO{\pvec{x}'}{\varphi'})\sigma\mu_X\theta|_{\tilde{X}}
& \mbox{by \Cref{lem:induced-validity-of-representative-substitutions}}\\
&\iff& \vDash_{\xM,\rho} (\ECO{\pvec{x}'}{\varphi'})\sigma\theta'\mu_Y
& \mbox{by (i)}\\
&\iff& \vDash_{\xM,\rho}(\ECO{\pvec{x}'}{\varphi'})\delta\sigma'\mu_Y
& \mbox{by (ii)}\\
&\iff& \vDash_{\xM,\rho} (\ECO{\pvec{y}'}{\psi'})\sigma'\mu_Y
& \mbox{by \Cref{lem:induced-validity-of-representative-substitutions}}\\
&\iff& \vDash_{\xM,\rho} (\ECO{\pvec{y}}{\psi}) \mu_Y
\end{array}
\]
Thus, 
$\vDash_{\xM} (\ECO{\vec{x}}{\varphi})\mu_X\theta|_{\tilde{X}} \Leftrightarrow (\ECO{\pvec{y}}{\psi}) \mu_Y$.
\qed
\end{proof}

We finally arrive at a general characterization of equivalence.

\begin{restatable}{theorem}{ThmCompleteCharacterizatinOfEquivalenceOfConstrainedTerms} 
\label{thm:complete characterizatin of equivalence of constrained terms}
Let $\CTerm{X}{s}{\vec{x}}{\varphi}, \CTerm{Y}{t}{\vec{y}}{\psi}$ be satisfiable existentially constrained terms, and $\Pos_{X\cup\Val}(s) = \SET{p_1,\ldots,p_n}$.
Then, $\CTerm{X}{s}{\vec{x}}{\varphi} \sim \CTerm{Y}{t}{\vec{y}}{\psi}$
if and only if
the following statements hold:
\begin{enumerate}
\renewcommand{\labelenumi}{\Bfnum{\arabic{enumi}}}
    \item $\Pos_{X\cup\Val}(s) = \Pos_{Y\cup\Val}(t)$,
    \item there exists a renaming $\rho \colon \Var(s) \setminus X \to \Var(t) \setminus Y$ such that 
    $\rho(s[\,]_{p_1,\ldots,p_n}) = t[\,]_{p_1,\ldots,p_n}$,
    \item for any $i,j \in \SET{1,\ldots,n}$, 
        $\vDash_\xM (\ECO{\vec{x}}{\varphi}) \Rightarrow (s|_{p_i} = s|_{p_j})$
        if and only if 
        $\vDash_\xM (\ECO{\vec{y}}{\psi}) \Rightarrow (t|_{p_i} = t|_{p_j})$,
    \item for any $i \in \SET{1,\ldots,n}$ and $v \in \Val$, 
        $\vDash_\xM (\ECO{\vec{x}}{\varphi}) \Rightarrow (s|_{p_i} = v)$
        if and only if 
        $\vDash_\xM (\ECO{\vec{y}}{\psi}) \Rightarrow (t|_{p_i} = v)$,
        and
    \item
    let ${\sim} = {\sim}_{\Pos_{X\cup \Val}(s)} = {\sim}_{\Pos_{X\cup \Val}(t)}$, 
    and $\mu_X,\mu_Y$ be representative substitutions of
     $\CTerm{X}{s}{\vec{x}}{\varphi}$ and $\CTerm{Y}{t}{\vec{y}}{\psi}$, respectively, 
    based on the same representative for
    each equivalence class $[p_i]_\sim$ ($1 \le i \le n$), and
    we have 
    $\vDash_\xM (\ECO{\vec{x}}{\varphi})\mu_X\theta|_{\tilde{X}}
    \Leftrightarrow (\ECO{\vec{y}}{\psi})\mu_Y$
    with a renaming $\theta|_{\tilde{X}} \colon \tilde{X} \to \tilde{Y}$,
    where $\theta =  \SET{ \langle s|_{p_i}, t|_{p_i} \rangle \mid 1 \le i \le n }$,
    $\tilde{X} = \hat{X} \cap X$, and $\tilde{Y} = \hat{Y}\cap Y$.
\end{enumerate}
\end{restatable}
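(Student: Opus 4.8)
The plan is to reduce the general statement to the pattern-general case via the PG-transformation, exactly as in the proof of \Cref{lem:induced-propertis-of-equivalence}. For the \emph{only-if} direction, most of the work is already done: assuming $\CTerm{X}{s}{\vec{x}}{\varphi} \sim \CTerm{Y}{t}{\vec{y}}{\psi}$, condition \Bfnum{1} is \Cref{lem:basic-derived-properties-of-equivalence}~(2), condition \Bfnum{2} is \Cref{lem:basic-derived-properties-of-equivalence}~(7), and conditions \Bfnum{3}, \Bfnum{4}, \Bfnum{5} are precisely statements \Bfnum{1}, \Bfnum{2}, \Bfnum{3} of \Cref{lem:induced-propertis-of-equivalence}. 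So the only-if direction is essentially a citation.

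The substance is the \emph{if} direction. First I would record the basic bookkeeping: \Bfnum{1} gives $\Pos_{X\cup\Val}(s)=\Pos_{Y\cup\Val}(t)=\SET{p_1,\ldots,p_n}$, so the positions line up; \Bfnum{3} guarantees ${\sim}_{\Pos_{X\cup\Val}(s)}={\sim}_{\Pos_{Y\cup\Val}(t)}$ (the defining implications coincide side by side) so it makes sense to speak of a common choice of representatives, and \Bfnum{4} together with \Cref{lem:properties-of-PosValExcl} gives $\Pos_{\Val!}(s)=\Pos_{\Val!}(t)$ with matching values $\Val!(p)$; hence the representative substitutions $\mu_X,\mu_Y$ in \Bfnum{5} are well-defined on the same index data and $\tilde X=\hat X\cap X$, $\tilde Y=\hat Y\cap Y$ correspond position-by-position via $\theta$. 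Next I would apply the PG-transformation to both sides. Writing $\PG(\CTerm{X}{s}{\vec{x}}{\varphi})=\CTerm{W}{s[\vec w]_{\vec p}}{\pvec x'}{\varphi'}$ and $\PG(\CTerm{Y}{t}{\vec{y}}{\psi})=\CTerm{W'}{t[\vec{w}']_{\vec p}}{\pvec y'}{\psi'}$ with $\varphi'=\varphi\land\bigwedge_i(s|_{p_i}=w_i)$ and $\psi'=\psi\land\bigwedge_i(t|_{p_i}=w_i')$, I intend to build the renaming $\rho'$ on the PG-terms out of the given data: set $\rho'(w_i)=w_i'$ and, on $\Var(s)\setminus X$, use the renaming $\rho$ from \Bfnum{2}. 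The term-equality $s[\vec w]\rho'=t[\vec{w}']$ is then immediate from \Bfnum{2}, and $\rho'(W)=W'$ is clear. The remaining obligation is the constraint equivalence $\vDash_\xM(\ECO{\pvec x'}{\varphi'})\rho'\Leftrightarrow(\ECO{\pvec y'}{\psi'})$, from which \Cref{lem:equivalence by renaming} yields equivalence of the PG-terms, and then \Cref{thm:correctness-of-PG} transports it back to $\CTerm{X}{s}{\vec{x}}{\varphi}\sim\CTerm{Y}{t}{\vec{y}}{\psi}$.

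The heart of the argument — and the step I expect to be the main obstacle — is deriving that constraint equivalence on the PG-terms from hypothesis \Bfnum{5}. The difficulty is that \Bfnum{5} is an equivalence between $(\ECO{\vec x}{\varphi})\mu_X\theta|_{\tilde X}$ and $(\ECO{\vec y}{\psi})\mu_Y$ — i.e., after collapsing logical variables to representatives and substituting known values — whereas what is needed is an equivalence between the \emph{uncollapsed} constraints $\varphi'\rho'$ and $\psi'$. I would bridge this by running the chain of equivalences in the proof of \Cref{lem:induced-propertis-of-equivalence}~\Bfnum{3} in reverse: using \Cref{lem:induced-validity-of-representative-substitutions} to pass between $\ECO{\pvec x'}{\varphi'}$ and $\ECO{\vec x}{\varphi}$ along $\sigma=\SET{w_i\mapsto s|_{p_i}}$ (resp.\ $\sigma'$ on the $Y$-side), using \Cref{lem:X or val subterms equality from constraints} to translate the side-conditions $w_i=s|_{p_i}$ correctly, and using \Cref{lem:induced-equalities-of-expliciit-substitutions} ($\vDash_\xM(\ECO{\vec x}{\varphi})\Rightarrow(z=\mu_X(z))$) to see that substituting by $\mu_X$ changes nothing modulo the constraint. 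The compatibility identities (i) $\theta|_{\tilde X}\circ\mu_X=\mu_Y\circ\theta'$ and (ii) $\mu_Y\circ\theta'\circ\sigma=\mu_Y\circ\sigma'\circ\rho'$ on $W$, established exactly as in that proof, are what make the middle of the chain go through; here \Bfnum{5}'s insistence that $\mu_X,\mu_Y$ use the \emph{same} representatives is essential. Assembling these, $\vDash_{\xM,\rho}(\ECO{\pvec x'}{\varphi'})\rho'\iff\vDash_{\xM,\rho}(\ECO{\vec x}{\varphi})\mu_X\theta|_{\tilde X}\iff\vDash_{\xM,\rho}(\ECO{\vec y}{\psi})\mu_Y\iff\vDash_{\xM,\rho}(\ECO{\pvec y'}{\psi'})$ for every valuation $\rho$, which is the required constraint equivalence. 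A secondary technical point to watch is well-definedness of $\rho'$ as an honest renaming — one must check the fresh $w_i,w_i'$ do not clash with $\Var(s)\setminus X$, $\Var(t)\setminus Y$, which follows from freshness in the PG-transformation and the fact that $\rho$ in \Bfnum{2} already maps $\Var(s)\setminus X$ bijectively onto $\Var(t)\setminus Y$.
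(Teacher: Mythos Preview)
Your proposal is correct and follows essentially the same route as the paper: both directions cite \Cref{lem:basic-derived-properties-of-equivalence} and \Cref{lem:induced-propertis-of-equivalence} for \emph{only-if}, and for \emph{if} both apply $\PG$ to each side, assemble the renaming $\rho' = \rho \cup \SET{w_i \mapsto w_i'}$ (the paper calls it $\nu = \rho \cup \xi$), and reduce to the pattern-general characterization via the constraint-equivalence chain driven by \Cref{lem:induced-equalities-of-expliciit-substitutions} and the representative substitutions. The only organizational difference is that the paper obtains the intermediate equivalences $(\ECO{\vec{x}}{\varphi})\mu_X\sigma|_{\tilde X} \Leftrightarrow (\ECO{\pvec{x}'}{\varphi'})\mu_{X'}$ (and its $Y$-counterpart) by invoking \Cref{lem:induced-propertis-of-equivalence} as a black box on the known equivalences $\CTerm{X}{s}{\vec x}{\varphi}\sim\PG(\CTerm{X}{s}{\vec x}{\varphi})$, whereas you propose to rederive that link directly from \Cref{lem:induced-validity-of-representative-substitutions}; both work, and your version is arguably more transparent. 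One small point to tighten: in your displayed chain the outer equivalences cross between formulas whose free variables live in $\tilde X,\tilde Y$ versus $W,W'$, so make explicit (as the paper does by passing through $(\ECO{\pvec x'}{\varphi'})\mu_{X'}\nu$ and $(\ECO{\pvec y'}{\psi'})\mu_{Y'}$) the renamings that align these variable sets before asserting ``$\iff$ for all valuations''.
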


\begin{proof}[Sketch]
The \textit{only-if} part follows \Cref{lem:basic-derived-properties-of-equivalence} 
and \Cref{lem:induced-propertis-of-equivalence}.
The \textit{if} part is shown by applying the PG-transformation
to $\CTerm{X}{s}{\vec{x}}{\varphi}$ and $\CTerm{Y}{t}{\vec{y}}{\psi}$,
and using \Cref{thm:complete characterization of equivalence for most sim-general constrained terms}.
For this 
\Cref{lem:induced-equalities-of-expliciit-substitutions}
and 
\Cref{lem:induced-propertis-of-equivalence}
are needed.
\qed
\end{proof}

\section{Conclusion}
\label{sec:conclusion}

In this paper, we introduced the notion of existentially constrained terms
and provided characterizations of their equivalence. We showed
that original formalization of constrained terms can be embedded into
existentially constrained terms, and that equivalence between constrained
terms coincides with the equivalence of their embedded counterparts.

We first presented a characterization of equivalent existentially
constrained terms in the case where the term part of one is a renaming of
the other. Then, we introduced pattern-general existentially constrained terms
and the PG-transformation, demonstrating that the
PG-transformation maps each existentially constrained term to an equivalent
pattern-general one.

We have shown a characterization of equivalent pattern-general existentially
constrained terms, which in turn yields a sound and complete
characterization of equivalent existentially constrained terms via the
PG-transformation. Finally, we also established a sound and complete
characterization of equivalence for existentially constrained terms in general,
independent of the notion of pattern-general existentially constrained
terms.

Our future work is to apply our results on existentially constrained terms to other areas of the LCTRS formalism.
This includes investigating their benefits for automated reasoning by LCTRS tools.

\bibliographystyle{splncs04}
\bibliography{biblio}

@string{proc = "Proceedings of the "}

@string{cade = " CADE"}

@string{IJCAR = " IJCAR"}

@string{lipics = "LIPIcs"}

@string{fscd = " FSCD"}

@string{lopstr = " LOPSTR"}

@string{vstte= " VSTTE"}

@string{aplas= " APLAS"}

@string{frocos= " FroCoS"}

@string{lnai = "Lecture Notes in Artificial Intelligence"}

@string{lncs = "Lecture Notes in Computer Science"}

@book{BN98,
  title = {Term Rewriting and All That},
  author = {Franz Baader and Tobias Nipkow},
  year = {1998},
  doi = {10.1145/505863.505888},
  publisher = {Cambridge University Press},
}

@book{O02,
  title = {Advanced Topics in Term Rewriting},
  author = {Enno Ohlebusch},
  year = {2002},
  doi = {10.1007/978-1-4757-3661-8},
  publisher = {Springer},
}

@article{FKN17tocl,
 author = {Carsten Fuhs and Cynthia Kop and Naoki Nishida},
 title = {Verifying Procedural Programs via Constrained Rewriting Induction},
 journal = {ACM Transactions on Computational Logic},
 volume = {18},
 number = {2},
 pages = {14:1--14:50},
 year = {2017},
 doi = {10.1145/3060143},
}

@article{K16,
 author    = "Cynthia Kop",
 title     = "Termination of {LCTRSs}",
 journal   = "CoRR",
 volume    = "abs/1601.03206",
 year      = 2016,
 doi       = "10.48550/ARXIV.1601.03206"
}

@inproceedings{WM21,
 author    = "Sarah Winkler and Georg Moser",
 editor    = "Maribel Fern{\'a}ndez",
 title     = "Runtime Complexity Analysis of Logically Constrained
              Rewriting",
 booktitle = proc # "30th" # lopstr,
 series    = lncs,
 volume    = 12561,
 pages     = "37--55",
 year      = 2021,
 doi       = "10.1007/978-3-030-68446-4_2"
}

@inproceedings{SM23,
 author    = "Jonas Sch{\"o}pf and Aart Middeldorp",
 editor    = "Brigitte Pientka and Cesare Tinelli",
 title     = "Confluence Criteria for Logically Constrained Rewrite
              Systems",
 booktitle = proc # "29th" # cade,
 series    = lnai,
 volume    = "14132",
 pages     = "474--490",
 year      = 2023,
 doi       = "10.1007/978-3-031-38499-8_27"
}

@inproceedings{WM18,
 author    = "Sarah Winkler and Aart Middeldorp",
 editor    = "H\'{e}l\`{e}ne Kirchner",
 title     = "Completion for Logically Constrained Rewriting",
 booktitle = proc # "3rd" # fscd,
 series    = lipics,
 volume    = 108,
 pages     = "30:1--30:18",
 year      = 2018,
 doi       = "10.4230/LIPIcs.FSCD.2018.30"
}

@inproceedings{KN14aplas,
 author = {Cynthia Kop and Naoki Nishida},
 editor = "Jacques Garrigue",
 title = {Automatic Constrained Rewriting Induction towards Verifying Procedural Programs},
 booktitle = proc # "12th" # aplas,
 series = lncs,
 volume = {8858},
 pages = {334--353},
 year = {2014},
 doi = {10.1007/978-3-319-12736-1_18},
}

@inproceedings{KN13frocos,
 author = {Cynthia Kop and Naoki Nishida},
 editor = {Pascal Fontaine and Christophe Ringeissen and Renate A. Schmidt},
 title = {Term Rewriting with Logical Constraints},
 booktitle = proc # "9th" # frocos,
 series = lncs,
 volume = {8152},
 pages = {343--358},
 year = {2013},
 doi = {10.1007/978-3-642-40885-4_24},
}

@inproceedings{SMM24,
 author    = "Jonas Sch{\"o}pf and  Fabian Mitterwallner and Aart Middeldorp",
 editor    = "Christoph Benzmüller and Marijn J.H. Heule and Renate A. Schmidt",
 title     = "Confluence of Logically Constrained Rewrite Systems Revisited",
 booktitle = proc # "12th" # ijcar,
 series    = lnai,
 volume    = "14740",
 pages     = "298--316",
 year      = 2024,
 doi       = "10.1007/978-3-031-63501-4_16"
}

@inproceedings{NW18,
 author    = "Naoki Nishida and Sarah Winkler",
 editor    = "Ruzica Piskac and Philipp R{\"u}mmer",
 title     = "Loop Detection by Logically Constrained Term Rewriting",
 booktitle = proc # "10th" # vstte,
 series    = lncs,
 volume    = 11294,
 pages     = "309--321",
 year      = 2018,
 doi       = "10.1007/978-3-030-03592-1_18"
}

@inproceedings{ANS24,
 author    = "Takahito Aoto and Naoki Nishida and Jonas Sch{\"o}pf",
 editor    = "Jakob Rehof",
 title     = "Equational Theories and Validity for Logically Constrained
              Term Rewriting",
 booktitle = proc # "9th" # fscd,
 series    = lipics,
 volume    = 299,
 pages     = "31:1--31:21",
 year      = 2024,
 doi       = "10.4230/LIPIcs.FSCD.2024.31"
}

@article{Kop17,
  author       = {Cynthia Kop},
  title        = {Quasi-reductivity of Logically Constrained Term Rewriting Systems},
  journal      = {CoRR},
  volume       = {abs/1702.02397},
  year         = {2017},
  url          = {http://arxiv.org/abs/1702.02397},
  xeprinttype    = {arXiv},
  xeprint       = {1702.02397},
}

@ARTICLE{KN24jip,
  author = {Misaki Kojima and Naoki Nishida},
  title = {On Representations of Waiting Queues for Semaphores in Logically Constrained Term Rewrite Systems with Constant Destinations},
  journal = {Journal of Information Processing},
  year = {2024},
  xpages = {1-17},
  note = {to appear},
  xmonth = may,
  organization = {Information Processing Society of Japan},
}

@inproceedings{TSNA25PPDP,
 author    = "Kanta Takahata and Jonas Sch{\"o}pf and Naoki Nishida and Takahito Aoto",
 title     = "Recovering Commutation of Logically Constrained Rewriting and Equivalence Transformations",
 booktitle = {Proceedings of the 27th International Symposium on Principles and
                  Practice of Declarative Programming},
 year      = 2025,
 xpages     = {1-29},
 xdoi       = "",
 publisher    = "ACM",
note     = "to appear",
}

\newpage
\appendix

\section{Detailed Proofs}
\label{sec:missing-proofs}

\subsection{Proofs of \Cref{ssec:existentially cterms and equivalence}}

\LemmaIIIii*
\begin{proof}
We show \Bfnum{1} $\implies$ \Bfnum{2}, \Bfnum{2} $\iff$ \Bfnum{3}, and
\Bfnum{3} $\implies$ \Bfnum{1}.
\begin{itemize}
    \item \Bfnum{1} $\implies$ \Bfnum{2}:
From $\BVar(\ECO{\vec{x}}{\varphi}) \subseteq \Var(\varphi)$
and $\BVar(\ECO{\vec{x}}{\varphi}) \cap \Var(s) = \varnothing$,
we know $\BVar(\ECO{\vec{x}}{\varphi}) \subseteq \Var(\varphi) \setminus \Var(s)$.
On the other hand,
from $\FVar(\ECO{\vec{x}}{\varphi}) \subseteq \Var(\varphi)$ and 
$\FVar(\ECO{\vec{x}}{\varphi}) \subseteq \Var(s)$,
we have $\FVar(\ECO{\vec{x}}{\varphi}) \subseteq \Var(\varphi) \cap \Var(s)$.
Then, as $\FVar(\ECO{\vec{x}}{\varphi})= \Var(\varphi) \setminus \BVar(\ECO{\vec{x}}{\varphi})$
and $\Var(\varphi) \cap \Var(s) = \Var(\varphi) \setminus (\Var(\varphi) \setminus \Var(s))$,
we obtain
$\Var(\varphi) \setminus \BVar(\ECO{\vec{x}}{\varphi})
\subseteq \Var(\varphi) \setminus (\Var(\varphi) \setminus \Var(s))$.
Since $\BVar(\ECO{\vec{x}}{\varphi})$ and $\Var(\varphi) \setminus \Var(s)$
are subsets of $\Var(\varphi)$,
it follows $\Var(\varphi) \setminus \Var(s) \subseteq \BVar(\ECO{\vec{x}}{\varphi})$.
Therefore, $\BVar(\ECO{\vec{x}}{\varphi}) = \Var(\varphi) \setminus \Var(s)$.
%
    \item \Bfnum{2} $\iff$ \Bfnum{3}:
This is clear as $\Var(\varphi) = \BVar(\ECO{\vec{x}}{\varphi}) \uplus \FVar(\ECO{\vec{x}}{\varphi})$
and $\Var(\varphi) = (\Var(\varphi) \setminus \Var(s)) \uplus (\Var(\varphi) \cap \Var(s))$.

    \item \Bfnum{3} $\implies$ \Bfnum{1}:
We have $\FVar(\ECO{\vec{x}}{\varphi}) = \Var(\varphi) \cap \Var(s) \subseteq \Var(s)$.
Also, from the previous item,~\Bfnum{2} also holds.
Thus, from~\Bfnum{2}, we have 
$\BVar(\ECO{\vec{x}}{\varphi}) \cap \Var(s) = 
(\Var(\varphi) \setminus \Var(s)) \cap \Var(s) = \varnothing$.
\qed\end{itemize}
\end{proof}

\LemmaIIIiv*
\begin{proof}
It follows from \Cref{lem:property existential constraints}.
Note that the conditions~\Bfnum{1},\,\Bfnum{2} for existentially constrained terms
equals the condition~\Bfnum{1} of \Cref{lem:property existential constraints}
including the condition on $X$ (i.e., $\FVar(\ECO{\vec{x}}{\varphi}) \subseteq X \subseteq \Var(s)$).
\qed
\end{proof}

\LemmaBasicPropertiesOfXvaluedSubstitution*
\begin{proof}
Since $\CTerm{X}{s}{\vec{x}}{\varphi}$ is satisfiable, 
the constraint $\ECO{\vec{x}}{\varphi}$ is satisfiable.
Thus, there exists a valuation $\rho$ such that 
$\vDash_{\xM,\rho} \ECO{\vec{x}}{\varphi}$.
Take a substitution $\gamma := \rho|_{X}$.
Then, as $\FVar(\ECO{\vec{x}}{\varphi}) \subseteq X$,
we have $\gamma(X \cup \FVar(\ECO{\vec{x}}{\varphi})) \subseteq \Val$,
and $\vDash_\xM (\ECO{\vec{x}}{\varphi})\gamma$.
Thus, $\gamma$ is $X$-valued, $\Dom(\gamma) = X$, and $\gamma \vDash_\xM \ECO{\vec{x}}{\varphi}$.
Moreover, as $\gamma(x) \in  \xV \cup \Val$ for any $x \in \xV$,  we have $\Pos(s) = \Pos(s\gamma)$.
Hence the claim follows.
\qed
\end{proof}

\LemCorrespondence*

\begin{proof}
($\Longrightarrow$)
Suppose $s~\CO{\varphi} \sim t~\CO{\psi}$.
Let 
$X = \Var(s) \cap \Var(\varphi)$,
$\SET{\vec{x}} = \Var(\varphi) \setminus\Var(s)$,
$Y = \Var(t) \cap \Var(\psi)$, and
$\SET{\vec{y}} = \Var(\psi) \setminus\Var(t)$.
Then, obviously
$\CTerm{X}{s}{\vec{x}}{\varphi}$
and $\CTerm{Y}{t}{\vec{y}}{\psi}$ are
existentially constrained terms.
We show
$\CTerm{X}{s}{\vec{x}}{\varphi} \subsetsim \CTerm{Y}{t}{\vec{y}}{\psi}$.
The other direction $\CTerm{X}{s}{\vec{x}}{\varphi} \supsetsim \CTerm{Y}{t}{\vec{y}}{\psi}$
follows similarly.
Let $\sigma$ be an $X$-valued substitution such that
$\sigma \vDash_\xM \ECO{\vec{x}}{\varphi}$.
Then $\sigma(\FVar(\ECO{\vec{x}}{\varphi})) \subseteq \Val$
and $\vDash_\xM \varphi\kappa\sigma$ with $\kappa = \{ \vec{x} \mapsto \vec{v} \}$.
Then $\sigma \circ \kappa \vDash_\xM \varphi$,
and by our assumption there exists a substitution $\gamma$
such that $\gamma \vDash_\xM \psi$ and $s\kappa\sigma = t\gamma$.
Since $\SET{\vec{x}} \cap \Var(s) = \varnothing$ and $\vec{v} \in \Val^*$,
we have $s\sigma = s\kappa\sigma = t\gamma$.
From $\gamma \vDash_\xM \psi$, we know $\gamma(\Var(\psi)) \subseteq \Val$.
As $\FVar(\ECO{\vec{y}}{\psi}) \subseteq \Var(\psi)$, we have $\gamma(\FVar(\psi)) \subseteq \Val$.
Furthermore, it follows from $\vDash_\xM \psi\gamma$ that $\vDash_\xM (\ECO{\vec{y}}{\psi})\gamma$.
This concludes $\gamma \vDash_\xM \ECO{\vec{y}}{\psi}$.

\noindent
($\Longleftarrow$)
Suppose 
$\CTerm{X}{s}{\vec{x}}{\varphi} \sim \CTerm{Y}{t}{\vec{y}}{\psi}$
where 
$X = \Var(s) \cap \Var(\varphi)$,
$\SET{\vec{x}} = \Var(\varphi) \setminus\Var(s)$,
$Y = \Var(t) \cap \Var(\psi)$, and
$\SET{\vec{y}} = \Var(\psi) \setminus\Var(t)$.
We show $s~\CO{\varphi} \subsetsim t~\CO{\psi}$.
As before the other direction $s~\CO{\varphi} \supsetsim t~\CO{\psi}$ follows similarly.
Let $\sigma$ be substitution $\sigma \vDash_\xM \varphi$.
Then, since $\FVar(\ECO{\vec{x}}{\varphi}) \subseteq \Var(\varphi)$
and $\vDash_\xM \varphi\sigma$ implies 
$\vDash_\xM (\ECO{\vec{x}}{\varphi})\sigma$,
we obtain $\sigma \vDash_\xM \ECO{\vec{x}}{\varphi}$.
Also, by $X = \Var(s) \cap \Var(\varphi) \subseteq \Var(\varphi)$, 
it follows that $\sigma$ is $X$-valued.
By our assumption
there exists a $Y$-valued substitution $\gamma$ such that
$\gamma \vDash_\xM \ECO{\vec{y}}{\psi}$ and $s\sigma = t\gamma$.
Thus, 
$\vDash_\xM \psi\kappa\gamma$ with $\kappa = \{ \vec{y} \mapsto \vec{v} \}$
with some $\vec{v} \in \Val^*$.
Then, by $\SET{\vec{y}} \cap \Var(t) = \varnothing$ and $\vec{v} \in \Val^*$,
$s\sigma = t\gamma = t\kappa\gamma = t (\gamma \circ \kappa)$.
Also, we have
$\Var(\psi) = (\Var(\psi) \cap \Var(t)) \cup (\Var(\psi) \setminus\Var(t)) 
= Y \cup \SET{\vec{y}} \subseteq \VDom(\gamma \circ \kappa)$,
i.e.\ $(\gamma \circ \kappa)(\Var(\psi) \subseteq \Val$.
From all of this we obtain $\gamma \circ \kappa \vDash_\xM \psi$.
\qed
\end{proof}

\subsection{Proofs of \Cref{ssec:equivalence characterization of ext cterms}}

\LemmaIIIxxi*
\begin{proof}
As $\delta$ is a bijection, we can take the inverse renaming by $\delta^{-1}$. 
From the condition,
we have $t = s\delta^{-1}$, $Y = \delta^{-1}(X)$,
and $\vDash_\xM (\ECO{\vec{y}}{\psi})\delta^{-1}  \Leftrightarrow (\ECO{\vec{x}}{\varphi})$.
By symmetricity of the definition of equivalence,
it suffices to prove 
$\CTerm{X}{s}{\vec{x}}{\varphi} \subsetsim \CTerm{\SET{ x\delta \mid x \in X }}{s\delta}{}{(\ECO{\vec{x}}{\varphi})\delta}$.
Fix an $X$-valued substitution $\sigma$ with 
$\sigma \vDash_\xM \ECO{\vec{x}}{\varphi}$. 
Let us define the substitution $\gamma := \sigma \circ \delta^{-1}$.
We now show that $\gamma$ is $Y$-valued, 
$\gamma \vDash_\xM \ECO{\vec{y}}{\psi}$ and $s\sigma = t\gamma$. 
Let $y \in Y$. Then, by $Y = \delta(X)$, there exists $x \in X$ such that $\delta(x) = y$.
Hence, as $\sigma$ is $X$-valued,
$y\gamma = x\delta\gamma 
= x\delta\delta^{-1}\sigma = x\sigma \in \Val$.
Thus, we know that $\gamma$ is $Y$-valued.
From $\sigma \vDash_\xM \ECO{\vec{x}}{\varphi}$,
it follows $\vDash_\xM (\ECO{\vec{x}}{\varphi})\sigma$.
Hence, by $\vDash_\xM (\ECO{\vec{x}}{\varphi}) \Leftrightarrow (\ECO{\vec{y}}{\psi})\delta^{-1}$,
we obtain $\vDash_\xM  (\ECO{\vec{y}}{\psi})\delta^{-1}\sigma$.
Therefore,  $\vDash_\xM  (\ECO{\vec{y}}{\psi})\gamma$.
Since $\FVar(\ECO{\vec{y}}{\psi}) \subseteq Y$ and 
$\gamma$ is $Y$-valued,  it follows $\gamma \vDash_\xM \ECO{\vec{y}}{\psi}$.
Finally, 
$t\gamma = (s\delta)\gamma = s\gamma\delta = s\sigma$.
\qed
\end{proof}

\ThmCharacterizationOfEquivalenceRsingRenaming*
\begin{proof}
As the \emph{if} direction follows from \Cref{lem:equivalence by renaming},
we concentrate on the \emph{only-if} direction.
Assume that
$\CTerm{X}{s}{\vec{x}}{\varphi} \sim \CTerm{Y}{s\delta}{\vec{y}}{\psi}$.
We first show by contradiction that $\delta(X) = Y$.
Assume that $\delta(X) \ne Y$, hence $X \neq \delta^{-1}(Y)$.
By $Y \subseteq \Var(t)$, we have $\delta^{-1}(Y) \subseteq 
\delta^{-1}(\Var(t)) = \Var(t\delta^{-1}) = \Var(s)$.
Thus, combining with $X \subseteq \Var(s)$, 
we have that $X\cup \delta^{-1}(Y) \subseteq \Var(s)$.
Thus, there exists a variable $z \in \Var(s)$ 
such that $z \notin X$ and $z \in \delta^{-1}(Y)$
or vice versa.

We first consider the case
that $z \notin X$ and $z \in \delta^{-1}(Y)$.
Let $p$ be a position of $z$ in $s$, i.e., $s|_p=z$.
Since $\CTerm{X}{s}{\vec{x}}{\varphi}$ is satisfiable, 
there exists an $X$-valued substitution $\sigma$ with $\sigma \vDash_\xM \ECO{\vec{x}}{\varphi}$
and $\Dom(\sigma) = X$, by \Cref{lem:basic-properties-of-X-valued-substitutions}.
As $z \notin X = \Dom(\sigma)$, it follows
that $z\sigma \notin \Val$.
By our assumption, there
exist a $Y$-valued substitution $\gamma$ with 
$\gamma \vDash_\xM \ECO{\vec{y}}{\psi}$ and $s\sigma = s\delta\gamma$.
As $\gamma$ is $Y$-valued and $\delta(z) \in Y$,
we have $(s\delta\gamma)|_p =  s|_p\delta\gamma = \delta(z)\gamma \in \Val$.
This contradicts $(s\sigma)|_p = z\sigma \notin \Val$ and $s\sigma = s\delta\gamma$.

As a next step we consider the case
that $z \in X$ and $z \notin \delta^{-1}(Y)$.
Note $\delta(z) \notin \delta(\delta^{-1}(Y)) = Y$ from the latter.
Let $p$ be a position of $z$ in $s$, i.e., $s|_p=z$.
Since $\CTerm{Y}{t}{\vec{y}}{\psi}$ is satisfiable, 
there exists an $Y$-valued substitution $\gamma$ 
with $\gamma \vDash_\xM \ECO{\vec{y}}{\psi}$ and $\Dom(\gamma) = Y$
by \Cref{lem:basic-properties-of-X-valued-substitutions}.
As $\delta(z) \notin Y = \Dom(\gamma)$, it follows
that $z\delta\gamma \notin \Val$.
By our assumption, there
exist a $X$-valued substitution $\sigma$ with 
$\sigma \vDash_\xM \ECO{\vec{x}}{\varphi}$ and $t\gamma = s\sigma$.
As $\sigma$ is $X$-valued and $z \in X$,
we have $(s\sigma)|_p =  s|_p\sigma = z\sigma  \in \Val$.
This contradicts $t\gamma|_p = (s\delta\gamma)|_p = 
s|_p\delta\gamma = z\delta\gamma \notin \Val$ and $s\sigma = t\gamma$.
%
Therefore we have that $\delta(X) = Y$.

In the following we establish
$\vDash_\xM (\ECO{\vec{x}}{\varphi})\delta \Leftrightarrow (\ECO{\vec{y}}{\psi})$.
We first show that
$\vDash_\xM (\ECO{\vec{x}}{\varphi})\delta \Rightarrow (\ECO{\vec{y}}{\psi})$.
Let $\rho$ be a valuation such that
$\vDash_{\xM,\rho} (\ECO{\vec{x}}{\varphi})\delta$.
Then, $\vDash_{\xM} (\ECO{\vec{x}}{\varphi})(\rho \circ \delta)$.
Take a substitution $\sigma := (\rho \circ \delta)|_X$.
Then, clearly, $\sigma$ is $X$-valued.
Moreover, since $\FVar(\ECO{\vec{x}}{\varphi}) \subseteq X$,
we have $\sigma(\FVar(\ECO{\vec{x}}{\varphi})) \subseteq \Val$,
and from $\vDash_{\xM} (\ECO{\vec{x}}{\varphi})\delta\rho$,
it follows $\vDash_{\xM} (\ECO{\vec{x}}{\varphi})\sigma$.
Thus, $\sigma \vDash_{\xM} \ECO{\vec{x}}{\varphi}$.
Using our assumption, one obtains a $Y$-valued substitution 
$\gamma$ with $\gamma \vDash_\xM \ECO{\vec{y}}{\psi}$ 
and $s\sigma = t\gamma = s\delta\gamma$.
Thus, we have that $\sigma|_{\Var(s)} = (\gamma \circ \delta)|_{\Var(s)}$.
That is, we have $\sigma(x) = \gamma(\delta(x))$, for any $x \in \Var(s)$,
Hence, $\sigma(\delta^-1(x)) = \gamma(x)$ for all $x \in \delta(\Var(s)) = \Var(t)$.
Then, as $\FVar(\ECO{\vec{y}}{\psi}) \subseteq \Var(t)$,
it follows from $\gamma \vDash_\xM \ECO{\vec{y}}{\psi}$ 
that $\vDash_\xM (\ECO{\vec{y}}{\psi})\delta^{-1}\sigma$.
By our choice of $\sigma$,
we have $(\rho \circ \delta)|_X = \sigma|_{X}$.
That is, $\rho(\delta(x)) = \sigma(x)$ for all $x \in X$.
Thus, $\rho(y) = \sigma(\delta^{-1}(y))$ for all $y \in \delta(X) = Y$.
Hence, by $\FVar(\ECO{\vec{y}}{\psi}) \subseteq Y$,
it follows $\vDash_\xM (\ECO{\vec{y}}{\psi})\rho$.
Thus, $\vDash_{\xM,\rho} \ECO{\vec{y}}{\psi}$.
Thus, we have shown that 
$\vDash_{\xM,\rho} (\ECO{\vec{x}}{\varphi})\delta$
implies $\vDash_{\xM,\rho} \ECO{\vec{y}}{\psi}$ for any valuation $\rho$,
and therefore
$\vDash_\xM (\ECO{\vec{x}}{\varphi})\delta \Rightarrow (\ECO{\vec{y}}{\psi})$.

To obtain the other direction, i.e., 
$\vDash_\xM (\ECO{\vec{y}}{\psi}) \Rightarrow (\ECO{\vec{x}}{\varphi})\delta$,
we apply the proofs so far to the symmetric proposition 
with replacing $\delta$ by $\delta^{-1}$.
That is, if $t\delta^{-1} = s$ and $\CTerm{Y}{t}{\vec{y}}{\psi} \sim \CTerm{X}{s}{\vec{x}}{\varphi}$
then $\delta^{-1}(Y) = X$ 
and $\vDash_\xM (\ECO{\vec{y}}{\psi})\delta^{-1} \Rightarrow (\ECO{\vec{x}}{\varphi})$.
This works, and 
we obtain $\vDash_\xM (\ECO{\vec{y}}{\psi})\delta^{-1} \Rightarrow (\ECO{\vec{x}}{\varphi})$.
Now it follows from this
that $\vDash_\xM (\ECO{\vec{y}}{\psi}) \Rightarrow (\ECO{\vec{x}}{\varphi})\delta$.
\qed
\end{proof}

\PropEquivalenceByConstraintValidty*
\begin{proof}
Take the identity substitution as $\delta$ in 
\Cref{thm: characterization of equivalence using renaming}.
\qed
\end{proof}

\subsection{Proofs of \Cref{ssec:pattern general constrained terms}}

\LemmaBasicPropertiesOfPG*
\begin{proof}
Let $\Pos_{X\cup\Val}(s) = \SET{p_1,\ldots,p_n}$ and
$\PG(\CTerm{X}{s}{\vec{x}}{\varphi}) = \CTerm{Y}{t}{\vec{y}}{\psi}$ where
\begin{itemize}
    \item $w_1,\ldots,w_n$ are pairwise distinct fresh variables, 
    \item $t = s[w_1,\ldots,w_n]_{p_1,\ldots,p_n}$, 
    \item $Y = \SET{w_1,\ldots,w_n}$,
    \item $\SET{\vec{y}} = \SET{\vec{x}} \cup X$,
        and
    \item $\psi = (\varphi \land \bigwedge_{i=1}^n (s|_{p_i} = w_i))$.
\end{itemize}
We prove the statements in sequence.
\begin{enumerate}
\renewcommand{\labelenumi}{\Bfnum{\arabic{enumi}}}
    \item
To know that $\CTerm{Y}{t}{\vec{y}}{\psi}$ is an existentially constrained term, 
we need to show~(i) $\FVar(\ECO{\vec{y}}{\psi}) \subseteq Y \subseteq \Var(t)$
and~(ii) $\BVar(\ECO{\vec{y}}{\psi}) \cap \Var(t) = \varnothing$.
Firstly, we claim~(i).
We have $\FVar(\ECO{\vec{y}}{\psi}) = \Var(\psi) \setminus \SET{\vec{y}}$.
Then,
$\Var(\psi) 
= \Var(\varphi) \cup \SET{\seq{w}} \cup \SET{ s_{p_i} \in X \mid 1 \le i \le n }
\subseteq \Var(\varphi) \cup \SET{\seq{w}} \cup X$
and
$\Var(\varphi) 
= \FVar(\ECO{\vec{x}}{\varphi}) \cup \SET{\vec{x}}
\subseteq X \cup \SET{\vec{x}} = \SET{\vec{y}}$,
we have 
$\Var(\psi) \setminus \SET{\vec{y}} \subseteq  \SET{\seq{w}} = Y$,
and thus 
$\FVar(\ECO{\vec{y}}{\psi}) \subseteq Y$.
Moreover, as $t = s[w_1,\ldots,w_n]_{p_1,\ldots,p_n}$,
clearly, $Y \subseteq \Var(t)$.
Next, we claim~(ii).
Suppose contrary that $x \in \BVar(\ECO{\vec{y}}{\psi}) \cap \Var(t)$.
Then, since $\SET{\vec{y}} = \SET{\vec{x}} \cup X$,
we have $x \in \SET{\vec{x}} \cap \Var(t)$
or $x \in X \cap \Var(t)$.
Suppose $x \in \SET{\vec{x}} \cap \Var(t)$.
Then, since $t = s[w_1,\ldots,w_n]$ and $\SET{\vec{x}} \cap \SET{\seq{w}} = \varnothing$,
we have $x \in \SET{\vec{x}} \cap \Var(s)$.
But this contradicts that 
$\CTerm{X}{s}{\vec{x}}{\varphi}$ is an existentially constrained term.
Otherwise, $x \in X \cap \Var(t)$.
Again, since $t = s[w_1,\ldots,w_n]$ and $X \cap \SET{\seq{w}} = \varnothing$,
we have $x \in X \cap \Var(s[~]_{\seq{p}})$.
But, since $\Pos_X(s) \subseteq \Pos_{X\cup \Val}(s) = \SET{ \seq{p} }$
we have $X \cap \Var(s[~]_{\seq{p}}) = \varnothing$.
This is a contradiction.
Thus, 
we conclude that $\CTerm{Y}{t}{\vec{y}}{\psi}$ is an existentially constrained term.
    By definition, it is clear that 
    $t$ is $Y$-linear, and 
    $\Val(t) = \varnothing$.
    It follows from \Cref{def:pattern-general}
    that $\CTerm{Y}{t}{\vec{y}}{\psi}$ is pattern-general.

    \item
    It suffices to show that if $\ECO{\vec{x}}{\varphi}$ is satisfiable, then
    $\ECO{\vec{y}}{\psi}$ is so. Assume that $\ECO{\vec{x}}{\varphi}$ is satisfiable.
    Then, there exists a valuation $\rho$ such that $\vDash_{\xM,\rho} \ECO{\vec{x}}{\varphi}$.
    Then there exists $\vec{v} \in \Val^*$ such that 
    $\vDash_{\xM,\rho} \varphi\kappa$ with $\kappa = \SET{\vec{x} \mapsto \vec{v}}$.
    Take $\delta = \SET{ x \mapsto \rho(x) \mid x \in X }$.
    Then, as $X \cap \SET{\vec{x}} = \varnothing$, one can take a substitution $\kappa \cup \delta$.
    Thus, $\vDash_{\xM,\rho} \varphi(\kappa \cup \delta)$.
    Let us define a valuation 
    $\rho' = \rho|_{\xV \setminus Y} \cup \SET{w_i \mapsto \rho(s|_{p_i}) \mid 1 \leqslant i \leqslant n}$.
    Then, clearly, $\vDash_{\xM,\rho'} \bigwedge_{i=1}^{n} (s|_{p_i} = w_i)$.
    Also, by $(X \cup \SET{\vec{x}}) \cap Y = \varnothing$,
    we have $\vDash_{\xM,\rho'} \varphi(\kappa \cup \delta)$.
    Thus, $\vDash_{\xM,\rho'} (\varphi\land \bigwedge_{i=1}^{n} (s|_{p_i} = w_i))(\kappa \cup \delta)$, 
    which implies that $\vDash_{\xM,\rho'} \ECO{\vec{y}}{\psi}$ and therefore 
    $\ECO{\vec{y}}{\psi}$ is satisfiable.
\qed\end{enumerate}
\end{proof}

\TheoremCorrectnessOfPG*
\begin{proof}
Let $\Pos_{X\cup\Val}(s) = \SET{p_1,\ldots,p_n}$ and $\PG(\CTerm{X}{s}{\vec{x}}{\varphi}) = \CTerm{Y}{t}{\vec{y}}{\psi}$
where
\begin{itemize}
    \item $w_1,\ldots,w_n$ are pairwise distinct fresh variables, 
    \item $t = s[w_1,\ldots,w_n]_{p_1,\ldots,p_n}$, 
    \item $Y = \SET{w_1,\ldots,w_n}$,
    \item $\SET{\vec{y}} = \SET{\vec{x}} \cup X$,
        and
    \item $\psi = (\varphi \land \bigwedge_{i=1}^n (s|_{p_i} = w_i))$.
\end{itemize}
We prove $\CTerm{X}{s}{\vec{x}}{\varphi} \subsetsim \CTerm{Y}{t}{\vec{y}}{\psi}$ and $\CTerm{X}{s}{\vec{x}}{\varphi} \supsetsim \CTerm{Y}{t}{\vec{y}}{\psi}$ one after another.

($\subsetsim$)
Let $\sigma$ be an $X$-valued substitution such that 
$\sigma \vDash_{\xM} \ECO{\vec{x}}{\varphi}$. 
Then there exists $\vec{v} \in \Val^*$
such that $\vDash_{\xM} \varphi\kappa\sigma$,
where $\kappa = \SET{\vec{x} \mapsto \vec{v}}$.
Take $\sigma' = \sigma|_{X}$.
Then, since $\sigma(X) \subseteq \Val$,
we have a substitution $\kappa \cup \sigma'\colon \SET{\vec{x}} \cup X \to \Val$
such that $\vDash_{\xM} \varphi(\kappa \cup \sigma')\sigma$.
Take a substitution $\gamma = \sigma \circ \SET{ w_i \mapsto s|_{p_i} \mid 1  \le i \le n }$.
Then, we have $\gamma(w_i) = s|_{p_i}\sigma \in \Val$
as $s|_{p_i}\in X \cup \Val$ and $\sigma$ is $X$-valued.
Hence, $\gamma$ is $Y$-valued.
As $\FVar(\ECO{\vec{y}}{\psi}) \subseteq Y$,
$\gamma(\FVar(\ECO{\vec{y}}{\psi})) \subseteq \Val$ holds.
Also, by our assumption on $w_1,\ldots,w_n$,
it follows from $\vDash_{\xM} \varphi(\kappa \cup \sigma')\sigma$
that $\vDash_{\xM} \varphi(\kappa \cup \sigma')\gamma$.
Also, since 
$\bigwedge_{i=1}^n (s|_{p_i} = w_i))(\kappa \cup \sigma')\gamma
= \bigwedge_{i=1}^n (s|_{p_i}\sigma' = w_i))\gamma
= \bigwedge_{i=1}^n (s|_{p_i}\sigma = w_i\gamma))
= \bigwedge_{i=1}^n (s|_{p_i}\sigma = s|_{p_i}\sigma)$,
we have $\vDash_{\xM} \bigwedge_{i=1}^n (s|_{p_i} = w_i))(\kappa \cup \sigma')\gamma$.
Hence, $\vDash_{\xM} (\varphi \land \bigwedge_{i=1}^n (s|_{p_i} = w_i))(\kappa \cup \sigma')\gamma$,
and thus, $\vDash_{\xM} (\ECO{\vec{y}}{\psi})\gamma$ holds.
Therefore, together with $\gamma(\FVar(\ECO{\vec{y}}{\psi})) \subseteq \Val$,
we obtain $\gamma \vDash_{\xM} \ECO{\vec{y}}{\psi}$.
Finally, as $\Var(s) \cap \SET{\seq{w}} = \varnothing$,
we have $t\gamma = s[w_1,\ldots,w_n]_{\seq{p}}\gamma
= s\gamma[w_1\gamma,\ldots,w_n\gamma]_{\seq{p}}
= s\sigma[s|_{p_1}\sigma,\ldots,s|_{p_n}\sigma]_{\seq{p}}
= s\sigma$.

($\supsetsim$)
Let $\gamma$ be a $Y$-valued substitution such that 
$\gamma \vDash_{\xM} \ECO{\vec{y}}{\psi}$. 
Then, 
there exists $\vec{v} \in \Val^*$
such that $\vDash_{\xM} \psi\kappa\gamma$,
where $\kappa = \SET{\vec{y} \mapsto \vec{v}}$.
Thus, in particular, we have
$\vDash_{\xM} \varphi\kappa\gamma$ and 
$\vDash_{\xM} (s|_{p_i} = w_i)\kappa\gamma$ for all $1 \le i \le n$.
As $\SET{\vec{y}} \cap \SET{\seq{w}} = \varnothing$,
and $s|_{p_i}\in X \cup \Val$ and $X \subseteq \SET{\vec{y}}$,
we have $(s|_{p_i} = w_i)\kappa\gamma = (s|_{p_i}\kappa = w_i\gamma)$.
As $s|_{p_i}\kappa, w_i\gamma \in \Val$,
it follows from $\vDash_{\xM} (s|_{p_i} = w_i)\kappa\gamma$
that $s|_{p_i}\kappa = w_i\gamma$.
Now, take $\kappa' = \kappa|_{\SET{\vec{x}}}$ and $\kappa'' = \kappa|_{X}$.
Then $\kappa = \kappa' \cup \kappa''$.
Also take $\sigma = \gamma \circ \kappa''$.
Firstly, we have $\sigma(s|_{p_i}) 
= s|_{p_i}\kappa''\gamma
= s|_{p_i}\kappa\gamma
= s|_{p_i}\kappa \in \Val$.
Thus, $\sigma$ is $X$-valued, as $X \subseteq \SET{s|_{p_i} \mid 1 \le i \le n}$.
Since $\FVar(\ECO{\vec{x}}{\varphi}) \subseteq X$ and $\sigma$ is $X$-valued, 
$\sigma(\FVar(\ECO{\vec{x}}{\varphi})) \subseteq \Val$.
Moreover, since
$\varphi\kappa\gamma
= \varphi(\kappa' \cup \kappa'')\gamma
= \varphi\kappa'\kappa''\gamma
= \varphi\kappa'\sigma$,
it follows from $\vDash_{\xM} \varphi\kappa\gamma$ 
that $\vDash_{\xM} \varphi\kappa'\sigma$.
Thus, $\vDash_{\xM} (\ECO{\vec{x}}{\varphi})\sigma$.
Hence $\sigma \vDash_\xM \ECO{\vec{x}}{\varphi}$.
Finally, as $\Var(s[~]_{\seq{p}}) \cap X = \varnothing$
and $s|_{p_i}\kappa = w_i\gamma$ ($1 \le i \le n$),
we have 
$s\sigma 
= s[s|_{p_1},\ldots,s|_{p_n}]_{\seq{p}}\sigma
= s[s|_{p_1},\ldots,s|_{p_n}]_{\seq{p}}\kappa''\gamma
= s[s|_{p_1}\kappa,\ldots,s|_{p_n}\kappa]_{\seq{p}}\gamma
= s[w_1\gamma,\ldots,w_n\gamma]_{\seq{p}}\gamma
= s[w_1,\ldots,w_n]_{\seq{p}}\gamma
= t\gamma$.
\qed
\end{proof}

\LemInducedValidityOfRepresentativeSubstitutions*
\begin{proof}
By definition, we have that 
$Y = \SET{ w_1,\ldots,w_n }$,
$\SET{\vec{y}} = \SET{\vec{x}} \cup X$,
and
$\psi = (\varphi \land \bigwedge_{i=1}^n(s|_{p_i}=w_i))$.
Since $w_1,\ldots,w_n$ are freshly introduced variables, we have that $(X\cup\SET{\vec{x}}) \cap \SET{w_1,\ldots,w_n} = \varnothing$. 
W.l.o.g.\ suppose $X = \SET{x_1,\ldots,x_k}$,
$\Pos_X(s) =  \SET{p_1,\ldots,p_k }$ with $s|_{p_i} =  x_i$,
and $\Pos_{\Val}(s) =  \SET{p_{k+1},\ldots,p_n }$ with $s|_{p_i} = v_i$.
Then, for any valuation $\rho$, we have
\[
\begin{array}{rl}
& \vDash_{\xM,\rho} (\ECO{\vec{y}}{\psi})\sigma     \\
\iff& \vDash_{\xM,\rho} (\ECO{\vec{x},x_1,\ldots,x_k}{\varphi \land \bigwedge_{i=1}^n(s|_{p_i}=w_i)})\sigma       \\
\iff& \vDash_{\xM,\rho} (\ECO{\vec{x},x_1,\ldots,x_k}{\varphi \land \bigwedge_{i=1}^k(x_i=w_i) 
\land \bigwedge_{i=k+1}^n(v_i=w_i)})\sigma       \\
\iff& \vDash_{\xM,\rho} (\ECO{\vec{x},x_1',\ldots,x_k'}{\varphi \land \bigwedge_{i=1}^k(x_i'=w_i) 
\land \bigwedge_{i=k+1}^n(v_i=w_i)})\sigma       \\
\iff& \vDash_{\xM,\rho} (\ECO{\vec{x},x_1',\ldots,x_k'}{\varphi \land \bigwedge_{i=1}^k(x_i'=w_i\sigma) 
\land \bigwedge_{i=k+1}^n(v_i=w_i\sigma)})       \\
\iff& \vDash_{\xM,\rho} (\ECO{\vec{x},x_1',\ldots,x_k'}{\varphi \land \bigwedge_{i=1}^k(x_i'=x_i) 
\land \bigwedge_{i=k+1}^n(v_i=v_i)})       \\
\iff& \vDash_{\xM,\rho} (\ECO{\vec{x}}{\varphi})      \\
\end{array}
\]
Hence, $\vDash_\xM (\ECO{\vec{x}}{\varphi}) \Leftrightarrow (\ECO{\vec{y}}{\psi})\sigma$ holds.
\qed
\end{proof}

\LemmaIIIxix*
\begin{proof}
By definition, we have that $Y = \SET{ w_1,\ldots,w_n }$,
$\SET{\vec{y}} = \SET{\vec{x}} \cup X$, and
$\psi = (\varphi \land \bigwedge_{i=1}^n(s|_{p_i}=w_i))$.

\begin{enumerate}
\renewcommand{\labelenumi}{(\arabic{enumi})}

\item
($\Longrightarrow$)
Suppose 
$\vDash_\xM (\ECO{\vec{x}}{\varphi}) \Rightarrow s|_{p_i} = s_{p_j}$.
Let $\rho$ be valuation.
Suppose $\vDash_{\xM,\rho} \psi$, i.e., 
$\vDash_{\xM,\rho} \varphi \land \bigwedge_{i=1}^n(s|_{p_i}=w_i)$.
Then, in particular, we have $\vDash_{\xM,\rho} \varphi$
and $\vDash_{\xM,\rho} (s|_{p_i}=w_i) \land (s|_{p_j}=w_j)$.
Using our assumption, it follows from the former that
$\vDash_{\xM,\rho} (s|_{p_i} = s|_{p_j})$.
Then by the latter, we obtain $\vDash_{\xM,\rho} (w_i = w_j)$.
All in all, we have proved
$\vDash_{\xM,\rho} \psi \Rightarrow (w_i = w_j)$.
From this, 
$\vDash_\xM (\ECO{\vec{y}}{\psi}) \Rightarrow (w_i = w_j)$ follows.

($\Longleftarrow$)
Suppose $\vDash_\xM (\ECO{\vec{y}}{\psi}) \Rightarrow (w_i = w_j)$.
Then $\vDash_\xM \psi \Rightarrow (w_i = w_j)$.
Take a substitution $\sigma = \SET{ w_i \mapsto s|_{p_i} \mid 1 \le i \le n }$.
Then we have $\vDash_\xM \psi\sigma \Rightarrow (w_i\sigma = w_j\sigma)$.
Hence 
$\vDash_\xM \varphi \land \bigwedge_{i=1}^n(s|_{p_i}=s|_{p_i})) \Rightarrow (s|_{p_i}=s|_{p_j})$.
Therefore,
$\vDash_\xM \varphi \Rightarrow (s|_{p_i}=s|_{p_j})$.

\item
($\Longrightarrow$)
Suppose $\vDash_\xM (\ECO{\vec{x}}{\varphi}) \Rightarrow (s|_{p_i} = v)$.
Let $\rho$ be valuation.
Suppose $\vDash_{\xM,\rho} \psi$, i.e.\ 
$\vDash_{\xM,\rho} \varphi \land \bigwedge_{i=1}^n(s|_{p_i}=w_i)$.
Then, in particular, we have $\vDash_{\xM,\rho} \varphi$
and $\vDash_{\xM,\rho} (s|_{p_i}=w_i)$.
Using our assumption, it follows from the former that
$\vDash_{\xM,\rho} (s|_{p_i} = v)$.
Then by the latter, we obtain $\vDash_{\xM,\rho} (w_i = v)$.
All in all, we have proved that $\vDash_\xM \psi \Rightarrow (w_i = v)$.
From this, it follows $\vDash_\xM (\ECO{\vec{y}}{\psi}) \Rightarrow (w_i = v)$.

($\Longleftarrow$)
Suppose $\vDash_\xM (\ECO{\vec{y}}{\psi}) \Rightarrow (w_i = v)$.
Then $\vDash_\xM \psi \Rightarrow (w_i = v)$.
Take a substitution $\sigma = \SET{ w_i \mapsto s|_{p_i} \mid 1 \le i \le n }$.
Then we have $\vDash_\xM \psi\sigma \Rightarrow (w_i\sigma = v)$.
Hence 
$\vDash_\xM \varphi \land \bigwedge_{i=1}^n(s|_{p_i}=s|_{p_i})) \Rightarrow (s|_{p_i}=v)$.
Therefore,
$\vDash_\xM \varphi \Rightarrow (s|_{p_i}=v)$.
\qed
\end{enumerate}
\end{proof}

\LemmaBasicDerivedPropertiesOfEquivalence*
\begin{proof}
\begin{enumerate}
\renewcommand{\labelenumi}{\Bfnum{\arabic{enumi}}}
    \item
    It follows from \Cref{lem:basic-properties-of-X-valued-substitutions} that there exists an $X$-valued substitution $\sigma$ such that 
    $\sigma \vDash_{\xM} \ECO{\vec{x}}{\varphi}$
    and $\Pos(s) = \Pos(s\sigma)$.
    Using the assumption, we obtain
    a $Y$-valued substitution $\gamma$ such that 
    $\gamma \vDash_\xM \ECO{\vec{y}}{\psi}$
    and $s\sigma = t\gamma$.
    Thus, we have that $\Pos(s) = \Pos(s\sigma) =  \Pos(t\gamma)$.
    By definition, it is clear that $\Pos(t) \subseteq \Pos(t\gamma)$ and hence $\Pos(t) \subseteq \Pos(s)$.
    Similarly, we can prove $\Pos(s) \subseteq \Pos(t)$.
    Therefore, $\Pos(s) = \Pos(t)$ holds.
    \item 
    Clearly, it suffices to prove that $p \in \Pos_{X\cup \Val}(s)$ implies $p \in \Pos_{Y\cup \Val}(t)$.
    By \Cref{lem:basic-properties-of-X-valued-substitutions},
    we have $X$-valued substitution $\gamma$ such that $\sigma \vDash_\xM \ECO{\vec{x}}{\varphi}$.
    Then, using the assumption, 
    one obtains a $Y$-valued substitution $\delta$ such that 
    $\delta \vDash_\xM \ECO{\vec{y}}{\psi}$ and $s\gamma = t\delta$.
    Since $s|_p \in X \cup \Val$ and $\gamma$ is $X$-valued,
    we have $s\gamma|_p \in \Val$. Thus, $t\delta|_p = s\gamma|_p \in \Val$.
    Since $\Pos(s) = \Pos(t)$ by~\Bfnum{1}, we know $p \in \Pos(t)$,
    and hence it follows $t|_p \in \xV \cup \Val$.

    To show $t|_p \in Y \cup \Val$, suppose contrary that $t|_p \in \xV \setminus Y$.
    By \Cref{lem:basic-properties-of-X-valued-substitutions},
    we have $Y$-valued substitution $\delta'$ such that $\delta' \vDash_\xM \ECO{\vec{y}}{\psi}$
    and $\Dom(\delta') = Y$.
    Then, since $t|_p \in \Var(t) \setminus Y$, we have $\delta'(t|_p) = t|_p$.
    Using our assumption, one obtains an $X$-valued substitution $\gamma'$
    such that $t\delta' = s\gamma'$.
    Then $s\gamma'|_p = t\delta'|_p = t|_p\delta' = t|_p \in \xV \setminus Y$.
    On the other hand, as $s|_p \in X \cup \Val$ and $\gamma'$ is $X$-valued,
    it follows $s\gamma'|_p \in \Val$. This is a contradiction.

    \item 
    We proceed by contradiction.
    Assume that $\Pos_{\xV\setminus X}(s) \neq \Pos_{\xV\setminus Y}(t)$.
    Then, w.l.o.g., there exists a position $p \in \Pos_{\xV\setminus X}(s)$ 
    such that $p \notin \Pos_{\xV\setminus Y}(t)$.
    Since $\Pos(s) = \Pos(t)$ by~\Bfnum{1}, we know $p \in \Pos(t)$.
    Thus, we have that $p \in \Pos_{\xF \cup Y}(t)$.
    It follows from \Cref{lem:basic-properties-of-X-valued-substitutions} 
    that there exists an $X$-valued substitution $\sigma$ such that 
    $\sigma \vDash_\xM \ECO{\vec{x}}{\varphi}$ and $\Dom(\sigma) = X$.
    Then $(s\sigma)|_p = s|_p \in \xV \setminus X$.
    Using our assumption, one obtains
    a $Y$-valued substitution $\gamma$ such that 
    $\gamma \vDash_\xM \ECO{\vec{y}}{\psi}$ and $s\sigma = t\gamma$.
    Since $p \in \Pos_{\xF\cup Y}(t)$ and $\gamma$ is $Y$-valued, 
    we have that $(t\gamma)|_p \in \xF\cup\Val$. 
    This contradicts the fact that $s\sigma = t\gamma$
    and $(s\sigma)|_p  \in \xV \setminus X$.

    \item 
    This is derived from~\Bfnum{1}--\Bfnum{3} as follows:
    $\Pos_{\xF \setminus \Val}(s)  
    = \Pos(s) \setminus (\Pos_{X \cup \Val}(s) \cup \Pos_{\xV \setminus X}(s))
    = \Pos(t) \setminus (\Pos_{X \cup \Val}(t) \cup \Pos_{\xV \setminus X}(t))
    = \Pos_{\xF \setminus \Val}(t)$.

    \item 
    We proceed with a proof by contradiction. Assume that there exists a
    position $p\in \Pos_{\xF \setminus \Val}(s)$ such that $s(p) \ne t(p)$.
    Then, it follows from~\Bfnum{4} that $p\in \Pos_{\xF \setminus \Val}(t)$.
    By \Cref{lem:basic-properties-of-X-valued-substitutions},
    there exists an $X$-valued substitution $\sigma$ such that 
    $\sigma \vDash_\xM \ECO{\vec{x}}{\varphi}$. 
    Then, using our assumption, one obtains
    a $Y$-valued substitution $\gamma$ such that $\gamma \vDash_\xM \ECO{\vec{y}}{\psi}$ 
    and $s\sigma = t\gamma$.
    Since $p\in \Pos_{\xF \setminus \Val}(s) = \Pos_{\xF \setminus \Val}(t)$, 
    we have that $s(p) = (s\sigma)(p)$ and $t(p) = (t\gamma)(p)$, and hence $s(p) = t(p)$.
    This contradicts the assumption that $s(p) \ne t(p)$.

    \item 
    Let $\sigma$ be an $X$-valued substitution such that 
    $\sigma \vDash_\xM \ECO{\vec{x}}{\varphi}$.
    Then, using our assumption, we obtain a $Y$-valued substitution $\gamma$ 
    such that $\gamma \vDash_\xM \ECO{\vec{y}}{\psi}$ and $s\sigma = t\gamma$.
    Since $t(p) \in \Val$, we have that $(t\gamma)(p) = t(p)$ and hence $\sigma(s(p)) = t(p)$.

    \item 
    It follows from~\Bfnum{1},\,\Bfnum{2} that $\Pos(s[~]_{p_1,\ldots,p_n}) = \Pos(t[~]_{p_1,\ldots,p_n})$.
    For any $p \in \Pos(s[~]_{p_1,\ldots,p_n})$, we have 
    either $p \in \Pos_{\xV \setminus X}(s)$ or $p \in \Pos_{\xF \setminus \Val}(s)$.
    For the former case, $p \in \Pos_{\xV \setminus Y}(t)$ by~\Bfnum{3}.
    For the latter case, $p \in \Pos_{\xF \setminus \Val}(t)$ by~\Bfnum{4},
    and $s|p = t|_p$.
    Thus, $\SET{ p \in \Pos(s[~]_{p_1,\ldots,p_n}) \mid s(p) \neq t(p) } \subseteq
    \Pos_{\xV \setminus X}(s) = \Pos_{\xV \setminus Y}(t)$.
    Thus, it suffices to show for any $q_1,q_2 \in \Pos_{\xV \setminus X}(s)$,
    $s(q_1) = s(q_2)$ if and only if $t(q_1) = t(q_2)$.
    Clearly, it suffices to show that $t(q_1) = t(q_2)$ implies $s(q_1) = s(q_2)$.
    Suppose contrary that $s(q_1) \ne s(q_2)$ and $t(q_1) = t(q_2)$.
    It follows from \Cref{lem:basic-properties-of-X-valued-substitutions} that there exists an $X$-valued substitution $\sigma$ such that 
    $\sigma \vDash_\xM \ECO{\vec{x}}{\varphi}$ and $\Dom(sigma) = X$.
    Since $s(q_1),s(q_2) \notin \Dom(\sigma)$, 
    we have that $(s\sigma)(q_1) = s|_{q_1} \ne s|_{q_2} = (s\sigma)(q_2)$.
    By assumption, there exists a $Y$-valued substitution $\gamma$ such that 
    $\gamma \vDash_\xM \ECO{\vec{y}}{\psi}$ and $s\sigma = t\gamma$.
    Since $t(q_1) = t(q_2)$, we have that $(t\gamma)(q_1) = (t\gamma)(q_2)$.
    This contradicts the fact that $s\sigma = t\gamma$ and $(s\sigma)(q_1) \ne (s\sigma)(q_2)$.
\qed\end{enumerate}
\end{proof}

\LemChacterizationOfPatternGeneral*
\begin{proof}
We prove the \textit{if} part by contradiction. Assume that $s$ is most general
in $\SET{ t \mid \CTerm{X}{s}{\vec{x}}{\varphi} \sim \CTerm{Y}{t}{\vec{y}}{\psi}
}$ and $\CTerm{X}{s}{\vec{x}}{\varphi}$ is not pattern-general (i.e., $s$ is
not linear w.r.t.\ $X$ or a value $c$ appears in $s$). We proceed by case analysis
on whether $s$ is linear w.r.t.\ $X$:
\begin{itemize}
    \item Assume $s$ is linear w.r.t.\ $X$.
    In this case, a value $c$ appears in $s$.
    Let $s = s[c]_p$ for some position $p$ in $s$.
    Let $s' = s[y]_p$ for some fresh variable $y$.
    Then, $\CTerm{X}{s'}{\vec{x}}{\varphi\land(y = c)}$ is an existentially constrained term.
    By definition, it is clear that $\CTerm{X}{s}{\vec{x}}{\varphi} \sim \CTerm{X}{s'}{\vec{x}}{\varphi\land(y = c)}$.
    It is also clear that $s'$ is more general than $s$ but $s$ is not more general than $s'$.
    This contradicts the assumption that $s$ is most general in $\SET{t \mid \CTerm{X}{s}{\vec{x}}{\varphi} \sim \CTerm{Y}{t}{\vec{y}}{\psi}}$.
    \item Assume $s$ is not $X$-linear.
    In this case, there exists a variable $y$ with non-linear occurrences in $s$.
    Let $p_1,\ldots,p_n$ be the pairwise distinct positions in $s$ such that
    $s|_{p_1} = \cdots = s|_{p_n} = y$.
    Let $z_2,\ldots,z_n$ be fresh pairwise distinct variables.
    Let $s' = s[y,z_2,\ldots,z_n]_{p_1,\ldots,p_n}$.
    Then, $\CTerm{X\cup \SET{z_2,\ldots,z_n}}{s'}{\vec{x}}{(\varphi\land \bigwedge_{i=2}^n(z_i = y))}$ is an existentially constrained term.
    By definition, it is clear that $\CTerm{X}{s}{\vec{x}}{\varphi} \sim \CTerm{X}{s'}{\vec{x}}{(\varphi\land\bigwedge_{i=2}^n(z_i = y))}$.
    It is also clear that $s'$ is more general than $s$ but $s$ is not more general than $s'$.
    This contradicts the assumption that $s$ is most general in $\SET{ t \mid \CTerm{X}{s}{\vec{x}}{\varphi} \sim \CTerm{Y}{t}{\vec{y}}{\psi} }$.
\end{itemize}

For the \textit{only-if} part, 
assume that $\CTerm{X}{s}{\vec{x}}{\varphi}$ is satisfiable and pattern-general,
and $\CTerm{X}{s}{\vec{x}}{\varphi} \sim \CTerm{Y}{t}{\vec{y}}{\psi}$.
Then, it follows from \Cref{lem:basic-derived-properties-of-equivalence}~(2) 
that $\Pos_{\Val \cup X} (s) = \Pos_{\Val \cup Y} (t) = \SET{\seq{p}}$
for some $\seq{p}$. 
It also follows from \Cref{lem:basic-derived-properties-of-equivalence}~(7) 
that
$\rho(s[~]_{\seq{p}}) =  t[~]_{\seq{p}}$ for some renaming
$\rho: (\Var(s)\setminus X) \to (\Var(t)\setminus Y)$.
Let $s = s[\seq{s}]_{\seq{p}}$.
Then, because $\CTerm{X}{s}{\vec{x}}{\varphi}$ is pattern-general,
$\seq{s}$ consists of mutually distinct variables.
Thus, one can take a substitution $\delta: \SET{\seq{s}} \to \SET{ t|_{p_i} \mid 1 \le i \le n }$
such that $\delta(s_i) = t|_{p_i}$ ($1 \le i \le n$).
Moreover, since  $\seq{s} \in X$,
$\SET{\seq{s}} \cap (\Var(s)\setminus X) = \varnothing$.
Thus, we can take a substitution $\rho \cup \gamma$.
Then, $s(\rho \cup \gamma) 
= s[s_1,\ldots,s_n](\rho \cup \gamma) 
= s\rho[s_1\gamma,\ldots,s_n\gamma]
= t[t_1,\ldots,t_n] = t$.
Therefore, we conclude $\CTerm{X}{s}{\vec{x}}{\varphi}$ is most general
in the set $\SET{t \mid \CTerm{X}{s}{\vec{x}}{\varphi} \sim \CTerm{Y}{t}{\vec{y}}{\psi}}$.
\qed
\end{proof}

\TheoremIIIxxviii*
\begin{proof}
The \textit{if} part follows from 
\Cref{lem:equivalence by renaming}.
We proceed to prove the \textit{only-if} part.
Suppose $\CTerm{X}{s}{\vec{x}}{\varphi}$ and $\CTerm{Y}{t}{\vec{y}}{\psi}$ are pattern-general
and $\CTerm{X}{s}{\vec{x}}{\varphi} \sim \CTerm{Y}{t}{\vec{y}}{\psi}$.
Then, no value appears in $s$ or $t$ and hence $\Pos_{\Val}(s) = \Pos_{\Val}(t) = \varnothing$.
Thus, by \Cref{lem:basic-derived-properties-of-equivalence} all of the following statements hold:
\begin{itemize}
    \item $\Pos(s) = \Pos(t)$,
    \item $\Pos_X(s) = \Pos_Y(t)$,
    \item $\Pos_{\xV\setminus X}(s) = \Pos_{\xV\setminus Y}(t)$,
    \item $\Pos_{\xF}(s) = \Pos_{\xF\setminus \Val}(s) = \Pos_{\xF \setminus \Val}(t) = \Pos_{\xF}(t)$,
    \item $s|_p = t|_p$ for any position $p\in \Pos_{\xF \setminus \Val}(s)$ ($= \Pos_{\xF}(s) = \Pos_{\xF}(t)$),
        and
    \item there exists a renaming $\theta\colon \Var(s) \setminus X \to \Var(t) \setminus Y$ such that 
    $\theta(s[\,]_{p_1,\ldots,p_n}) = t[\,]_{p_1,\ldots,p_n}$.
\end{itemize}
Let $\SET{p_1,\ldots,p_n} = \Pos_{X}(s) =\Pos_{Y}(t)$,
and $s|_{p_i} = z_i$ and $t|_{p_i} = w_i$ for $1 \leqslant i \leqslant n$.
As $\CTerm{X}{s}{\vec{x}}{\varphi}$ and $\CTerm{Y}{t}{\vec{y}}{\psi}$ are pattern-general,
$z_1,\ldots,z_n$ are pairwise distinct variables, and so are $w_1,\ldots,w_n$.
Let $\rho = \theta \cup \SET{ z_i \mapsto w_i \mid 1 \leq i \leq n}$.
Since $\theta$ is a renaming,
$\SET{z_1,\ldots,z_n} \cap (\Var(s) \setminus X) = \varnothing$, and
$\SET{w_1,\ldots,w_n} \cap (\Var(t) \setminus Y) = \varnothing$,
we know $\rho$ is well-defined and renaming.
Furthermore, we have $s\rho = (s[z_1,\ldots,z_n]_{p_1,\ldots,p_n})\rho 
= s\rho[z_1\rho,\ldots,z_n\rho]_{p_1,\ldots,p_n} 
= s\theta[w_1,\ldots,w_n]_{p_1,\ldots,p_n} 
= t[w_1,\ldots,w_n]_{p_1,\ldots,p_n} = t$.
The rest of the statements follow 
from this by \Cref{thm: characterization of equivalence using renaming}.
\qed
\end{proof}

\subsection{Proofs of \Cref{sec:general-characterization-of-equivalence}}

\LemPropertiesOfEquivalenceOverXvarValPositions*
\begin{proof}
\begin{enumerate}
\renewcommand{\labelenumi}{\Bfnum{\arabic{enumi}}}
    \item 
    By definition, it is clear that $\sim_{\Pos_{X\cup \Val}(s)}$ is reflexive and symmetric.
    We show the transitivity of $\sim_{\Pos_{X\cup \Val}(s)}$.
    Assume that $p_1 \sim_{\Pos_{X\cup \Val}(s)} p_2$ and $p_2 \sim_{\Pos_{X\cup \Val}(s)} p_3$.
    Then, by definition, we have that 
    $\vDash_\xM ((\ECO{\vec{x}}{\varphi}) \Rightarrow s|_{p_1} = s|_{p_2})$
    and
    $\vDash_\xM ((\ECO{\vec{x}}{\varphi}) \Rightarrow s|_{p_2} = s|_{p_3})$ and hence
    $\vDash_\xM ((\ECO{\vec{x}}{\varphi}) \Rightarrow s|_{p_1} = s|_{p_3})$.
    Therefore, we have that $p_1 \sim_{\Pos_{X\cup \Val}(s)} p_3$ and thus $\sim_{\Pos_{X\cup \Val}(s)}$ is transitive.
    
    \item 
    Let $p,q \in \Pos_{X\cup \Val}(s)$ such that $s|_p = s|_q$.
    Then, $\vDash_\xM (s|_p = s|_q)$ and hence 
    $\vDash_\xM ((\ECO{\vec{x}}{\varphi}) \Rightarrow s|_p = s|_q)$.
    Therefore, we have that $p \sim_{\Pos_{X\cup \Val}(s)} q$.
\qed\end{enumerate}
\end{proof}

\LemPropertiesOfPosExcl*
\begin{proof}
\begin{enumerate}
\renewcommand{\labelenumi}{\Bfnum{\arabic{enumi}}}
    \item 
    By definition, it is clear that 
    there exists at least one value $v$ such that 
    $\vDash_\xM ((\ECO{\vec{x}}{\varphi}) \Rightarrow s|_p = v)$.
    Assume 
    $\vDash_\xM ((\ECO{\vec{x}}{\varphi}) \Rightarrow s|_p = v')$.  
    Since $\vDash_\xM \ECO{\vec{x}}{\varphi}$ is satisfiable, 
    $\vDash_{\xM,\rho} \ECO{\vec{x}}{\varphi}$ for some valuation $\rho$.
    Then, we have $\vDash_{\rho,\xM}  s|_p = v$
    and $\vDash_{\rho,\xM}  s|_p = v'$.
    Hence, $v = \rho(s|_p) = v'$.

    \item 
    Let $p,q \in \Pos_{X \cup \Val}(s)$ such that $p \in \Pos_{\Val!}(s)$ and $p \sim q$.
    Then, from $p \in \Pos_{\Val!}(s)$, 
    there exists a value $v \in \Val$ such that 
    $\vDash_\xM ((\ECO{\vec{x}}{\varphi}) \Rightarrow s|_{p} = v)$, 
    and
    from $p \sim q$, 
    we have $\vDash_\xM ((\ECO{\vec{x}}{\varphi}) \Rightarrow s|_p = s|_q)$.
    Thus, we have that $\vDash_\xM ((\ECO{\vec{x}}{\varphi}) \Rightarrow s|_q = v)$.
    Therefore, we obtain $q \in \Pos_{\Val!}(s)$.

    \item 
    Let $p,q \in \Pos_{X\cup\Val}(s)$ and $v,v' \in \Val$ such that
    $p \sim q$, $\vDash_\xM ((\ECO{\vec{x}}{\varphi}) \Rightarrow s|_p = v)$, 
    and $\vDash_\xM ((\ECO{\vec{x}}{\varphi}) \Rightarrow s|_q = v')$.
    Then, 
    from $p \sim q$
    we have that $\vDash_\xM ((\ECO{\vec{x}}{\varphi}) \Rightarrow s|_p = s|_q)$. 
    Hence,  
    $\xM \vDash ((\ECO{\vec{x}}{\varphi}) \Rightarrow v = v')$.
    Therefore, we obtain $v = v'$.
\qed\end{enumerate}
\end{proof}

\LemInducedEqualitiesOfExpliciitSubstitutions*
\begin{proof}
Let $z \in X$.
Then, Since $X \subseteq \Var(s)$, there exists a position $p \in \Pos_X(s)$ such that $s|_p = z$.
We make a case analysis depending on the existence of a position $p \in \Pos_{\Val!}(s)$ such that $s|_p = z$.
\begin{itemize}
    \item Assume that there exists a position $p \in \Pos_{\Val!}(s)$ such that $s|_p=z$.
    Then, by definition, we have that $\mu_{X}(z) = \Val!(p)$.
    Since $p \in \Pos_{\Val!}(s)$ and $s|_p=z$, by definition, 
    we have that $\vDash_\xM (\ECO{\vec{x}}{\varphi}) \Rightarrow (z=\Val!(p))$
    and hence $\vDash_\xM (\ECO{\vec{x}}{\varphi}) \Rightarrow (z = \mu_{X}(z))$.
    \item Assume that there exists no position $p \in \Pos_{\Val!}(s)$ such that $s|_p=z$.
    Then, by definition, we have that $\mu_{X}((z) = s|_{\hat{p}}$.
    Since $\hat{p}$ is a representative of $[p]_\sim$, we have that $p \sim \hat{p}$.
    It follows from \Cref{lem:properties-of-PosValExcl} that $\vDash_\xM (\ECO{\vec{x}}{\varphi}) \Rightarrow (s|_p = s|_{\hat{p}})$
    and hence
     $\vDash_\xM (\ECO{\vec{x}}{\varphi}) \Rightarrow (z = \mu_{X}(z))$.
\qed\end{itemize}
\end{proof}

\LemInducedPropertisOfEuivalence*
\begin{proof}
Let $\PG(\CTerm{X}{s}{\vec{x}}{\varphi}) = (\CTerm{\SET{w_1,\ldots,w_n}}{s[w_1,\ldots,w_n]_{p_1,\ldots,p_n}}{\pvec{x}'}{\varphi'})$
and
$\PG(\CTerm{Y}{t}{\vec{y}}{\psi}) = (\CTerm{\SET{w'_1,\ldots,w'_n}}{t[w'_1,\ldots,w'_n]_{p_1,\ldots,p_n}}{\pvec{y}'}{\psi'})$, 
where
$\SET{\pvec{x}'} = \SET{\vec{x}}\cup X$,
$\varphi' = (\varphi \land \bigwedge_{i=1}^n (w_i = s|_{p_i}))$,
$\SET{\pvec{y}'} = \SET{\vec{y}}\cup Y$,
and
$\psi' = (\psi \land \bigwedge_{i=1}^n (w'_i = t|_{p_i}))$.
Then, by our assumption and \Cref{thm:correctness-of-PG}, we have
\[
\begin{array}{cl}
&(\CTerm{\SET{w_1,\ldots,w_n}}{s[w_1,\ldots,w_n]_{p_1,\ldots,p_n}}{\pvec{x}'}{\varphi'})\\
\sim&
(\CTerm{\SET{w'_1,\ldots,w'_n}}{t[w'_1,\ldots,w'_n]_{p_1,\ldots,p_n}}{\pvec{y}'}{\psi'}). 
\end{array}
\]
Hence, by \Cref{thm:complete characterization of equivalence for most sim-general constrained terms}, 
there is a renaming $\delta$ such that 
$s[w_1,\ldots,w_n]\delta = t[w_1',\ldots,w_n']$
and $\vDash_\xM (\ECO{\pvec{x}'}{\varphi'})\delta \Leftrightarrow \ECO{\pvec{y}'}{\psi'}$.
Thus, $\delta(w_i) = w_i'$. 
Then, using \Cref{lem:X or val subterms equality from constraints}~\Bfnum{1},
\[
\begin{array}{lcl}
 \vDash_\xM (\ECO{\vec{x}}{\varphi}) \Rightarrow (s|_{p_i} = s|_{p_j})
&\iff& \vDash_\xM (\ECO{\pvec{x}'}{\varphi'}) \Rightarrow (w_i = w_j)\\
&\iff& \vDash_\xM (\ECO{\pvec{x}'}{\varphi'})\delta \Rightarrow (w_i\delta = w_j\delta) \\
&\iff& \vDash_\xM (\ECO{\pvec{y}'}{\psi'}) \Rightarrow (w_i' = w_j')\\
&\iff& \vDash_\xM (\ECO{\pvec{y}}{\psi}) \Rightarrow (t|_{p_i} = t|_{p_j})\\
\end{array}
\]
This proves~\Bfnum{1}. \Bfnum{2} follows similarly, by
using \Cref{lem:X or val subterms equality from constraints}~\Bfnum{2}.
In order to show~\Bfnum{3}, take $\theta = \SET{ \langle s|_{p_i}, t|_{p_i} \rangle \mid 1 \le i \le n }$.
Then, by~\Bfnum{1},\,\Bfnum{2}, and our assumption, we have
\begin{itemize}
\item $\mu_X(s|_{p_i}) \in \Val$ if and only if  $\mu_Y(t|_{p_i}) \in \Val$, and moreover, 
\item $s|_{p_i} \in \tilde{X}$ if and only if $p_i$ is a representative of an equivalence class $[p_i]$
if and only if $t|_{p_i} \in \tilde{Y}$.
\end{itemize}
Thus, 
$\{ \langle s|_{p_i}, t|_{p_i} \rangle \mid 1 \le i \le n$, 
$p_i$ is a representative of an equivalence class $[p_i]$
such that $\mu_X(s|_{p_i}) \notin \Val \} = \theta|_{\tilde{X}}$
is a renaming from $\tilde{X}$ to $\tilde{Y}$.
Take an arbitrary but fixed function $\theta': X \to Y$
such that $\theta'(s|_{p_i}) \in \SET{ t|_{p_j} \mid p_i \sim p_j }$.
Then (i) $\theta|_{\tilde{X}}(\mu_X(s|_{p_i})) = \theta|_{\tilde{X}}(s|_{\hat{p_i}}))) 
= t|_{\hat{p_i}} = \mu_Y(t|_{p_j}) = \mu_Y(\theta'(s|_{p_i}))$.
Let $\sigma = \SET{ w_1 \mapsto s|_{p_1},\ldots, w_n \mapsto s|_{p_n}}$
and
$\sigma' = \SET{ w'_1 \mapsto t|_{p_1},\ldots, w'_n \mapsto t|_{p_n}}$.
Then, we have
(ii) $\mu_Y(\theta'(\sigma(w_i)))
= \mu_Y(\theta'(s|_{p_i}))
= \mu_Y(t|_{p_j})
= t|_{\hat{p_i}}
= \mu_Y(t|_{p_i})
= \mu_Y(\sigma'(w_i'))
= \mu_Y(\sigma'(\delta(w_i)))$.
Thus, for any valuation $\rho$,
\[
\begin{array}{lcl@{\qquad}l}
 \vDash_{\xM,\rho} (\ECO{\vec{x}}{\varphi})\mu_X\theta|_{\tilde{X}}
&\iff& \vDash_{\xM,\rho} (\ECO{\pvec{x}'}{\varphi'})\sigma\mu_X\theta|_{\tilde{X}}
& \mbox{by \Cref{lem:induced-validity-of-representative-substitutions}}\\
&\iff& \vDash_{\xM,\rho} (\ECO{\pvec{x}'}{\varphi'})\sigma\theta'\mu_Y
& \mbox{by (i)}\\
&\iff& \vDash_{\xM,\rho}(\ECO{\pvec{x}'}{\varphi'})\delta\sigma'\mu_Y
& \mbox{by (ii)}\\
&\iff& \vDash_{\xM,\rho} (\ECO{\pvec{y}'}{\psi'})\sigma'\mu_Y
& \mbox{by \Cref{lem:induced-validity-of-representative-substitutions}}\\
&\iff& \vDash_{\xM,\rho} (\ECO{\pvec{y}}{\psi}) \mu_Y
\end{array}
\]
Thus, 
$\vDash_{\xM} (\ECO{\vec{x}}{\varphi})\mu_X\theta|_{\tilde{X}} \Leftrightarrow (\ECO{\pvec{y}}{\psi}) \mu_Y$.
\qed
\end{proof}

\ThmCompleteCharacterizatinOfEquivalenceOfConstrainedTerms*
\begin{proof}
The \textit{only-if} part follows \Cref{lem:basic-derived-properties-of-equivalence} 
and \Cref{lem:induced-propertis-of-equivalence}.
We show the \textit{if} part.
Let $\PG(\CTerm{X}{s}{\vec{x}}{\varphi}) = (\CTerm{X'}{s'}{\pvec{x}'}{\varphi'})$
and
$\PG(\CTerm{Y}{t}{\vec{y}}{\psi}) = (\CTerm{Y'}{t'}{\pvec{y}'}{\psi'})$,
where
$X' = \SET{ w_1,\ldots,w_n }$,
$s' = s[w_1,\ldots,w_n]_{\seq{p}}$, 
$\SET{\pvec{x}'} = \SET{\vec{x}}\cup X$,
$\varphi' = (\varphi \land \bigwedge_{i=1}^n w_i = s|_{p_i})$,
$Y' = \SET{ w_1',\ldots,w_n' }$,
$t' = s[w_1',\ldots,w_n']_{\seq{p}}$, 
$\SET{\pvec{y}'} = \SET{\vec{y}} \cup Y$, and
$\psi' = (\psi \land \bigwedge_{i=1}^n w'_i = t|_{p_i})$.
Then, by \Cref{thm:correctness-of-PG}, we have
$\CTerm{X}{s}{\vec{x}}{\varphi} \sim \CTerm{X'}{s'}{\pvec{x}'}{\varphi'}$, and
$\CTerm{Y}{t}{\vec{y}}{\psi} \sim \CTerm{Y'}{t'}{\pvec{y}'}{\psi'}$.
Thus, it suffices to show that
$\CTerm{X'}{s'}{\pvec{x}'}{\varphi'} \sim \CTerm{Y'}{t'}{\pvec{y}'}{\psi'}$.
Since both of these constrained terms are pattern-general,
we show this claim using 
\Cref{thm:complete characterization of equivalence for most sim-general constrained terms}.

By our assumptions~\Bfnum{1},\,\Bfnum{3},
and applying \Cref{lem:induced-propertis-of-equivalence}
to $\CTerm{X}{s}{\vec{x}}{\varphi} \sim \CTerm{X'}{s'}{\pvec{x}'}{\varphi'}$
and $\CTerm{Y}{t}{\vec{y}}{\psi} \sim \CTerm{Y'}{t'}{\pvec{y}'}{\psi'}$,
we know
${\sim}_{X \cup \Var(s)} = {\sim}_{X' \cup \Var(s')} = 
{\sim}_{Y' \cup \Var(t')} = {\sim}_{Y \cup \Var(t)}$.
Similarly, by~\Bfnum{4} and \Cref{lem:induced-propertis-of-equivalence}, 
one can take the representative substitutions $\mu_{X'},\mu_{Y'}$ of
$\CTerm{X'}{s'}{\pvec{x}'}{\varphi'}$ and $\CTerm{Y'}{t'}{\pvec{y'}}{\psi'}$, respectively, 
based on the same representative for
each equivalence class $[p_i]_\sim$ ($1 \le i \le n$).

For using \Cref{thm:complete characterization of equivalence for most sim-general constrained terms},
take 
$\xi = \SET{ w_i \mapsto w_i' \mid 1 \le i \le n }$,
and $\nu = \rho \cup \xi$.
Then, clearly, we have
$s'\nu = s[w_1,\ldots,w_n]\nu = s\rho[w_1',\ldots,w_n'] = t[w_1',\ldots,w_n'] = t'$,
and $\nu(X') = Y'$.
It remains to show 
$\vDash_\xM (\ECO{\pvec{x}'}{\varphi'})\nu \Leftrightarrow (\ECO{\pvec{y}'}{\psi'})$.

From our assumption, 
we know
$\vDash_{\xM} (\ECO{\vec{x}}{\varphi})\mu_X\theta|_{\tilde{X}}
\Leftrightarrow (\ECO{\vec{y}}{\varphi})\mu_Y$,
where $\theta = \SET{ \langle s|_{p_i}, t|_{p_i} \rangle \mid 1 \le i \le n }$.
Again, applying \Cref{lem:induced-propertis-of-equivalence}
to $\CTerm{X}{s}{\vec{x}}{\varphi} \sim \CTerm{X'}{s'}{\pvec{x}'}{\varphi'}$
and $\CTerm{Y}{t}{\vec{y}}{\psi} \sim \CTerm{Y'}{t'}{\pvec{y}'}{\psi'}$,
we also have 
$\vDash_{\xM} (\ECO{\vec{x}}{\varphi})\mu_X\sigma|_{\tilde{X}}
\Leftrightarrow (\ECO{\pvec{x}'}{\varphi'})\mu_{X'}$
and
$\vDash_{\xM} (\ECO{\vec{y}}{\psi})\mu_Y\sigma'|_{\tilde{Y}}
\Leftrightarrow (\ECO{\pvec{y}'}{\psi'})\mu_{Y'}$,
where $\sigma = \SET{ \langle s|_{p_i}, w_i \rangle \mid 1 \le i \le n }$
and $\sigma' = \SET{  \langle t|_{p_i}, w_i' \rangle  \mid 1 \le i \le n }$.
From these, it easily follows that 
$\vDash_{\xM,\gamma} (\ECO{\pvec{x}'}{\varphi'})\mu_{X'}\nu
\Leftrightarrow (\ECO{\pvec{y}'}{\varphi'})\mu_{Y'}$.
Also, by \Cref{lem:induced-equalities-of-expliciit-substitutions}, we have
$\vDash_{\xM} \varphi \Rightarrow (\bigwedge_{i=1}^n (s|_{p_i} = s|_{\hat{p_i}}))$
and 
$\vDash_{\xM} \psi \Rightarrow (\bigwedge_{i=1}^n (t|_{p_i} = t|_{\hat{p_i}}))$,
as $\mu_X(s|_{p_i}) = s|_{\hat{p_i}}$ and $\mu_Y(t|_{p_i}) = t|_{\hat{p_i}}$.
Thus, for any valuation $\gamma$,
\[
\begin{array}{cl}
& \vDash_{\xM,\gamma} (\ECO{\pvec{x}'}{\varphi'})\nu\\
\iff&
\vDash_{\xM,\gamma} (\ECO{\pvec{x}'}{\varphi 
\land (\bigwedge_{i=1}^n (w_i = s|_{p_i}))})\nu\\
\iff&
\vDash_{\xM,\gamma} (\ECO{\pvec{x}'}{\varphi
\land (\bigwedge_{i=1}^n (w_i = s|_{p_i}))
\land (\bigwedge_{i=1}^n (s|_{p_i} = s|_{\hat{p_i}}))})\nu\\
\iff&
\vDash_{\xM,\gamma} (\ECO{\pvec{x}'}{\varphi
\land (\bigwedge_{i=1}^n (\hat{w_i} = s|_{p_i}))})\nu\\
= &
\vDash_{\xM,\gamma} (\ECO{\pvec{x}'}{\varphi'})\mu_{X'}\nu\\
\iff&
\vDash_{\xM,\gamma} (\ECO{\pvec{y}'}{\psi'})\mu_{Y'}\\
\iff&
\vDash_{\xM,\gamma} \ECO{\pvec{y}'}{\psi
\land (\bigwedge_{i=1}^n (\hat{w_i'} = t|_{p_i}))}\\
\iff&
\vDash_{\xM,\gamma} \ECO{\pvec{y}'}{\psi
\land (\bigwedge_{i=1}^n (\hat{w_i'} = t|_{p_i}))
\land (\bigwedge_{i=1}^n (t|_{p_i} = t|_{\hat{p_i}}))}\\
\iff&
\vDash_{\xM,\gamma} \ECO{\pvec{y}'}{\psi
\land (\bigwedge_{i=1}^n (w_i' = t|_{p_i}))}\\
=&
\vDash_{\xM,\gamma} \ECO{\pvec{y}'}{\varphi'}
\end{array}
\]
Hence $\vDash_\xM (\ECO{\pvec{x}'}{\varphi'})\nu \Leftrightarrow (\ECO{\pvec{y}'}{\psi'})$.
Thus, 
from \Cref{thm:complete characterization of equivalence for most sim-general constrained terms}
we conclude $\CTerm{X'}{s'}{\pvec{x}'}{\varphi'} \sim \CTerm{Y'}{t'}{\pvec{y}'}{\psi'}$.
Hence,
$\CTerm{X}{s}{\vec{x}}{\varphi} \sim \CTerm{Y}{t}{\vec{y}}{\psi}$ follows.
\qed
\end{proof}

\end{document}